\documentclass[11pt]{article}

\usepackage{enumitem}

\def\colorful{0}
\def\confversion{0}

\oddsidemargin=-0.1in \evensidemargin=-0.1in \topmargin=-.5in
\textheight=9in \textwidth=6.5in
\parindent=18pt

\usepackage{amsthm,amsfonts,amsmath,amssymb,epsfig,color,float,graphicx,verbatim}
\usepackage{multirow}

\newif\ifhyper\IfFileExists{hyperref.sty}{\hypertrue}{\hyperfalse}
\hypertrue
\ifhyper\usepackage{hyperref}\fi

\usepackage{enumitem}
\usepackage[capitalize]{cleveref} 

\makeatletter
\renewcommand{\section}{\@startsection{section}{1}{0pt}{-12pt}{5pt}{\large\bf}}
\renewcommand{\subsection}{\@startsection{subsection}{2}{0pt}{-12pt}{-5pt}{\normalsize\bf}}
\renewcommand{\subsubsection}{\@startsection{subsubsection}{3}{0pt}{-12pt}{-5pt}{\normalsize\bf}}
\makeatother

\usepackage{framed}
\usepackage{nicefrac}

\def\nnewcolor{1}
\ifnum\nnewcolor=1

\fi
\ifnum\nnewcolor=0

\fi

\ifnum\colorful=1

\else

\fi

\newtheorem{theorem}{Theorem}[section]

\newtheorem{lemma}[theorem]{Lemma}
\newtheorem{proposition}[theorem]{Proposition}

\newtheorem{claim}[theorem]{Claim}
\newtheorem{fact}[theorem]{Fact}

\newtheorem{remark}[theorem]{Remark}

\theoremstyle{definition}
\newtheorem{definition}[theorem]{Definition}

\newcommand{\E}{\mathbb{E}}

\newcommand{\poly}{\mathrm{poly}}
\newcommand{\polylog}{\mathrm{polylog}}


\newcommand{\ignore}[1]{}

\newcommand{\eps}{\epsilon}




\renewcommand{\eqref}[1]{Eq.~(\ref{#1})}

\newcommand{\eqdef}{\stackrel{{\mathrm {\footnotesize def}}}{=}}

\newcommand{\littlesum}{\mathop{\textstyle \sum}}

\newenvironment{algorithm}[1][\  ] %
{ \rm
\begin{tabbing}
....\=.....\=.....\=.....\=.....\=  \+ \kill
} %
{\end{tabbing} }

\floatstyle{ruled}
\newfloat{Algorithm}{H}{alg}
\newfloat{Subroutine}{H}{sub}

{
\begin{minipage}{1.0\linewidth} \begin{algorithm} %
} { \end{algorithm} \end{minipage} }

\title{A New Approach for Testing Properties of Discrete Distributions}

\author{
Ilias Diakonikolas\thanks{Part of this work was performed while the author was at the University of Edinburgh.
Research supported by EPSRC grant EP/L021749/1, a Marie Curie Career Integration Grant, and a SICSA grant.}\\
University of Southern California\\
{\tt diakonik@usc.edu}.\\
\and
Daniel M. Kane\\
University of California, San Diego\\
{\tt dakane@cs.ucsd.edu}.
}

\begin{document}

\maketitle

\begin{abstract}
We study problems in distribution property testing:
Given sample access to one or more unknown discrete distributions,
we want to determine whether they have some global property or are $\eps$-far
from having the property in $\ell_1$ distance (equivalently, total variation distance, or ``statistical distance'').
In this work, we give a novel general approach for distribution testing.
We describe two techniques: our first technique gives sample--optimal testers,
while our second technique gives matching sample lower bounds.
As a consequence, we resolve the sample complexity of a wide variety of testing problems.

Our upper bounds are obtained via a modular reduction-based approach.
Our approach yields optimal testers for numerous problems
by using a standard $\ell_2$-identity tester as a black-box.
Using this recipe, we obtain simple estimators for
a wide range of problems, encompassing most problems previously studied in the TCS 
literature, namely: 
(1) identity testing to a fixed distribution, (2) closeness testing between two unknown distributions (with equal/unequal sample sizes),
(3) independence testing (in any number of dimensions), (4) closeness testing for collections of distributions, and 
(5) testing histograms. For all of these problems, our testers are sample-optimal, up to constant factors.
With the exception of (1), ours are the {\em first sample-optimal testers for the corresponding problems.}
Moreover, our estimators are significantly simpler to state and analyze compared to previous results.

As an important application of our reduction-based technique, 
we obtain the first {\em nearly instance-optimal} algorithm for testing equivalence between
two {\em unknown} distributions.  The sample complexity of our algorithm
depends on the {\em structure  of the unknown distributions} -- as opposed to merely their domain size -- 
and is much better compared to the worst-case optimal $\ell_1$-tester in most natural instances.
Moreover, our technique naturally generalizes to other metrics beyond the $\ell_1$-distance.
As an illustration of its flexibility, we use it to obtain the first near-optimal equivalence tester 
under the Hellinger distance.

\smallskip

Our lower bounds are obtained via a direct information-theoretic approach: 
Given a  candidate hard instance, our proof proceeds by bounding
the mutual information between appropriate random variables.
While this is a classical method in information theory, prior to our work,
it had not been used in distribution property testing.
Previous lower bounds relied either on the birthday paradox, or
on moment-matching and were thus restricted to symmetric properties.
Our lower bound approach does not suffer from any such restrictions 
and gives tight sample lower bounds for the aforementioned problems.
\end{abstract}

\thispagestyle{empty}
\setcounter{page}{0}

\newpage

\section{Introduction}  \label{sec:intro}

\subsection{Background}

The problem of determining whether an unknown object fits a model
based on observed data is of fundamental scientific importance.
We study the following formalization of this problem:
Given samples from a collection of probability distributions, can we
determine whether the distributions in question satisfy a certain property?
This is the prototypical question in {\em statistical hypothesis testing}~\cite{NeymanP, lehmann2005testing}.
During the past two decades, this question has received considerable attention by the TCS community
in the framework of {\em property testing}~\cite{RS96, GGR98}, with a focus on discrete probability distributions.

The area of distribution property testing~\cite{BFR+:00, Batu13}
has developed into a mature research field with connections to information theory, learning and statistics.
The generic inference problem in this field is the following: given sample access to one or more unknown distributions,
determine whether they have some global property or are ``far''
(in statistical distance or, equivalently, $\ell_1$ norm) from having the property.
The goal is to obtain statistically and computationally efficient testing algorithms,
i.e., algorithms that use the information-theoretically
minimum sample size and run in polynomial time.
See~\cite{GR00, BFR+:00, BFFKRW:01, Batu01, BDKR:02, BKR:04,  Paninski:08, PV11sicomp,
DDSVV13, DJOP11, LRR11, ILR12, CDVV14, VV14, DKN:15, DKN:15:FOCS, ADK15, CDGR16} for a sample of works
and~\cite{Rub12, Canonne15} for two recent surveys.


In this work, we give a new general approach for distribution testing.
We describe two novel techniques: our first technique yields sample--optimal testers,
while our second technique gives matching sample lower bounds.
As a consequence, we
resolve the sample complexity of a wide variety of testing problems.

All our upper bounds are obtained via a collection of modular {\em reductions}. 
Our reduction-based method provides a simple recipe to obtain {\em optimal} testers under the $\ell_1$-norm
(and other metrics), by 
applying a randomized transformation to a basic $\ell_2$-identity tester.
While the $\ell_2$-norm has been used before as a tool in distribution testing~\cite{BFR+:00},
our reduction-based approach is conceptually and technically different than previous approaches. 
We elaborate on this point in Section~\ref{ssec:overview}.
We use our reduction-based approach to resolve a number of open problems in the literature
(see Section~\ref{ssec:results}). In addition to pinning--down the
sample complexity of a wide range of problems, a key contribution of our algorithmic approach is methodological.
{\em In particular, the main conceptual message is that one does not need an inherently different statistic 
for each testing problem.}
In contrast, all our testing algorithms follow the same pattern: They
are obtained by applying a simple transformation to a basic statistic -- one that tests
the identity between two distributions in $\ell_2$-norm -- in a black-box manner.
Following this scheme, we obtain the first sample-optimal testers for many properties. 
Importantly, our testers are simple and, in most cases, their analysis fits in a paragraph.

{
As our second main contribution, we provide a direct, elementary approach to prove sample complexity
lower bounds for distribution testing problems. Given a  candidate hard instance, our proof
proceeds by bounding 
the mutual information between appropriate random variables.
Our analysis leads to new, optimal lower bounds for several problems,
including testing closeness (under various metrics), 
testing independence (in any dimension), and testing histograms.
Notably, proving sample complexity lower bounds by bounding the mutual information
is a classical approach in information theory. Perhaps surprisingly, prior to our work,
this method had not been used in distribution testing.
Previous techniques were either based on the birthday paradox or
on moment-matching~\cite{RRSS09, PV11sicomp}, and were thus restricted to testing symmetric properties.
Our technique circumvents the moment-matching approach,
and is not restricted to symmetric properties.}

\subsection{Notation}

We write $[n]$ to denote the set $\{1, \ldots, n\}$. 
We consider discrete distributions over $[n]$, which are functions
$p: [n] \rightarrow [0,1]$ such that $\sum_{i=1}^n p_i =1.$ 
We use the notation $p_i$ to denote the probability of element
$i$ in distribution $p$. 

{A pseudo-distribution over a finite set $S$ is any function $p: S \rightarrow [0,1]$.
For a distribution $p: [n] \rightarrow [0,1]$ and a set $S \subseteq [n]$, 
we will denote by $(p|S)$ the conditional distribution on $S$ and by $p[S]$ the pseudo-distribution
obtained by restricting $p$ on the set $S$.}

The $\ell_1$ (resp. $\ell_2$) norm of a (pseudo-)distribution is identified with the $\ell_1$ (resp. $\ell_2$) norm of the corresponding vector, i.e.,
$\|p\|_1 = \sum_{i=1}^n |p_i|$ and $\|p\|_2 = \sqrt{\sum_{i=1}^n p^2_i}$. The $\ell_1$ (resp. $\ell_2$) distance between (pseudo-)distributions 
$p$ and $q$ is defined as the  the $\ell_1$ (resp. $\ell_2$) norm of the vector of their difference, i.e., $\|p-q\|_1 = \sum_{i=1}^n |p_i -q_i|$ and
$\|p-q\|_2 = \sqrt{\sum_{i=1}^n (p_i-q_i)^2}$.

\subsection{Our Contributions} \label{ssec:results}

The main contribution of this paper is a reduction--based framework to obtain testing algorithms,
and a direct approach to prove lower bounds.
We do not aim to exhaustively cover all possible applications of our techniques, but rather
to give some selected results that are indicative of the generality and power of our methods.
More specifically, we obtain the following results:
\begin{enumerate}[leftmargin=0cm,itemindent=.5cm, labelwidth=\itemindent, labelsep=0cm, align=left]
\item We give an alternative optimal $\ell_1$-identity tester against a fixed distribution, with sample complexity $O(\sqrt{n}/\eps^2)$,
matching the recently obtained tight bound~\cite{VV14, DKN:15}.
The main advantage of our tester is its simplicity: Our reduction and its analysis are remarkably
short and simple in this case. Our tester straightforwardly implies
the ``$\chi^2$ versus $\ell_1$'' guarantee recently used as the main statistical test in~\cite{ADK15}.

\item We design an optimal tester for $\ell_1$-closeness between two unknown distributions
in the standard and the (more general) unequal-sized sample regimes.
For the standard regime (i.e., when we draw the same number of samples from each distribution),
we recover the tight sample complexity of $O(\max(n^{2/3}/\eps^{4/3},n^{1/2}/\eps^2)),$ matching~\cite{CDVV14}.
Importantly, our tester straightforwardly extends to unequal-sized samples, giving the first optimal
tester in this setting.
Closeness testing with unequal sized samples was 
considered in~\cite{AcharyaJOS14c} that gives sample upper and lower bounds with a polynomial gap between them.
Our tester uses $m_1 {= \Omega(\max(n^{2/3}/\eps^{4/3},n^{1/2}/\eps^2))}$ samples from one distribution
and $m_2=O(\max(nm_1^{-1/2}/\eps^2,\sqrt{n}/\eps^2))$ from the other.
This tradeoff is sample-optimal (up to a constant factor) for all settings, and improves on the recent work~\cite{BV15}
that obtains the same tradeoff under the additional assumption that $\eps > n^{-1/12}.$ In sharp contrast to~\cite{BV15},
our algorithm is extremely simple and its analysis fits in a few lines.

\item We study the problem of $\ell_1$-testing closeness between two {\em unknown} distributions in an {\em instance-optimal}
setting, where the goal is to design estimators whose sample complexity depends on the {\em (unknown) structure 
of the sampled distributions} -- as opposed to merely their domain size. We obtain
the first algorithm for this problem: Our tester 
uses $$\tilde O(\min_{m>0} ( m +\|q^{<1/m}\|_0 \cdot \|q^{<1/m}\|_2/\eps^2 +\|q\|_{2/3}/\eps^2))$$
samples from each of the distributions $p, q$ on $[n]$. 
Here, $q^{<1/m}$ denotes the pseudo-distribution obtained from $q$ by removing the 
domain elements with mass $\ge 1/m,$ and $\|q^{<1/m}\|_0$ is the number of elements with mass $< 1/m$.
(Observe that since $\|q^{<1/m}\|_2 \leq 1/\sqrt{m}$, 
taking $m=\min(n,n^{2/3}/\eps^{4/3})$ attains the complexity of the standard $\ell_1$-closeness testing 
algorithm to within logarithmic factors.) 

An important distinction between our algorithm and the instance-optimal identity testing algorithm of~\cite{VV14} 
is that  the sample complexity of the latter depends on the structure
of the {\em explicitly known} distribution, while ours depends on parameters of the {\em unknown} distributions.
Roughly speaking, the~\cite{VV14} algorithm knows {\em a priori} when to stop drawing samples, 
while ours  ``discovers'' the right sample complexity adaptively while obtaining samples.
As an illustration of our technique,  we give an alternative algorithm 
for the identity testing problem in~\cite{VV14} whose analysis fits in two short paragraphs. 
The sample complexity of our alternative algorithm is $\tilde O (\|q\|_{2/3}/ \eps^2)$, where $q$ is the explicit distribution,
matching the bound of~\cite{VV14} up to logarithmic factors.


\item We show that our framework easily generalizes 
to give near-optimal algorithms and lower bounds for other metrics as well, beyond the $\ell_1$-norm.
As an illustration of this fact, we describe an algorithm and a nearly-matching lower bound
for testing closeness under Hellinger distance, $H^2(p, q) = (1/2) \| \sqrt{p} - \sqrt{q}\|_2^2$, one 
of the most powerful $f$-divergences. This question has been studied before: 
\cite{Guha-div} gave a tester for this problem with sample complexity $\tilde O(n^{2/3}/\eps^{4}).$
The sample complexity of our algorithm is $\tilde O(\min(n^{2/3}/\eps^{4/3},n^{3/4}/\eps)),$ 
and we prove a lower bound of $\Omega(\min(n^{2/3}/\eps^{4/3},n^{3/4}/\eps)).$
{Note that the second term of $n^{3/4}/\eps$ in the sample complexity differs from the corresponding $\ell_1$ term of $n^{1/2}/\eps^2$.}


\item We obtain the first sample-optimal algorithm and matching lower bound for testing independence
over $\times_{i=1}^d [n_i].$ Prior to our work, the sample complexity of this problem remained open, even for the two-dimensional case.
We prove that the {\em optimal} sample complexity of independence testing (upper and lower bound) is
$\Theta(\max_j((\prod_{i=1}^d n_i)^{1/2}/\eps^2,n_j^{1/3}(\prod_{i=1}^d n_i)^{1/3}/\eps^{4/3})).$
Previous testers for independence were suboptimal up to polynomial factors in $n$ and $1/\eps$,
even for $d=2.$ Specifically, Batu {\em et al.}~\cite{BFFKRW:01} gave an independence tester over $[n] \times [m]$
with sample complexity $\widetilde{O}(n^{2/3} m^{1/3}) \cdot \poly(1/\eps)$, for $n \ge m.$
On the lower bound side, Levi, Ron, and Rubinfeld \cite{LRR11} showed a sample complexity lower bound of
$\Omega(\sqrt{n m})$ (for all $n \ge m$), and $\Omega(n^{2/3} m^{1/3})$ (for $n = \Omega(m \log m))$.
More recently, Acharya {\em et al.}~\cite{ADK15} gave an upper bound of $O(((\prod_{i=1}^d n_i)^{1/2}+\sum_{i=1}^d n_i)/\eps^2),$
which is optimal up to constant factors for the very special case that all the $n_i$'s are the same. In summary, we
resolve the sample complexity of this problem in any dimension $d$, up to a constant factor, as a function of all relevant parameters.

\item We obtain the first sample-optimal algorithms for testing equivalence for collections of distributions~\cite{LRR11}
in the sampling and the oracle model, improving on~\cite{LRR11} by polynomial factors.
In the sampling model, we observe that the problem is equivalent to (a variant of) two-dimensional independence testing.
In fact, in the unknown-weights case, the problem is identical.
In the known-weights case, the problem is equivalent to two-dimensional independence testing,
where the algorithm is given explicit access to one of the marginals (say, the marginal on $[m]$).
For this setting, we give a sample-optimal tester with sample size
$O(\max(\sqrt{nm}/\eps^2,n^{2/3}m^{1/3}/\eps^{4/3}))$\footnote{It should be noted that,
while this is the same form as the sample complexity for independence testing in two dimensions,
there is a crucial difference. In this setting, the parameter $m$ represents the support size of the marginal that is explicitly given to us,
rather than the marginal with smaller support size.}.
In the query model, we give a sample-optimal closeness tester for $m$ distributions
over $[n]$ with sample complexity $O(\max(\sqrt{n}/\eps^2,n^{2/3}/\eps^{4/3}))$.
This bound is independent of $m$ and matches the worst-case optimal bound for testing closeness
between two unknown distributions.

\item As a final application of our techniques, we study the problem of
testing whether a distribution belongs in a given ``structured family''~\cite{ADK15, CDGR16}.
We focus on the property of being a $k$-histogram over $[n],$
i.e., that the probability mass function is piecewise constant with at most $k$ {\em known} interval pieces.
This is a natural problem of particular interest in model selection.
For $k=1,$ the problem is tantamount to uniformity testing, while for $k = \Omega(n)$ it can be seen to be
equivalent to testing closeness between two unknown distributions over a domain of size $\Omega(n).$
We design a tester for the property of being a $k$-histogram (with respect to a given set of intervals) with sample complexity
$O(\max(\sqrt{n}/\eps^2,n^{1/3}k^{1/3}/\eps^{4/3}))$ samples. We also prove that this bound is
information-theoretically optimal, up to constant factors.
\end{enumerate}

\subsection{Prior Techniques and Overview of our Approach} \label{ssec:overview}
In this section, we provide a detailed intuitive explanation of our two techniques, in tandem
with a comparison to previous approaches. 
We start with our upper bound approach.
It is reasonable to expect that the $\ell_2$-norm is useful as a tool in distribution property testing.
Indeed, for elements with ``small'' probability mass, estimating second moments is a natural 
choice in the sublinear regime. 
Alas, a direct $\ell_2$-tester will often not work for the following reason:
The error coming from the ``heavy'' elements will force the estimator 
to draw too many samples. 

In their seminal paper, Batu {\em et al.}~\cite{BFR+:00, Batu13} gave an $\ell_2$-closeness tester and used it
to obtain an $\ell_1$-closeness tester. To circumvent the aforementioned issue, their $\ell_1$-tester 
has two stages: It first explicitly learns the pseudo-distribution supported
on the heavy elements, and then it applies the $\ell_2$-tester on the pseudo-distribution over the light elements.
This approach of combining learning (for the heavy elements) and 
$\ell_2$-closeness testing (for the light elements)  
is later refined by Chan {\em et al.}~\cite{CDVV14}, where it is shown that it
{\em inherently} leads to a suboptimal sample complexity for the testing closeness problem.
Motivated by this shortcoming, it was suggested in~\cite{CDVV14} that the use of the $\ell_2$-norm may be 
insufficient, and that a more direct approach may be needed to achieve sample-optimal $\ell_1$-testers.
This suggestion led researchers to consider different approaches to $\ell_1$-testing
(e.g., appropriately rescaled versions of the chi-squared test~\cite{Orlitsky:colt12, CDVV14, VV14, BV15, ADK15})
that, although shown optimal for a couple of cases, lead to somewhat ad-hoc
estimators that come with a highly-nontrivial analysis.

Our upper bound approach postulates that the inefficiency of~\cite{BFR+:00, Batu13} is due to
the explicit learning of the heavy elements and not to the use of the $\ell_2$-norm. Our approach 
provides a simple and general way to essentially remove this learning step. 
We achieve this via a collection of simple {\em reductions}: Starting from a given instance of an $\ell_1$-testing problem ${\cal A}$,
we construct a new instance of an appropriate $\ell_2$-testing problem ${\cal B}$, 
so that the answers to the two problems for these instances are identical. Here, problem  ${\cal A}$
can be {\em any} of the testing problems discussed in Section~\ref{ssec:results}, 
while problem ${\cal B}$ is always the same. Namely, we define ${\cal B}$ to be 
the problem of $\ell_2$-testing closeness between two unknown distributions,
under the promise that {\em at least one} of the distributions in question has small $\ell_2$-norm.
Our reductions have the property that a sample-optimal algorithm for problem ${\cal B}$
implies a sample-optimal algorithm for ${\cal A}$. An important conceptual consequence of our direct reduction-based approach 
is that problem ${\cal B}$ is of central importance in distribution testing, since a wide range of 
problems can be reduced to it with optimal sample guarantees. 
We remark that sample-optimal algorithms for problem ${\cal B}$ are known in the literature:
a natural estimator from~\cite{CDVV14}, as well as a similar estimator from~\cite{BFR+:00}
achieve optimal bounds.

The precise form of our reductions naturally depends on the problem ${\cal A}$ that we start from.
While the details differ based on the problem, all our reductions rely on a common recipe:
We randomly transform the initial distributions in question
(i.e., the distributions we are given sample access to) 
to new distributions (over a potentially larger domain) such that at least one of the {\em new} distributions 
has appropriately small $\ell_2$-norm. 
Our transformation preserves the $\ell_1$-norm, and is such that we can easily simulate samples from the new distributions.
More specifically, our transformation is obtained by
drawing random samples from one of the distributions in question to 
discover its heavy bins.
We then artificially subdivide each heavy bin 
into multiple bins, so that the resulting distribution becomes approximately flat. 
This procedure decreases the $\ell_2$-norm while increasing
the domain size. By balancing these two quantities, we obtain sample-optimal testers for a wide variety of properties.

In summary, our upper bound approach provides reductions of numerous distribution testing problems 
to a specific $\ell_2$-testing problem $\cal{B}$ that yield sample-optimal algorithms. It is tempting to conjecture
that optimal reductions in the opposite direction exist, which would allow translating lower bounds for problem 
$\cal{B}$ to tight lower bounds for other problems.
We do not expect optimal reductions in the opposite direction, roughly 
because the hard instances for many of our problems 
are substantially different from the hard instances  for problem $\cal{B}$.
This naturally brings us to our lower bound approach, explained below.

Our lower bounds proceed by constructing explicit distributions $\mathcal{D}$ and $\mathcal{D'}$
over (sets of) distributions, so that a random distribution $p$ drawn from $\mathcal{D}$ satisfies the property,
a random distribution $p$ from $\mathcal{D'}$ is far from satisfying the property (with high probability), and
it is hard to distinguish between the two cases given a small number of samples.
Our analysis is based on classical information-theoretic notions
and is significantly different from previous approaches in this context.
Instead of using techniques involving matching moments~\cite{RRSS09, PV11sicomp},
we are able to directly prove that the mutual information between the set of samples
drawn and the distribution that $p$ was drawn from is small. Appropriately bounding the mutual information
is perhaps a technical exercise, but remains quite manageable only requiring elementary approximation arguments.
We believe that this technique is more flexible than the techniques of~\cite{RRSS09, PV11sicomp}
(e.g.,  it is not restricted to symmetric properties), and may prove useful in future testing problems.

\medskip

\begin{remark}
{\em We believe that our reduction-based approach is a simple and appealing framework
to obtain tight upper bounds for distribution testing problems in a unifying manner. 
Since the dissemination of an earlier version of our paper, Oded Goldreich gave 
an excellent exposition of our approach in the corresponding chapter of his upcoming book~\cite{Goldreich16-notes}.}
\end{remark}

\subsection{Organization}
The structure of this paper is as follows: In Section~\ref{sec:upper}, we describe our reduction-based approach and exploit
it to obtain our optimal testers for a variety of problems. In Section~\ref{sec:lb}, we describe 
our lower bound approach and apply it to prove tight lower bounds for various problems.
\ifnum\confversion=1
Due to space constraints, many proofs are deferred to the full version.
\fi

\section{Our Reduction and its Algorithmic Applications} \label{sec:upper}

In Section~\ref{ssec:red}, we describe our basic reduction from $\ell_1$ to $\ell_2$ testing.
In Section~\ref{ssec:apps}, we apply our reduction to a variety of concrete distribution testing problems.

\subsection{Reduction of $\ell_1$-testing to $\ell_2$-testing} \label{ssec:red}
The starting point of our reduction-based approach is a ``basic tester'' for the identity between two unknown distributions
with respect to the $\ell_2$-norm.
We emphasize that a simple and natural tester turns out to be optimal in this setting.
More specifically, we will use the following simple lemma (that follows, e.g., from Proposition 3.1 in~\cite{CDVV14}):

\begin{lemma}\label{L2TesterLem}
Let $p$ and $q$ be two unknown distributions on $[n]$.
There exists an algorithm that {on input $n$,  $\eps>0,$ and $b \geq \max\{\|p\|_2, \|q\|_2 \}$}
draws $O(bn/\eps^2)$ samples from each of $p$ and $q$,
and with probability at least $2/3$ distinguishes between the cases that $p=q$ and $\|p-q\|_1 > \eps.$
\end{lemma}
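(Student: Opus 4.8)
The plan is to prove Lemma~\ref{L2TesterLem} by reducing the $\ell_1$-testing claim to an $\ell_2$-closeness test, since the $\ell_2$-norm and the $\ell_1$-norm are related via Cauchy--Schwarz once we have control on the supports/norms involved. Concretely, I would first observe that if $\|p-q\|_1 > \eps$ then, because $p$ and $q$ are supported on a domain of size $n$, Cauchy--Schwarz gives $\|p-q\|_2 \geq \|p-q\|_1/\sqrt{n} > \eps/\sqrt{n}$. Thus it suffices to build a tester that distinguishes $p=q$ (equivalently $\|p-q\|_2 = 0$) from $\|p-q\|_2 > \eps/\sqrt{n}$, using $O(bn/\eps^2)$ samples, given the promise $\max\{\|p\|_2,\|q\|_2\} \leq b$.

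The core of the argument is the standard unbiased $\ell_2$-distance estimator. Draw $m$ samples from each of $p$ and $q$. Let $X_i$ (resp.\ $Y_i$) be the number of samples landing in bin $i$; then $X_i \sim \Bin(m,p_i)$ and $Y_i \sim \Bin(m,q_i)$ independently. The statistic
\[
Z = \sum_{i=1}^n \Bigl( (X_i - Y_i)^2 - X_i - Y_i \Bigr)
\]
satisfies $\E[Z] = m^2 \sum_i (p_i-q_i)^2 = m^2 \|p-q\|_2^2$, so $Z$ is an unbiased estimator of $m^2\|p-q\|_2^2$ (this is exactly the content of Proposition~3.1 in~\cite{CDVV14}, which the lemma statement cites). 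The next step is to bound $\Var[Z]$; the known computation gives a bound of the form $\Var[Z] = O\bigl(m^2 \|p-q\|_2^2 + m^3 (\|p\|_2 + \|q\|_2)^2 \|p-q\|_2^2 + m^2 (\|p\|_2^2 + \|q\|_2^2)\bigr)$, or more simply $\Var[Z] \lesssim m^2(\|p\|_2^2+\|q\|_2^2) + m^3(\|p\|_2+\|q\|_2)\|p-q\|_2^2$ after absorbing terms. Using the promise $\|p\|_2,\|q\|_2 \leq b$, this is $O(m^2 b^2 + m^3 b \|p-q\|_2^2)$.

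With mean and variance in hand, the test is: accept iff $Z \leq \tfrac12 m^2 (\eps/\sqrt{n})^2 = \tfrac12 m^2 \eps^2/n$. In the completeness case $p=q$ we have $\E[Z]=0$ and $\Var[Z] = O(m^2 b^2)$, so by Chebyshev, $\Pr[Z > \tfrac12 m^2\eps^2/n]$ is small provided $m^2 \eps^2 / n \gtrsim m b$, i.e.\ $m \gtrsim b n/\eps^2$. In the soundness case $\|p-q\|_2^2 > \eps^2/n$, we have $\E[Z] = m^2\|p-q\|_2^2 > m^2 \eps^2/n$, and we need the standard deviation to be at most a constant fraction of $\E[Z]$; here the dangerous term is $\sqrt{m^3 b \|p-q\|_2^2}$, which must be $\lesssim m^2 \|p-q\|_2^2$, i.e.\ $m \gtrsim b/\|p-q\|_2^2$, and since $\|p-q\|_2^2 > \eps^2/n$ this again follows from $m \gtrsim b n /\eps^2$ (the additive $m^2b^2$ term is likewise dominated). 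So choosing $m = C b n/\eps^2$ for a suitable absolute constant $C$ makes both error probabilities at most, say, $1/6$, giving overall success probability $\geq 2/3$. I would expect the main obstacle to be the variance bound: getting the right form (in particular making sure the cross term scales like $m^3$ and not worse, and that it is proportional to $\|p-q\|_2^2$ rather than $\|p\|_2^2$) requires a somewhat careful moment computation on the binomials. But since the lemma is explicitly quoted as following from Proposition~3.1 of~\cite{CDVV14}, in the write-up I would cite that variance bound as a black box rather than rederiving it, and spend the remaining effort only on the Cauchy--Schwarz reduction and the two Chebyshev applications.
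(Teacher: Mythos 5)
Your proposal is correct and is essentially the paper's own route: the paper does not reprove the tester but simply invokes Proposition~3.1 of~\cite{CDVV14} -- i.e.\ exactly the statistic $\sum_i\bigl((X_i-Y_i)^2-X_i-Y_i\bigr)$ with its mean/variance bounds and Chebyshev -- and obtains the $\ell_1$ soundness from Cauchy--Schwarz, just as you do (see Remark~\ref{rem:l2}). The only small point to tidy in a write-up is that exact unbiasedness of the statistic holds under Poissonized sampling (independent $\Poi(mp_i)$, $\Poi(mq_i)$ counts), not for fixed-sample-size binomial counts, where a negligible $-m(\|p\|_2^2+\|q\|_2^2)$ bias term appears; this does not affect the argument.
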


\ifnum\confversion=0
\begin{remark} \label{rem:l2}
{\em {We remark that Proposition 3.1 of~\cite{CDVV14} provides a somewhat stronger guarantee than
the one of Lemma~\ref{L2TesterLem}. Specifically, it yields a {\em robust} $\ell_2$-closeness tester
with the following performance guarantee:
Given $O(bn/\eps^2)$ samples from distributions $p, q$ over $[n]$, where $b \geq \max\{\|p\|_2, \|q\|_2 \}$,
the algorithm distinguishes (with probability at least $2/3$) between the cases that $\|p-q\|_2 \leq \eps/(2\sqrt{n})$
and $\|p-q\|_2 \geq \eps/\sqrt{n}.$ The soundness guarantee of Lemma~\ref{L2TesterLem} follows from the Cauchy-Schwarz inequality.}
}
\end{remark}
\fi

Observe that if $\|p\|_2$ and $\|q\|_2$ are both small, the algorithm of Lemma~\ref{L2TesterLem} is in fact sample-efficient.
For example, if both are $O(1/\sqrt{n}),$
its sample complexity is an optimal $O(\sqrt{n}/\epsilon^2).$
On the other hand, the performance of this algorithm degrades as $\|p\|_2$ or $\|q\|_2$ increases.
Fortunately, there are some convenient reductions that simplify matters.
To begin with, we note that it suffices that only one of $\|p\|_2$ and $\|q\|_2$ is small.
This is essentially because if there is a large difference between the two, this is easy to detect.

\begin{lemma}\label{L2TesterImprovedLem}
Let $p$ and $q$ be two unknown distributions on $[n]$.
There exists an algorithm that
 {on input $n$,  $\eps>0,$ and $b \geq \min \{\|p\|_2, \|q\|_2 \}$}
draws $O(bn/\eps^2)$ samples
from each of $p$ and $q$ and, with probability at least $2/3$,
distinguishes between the cases that $p=q$ and $\|p-q\|_1 > \eps.$
\end{lemma}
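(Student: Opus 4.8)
The plan is a two-phase algorithm. Phase~1 uses a cheap $\ell_2$-norm estimate to decide which of $p,q$ (if either) has $\ell_2$-norm roughly at most $b$; Phase~2 then feeds the appropriate norm bound into the tester of Lemma~\ref{L2TesterLem}. The observation that makes this work is that in the completeness case $p=q$ we have $\|p\|_2=\|q\|_2=\min\{\|p\|_2,\|q\|_2\}\le b$, so \emph{both} norms are small; conversely, if $\|q\|_2>b$ then the minimum cannot be attained at $q$, so $\|p\|_2\le b<\|q\|_2$ and in particular $p\ne q$, so we are entitled to reject.

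Phase~1: recall that $b\ge 1/\sqrt n$, so $O(1/b)=O(\sqrt n)=O(bn/\eps^2)$. Using the standard collision-based $\ell_2$-norm estimator on $O(1/b)$ samples from each distribution --- the number of colliding pairs among $m$ i.i.d.\ samples from $p$ has expectation $\binom{m}{2}\|p\|_2^2$, and a Chebyshev bound on its variance shows that $m=\Theta(1/b)$ samples suffice to separate the case $\|p\|_2^2\le b^2$ from the case $\|p\|_2^2\ge 4b^2$ --- we obtain, with probability at least $9/10$, a bit for each of $p$ and $q$ that reads ``small'' whenever the true norm is $\le b$ and ``large'' whenever it is $\ge 2b$ (with no guarantee in between).

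Phase~2: if either bit reads ``large'', output ``far''; conditioned on Phase~1 being correct this means, say, $\|q\|_2>b\ge\|p\|_2$, hence $p\ne q$, so rejecting is valid. Otherwise both bits read ``small'', so (again conditioned on Phase~1 being correct) $\|p\|_2,\|q\|_2<2b$; we run the tester of Lemma~\ref{L2TesterLem} with norm bound $2b$ on $O((2b)n/\eps^2)=O(bn/\eps^2)$ fresh samples, amplified to failure probability $1/10$ by a constant number of repetitions, and return its verdict. In the completeness case $p=q$, both norms are $\le b$, so whp we reach the tester with a valid bound and it accepts; in the soundness case $\|p-q\|_1>\eps$ we either reject already in the first branch of Phase~2 or reach the tester with $2b\ge\max\{\|p\|_2,\|q\|_2\}$, whereupon it rejects. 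A union bound over the $O(1)$ bad events keeps the total error below $1/3$. (If $b$ exceeds a fixed absolute constant the collision bound $1/b=\Theta(1)$ degenerates, but then $bn/\eps^2=\Omega(n/\eps^2)$ samples already suffice to learn both distributions to $\ell_1$-error $\eps/3$ and decide directly.)

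The only genuine difficulty --- and it is mild --- is that we do not know a priori \emph{which} of $p,q$ is the distribution promised to have small $\ell_2$-norm; the identity $\|p\|_2=\|q\|_2$ in the completeness case resolves this, since it turns an observed ``large'' norm into a certificate that $p\ne q$. Everything else is the routine second-moment calculation for the collision estimator together with bookkeeping of the constant failure probabilities.
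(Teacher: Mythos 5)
Your proof is correct and follows essentially the same route as the paper's: a cheap constant-factor $\ell_2$-norm estimation pre-test that exploits the fact that $p=q$ forces $\|p\|_2=\|q\|_2\le b$ (so an observed large norm certifies $p\neq q$), followed by a black-box call to Lemma~\ref{L2TesterLem}. The only cosmetic differences are that you compare each estimated norm against the threshold $b$ rather than comparing the two norms to each other, and you carry out the collision-estimator analysis explicitly where the paper simply cites the known $O(\sqrt{n})$-sample norm-estimation result.
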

\begin{proof}
The basic idea is to first test if $\|p\|_2 = \Theta(\|q\|_2),$ and if so to run the tester of Lemma~\ref{L2TesterLem}.
To test whether $\|p\|_2 = \Theta(\|q\|_2)$, we estimate $\|p\|_2$ and $\|q\|_2$ up to a multiplicative constant factor.
It is known~\cite{GR00, BFFKRW:01} that this can be done with
 $O(\sqrt{n})=O(\min (\|p\|_2, \|q\|_2) n)$ samples. If $\|p\|_2$ and $\|q\|_2$ do not agree to within a constant factor,
 we can conclude that $p\neq q.$ Otherwise, we use the tester from Lemma \ref{L2TesterLem}, and note
 that the number of required samples is $O(\|p\|_2n/\eps^2).$
\end{proof}

\noindent In our applications of Lemma~\ref{L2TesterImprovedLem},
we take the parameter $b$ to be equal  to our upper bound on $\min \{\|p\|_2, \|q\|_2 \}.$
In all our algorithms in Section~\ref{ssec:apps} this upper bound will be clear from the context.
If both our initial distributions have large $\ell_2$-norm, we describe a new way to reduce them
by splitting the large weight bins (domain elements) into pieces.
The following key definition is the basis for our reduction:

\begin{definition}
Given a distribution $p$ on $[n]$ and a multiset $S$ of elements of $[n]$, define the \emph{split distribution} $p_S$ on $[n+|S|]$ as follows:
For $1\leq i\leq n$, let $a_i$ denote $1$ plus the number of elements of $S$ that are equal to $i$.
Thus, $\sum_{i=1}^n a_i = n+|S|.$ We can therefore associate the elements of $[n+|S|]$ to elements of the set
$B=\{(i,j):i\in [n], 1\leq j \leq a_i\}$.
We now define a distribution $p_S$ with support $B$, by letting a random sample from $p_S$ be given by $(i,j)$,
where $i$ is drawn randomly from $p$ and $j$ is drawn randomly from $[a_i]$.
\end{definition}

We now show two basic facts about split distributions:
\begin{fact}\label{splitDistributionFactsLem}
Let $p$ and $q$ be probability distributions on $[n]$, and $S$ a given multiset of $[n]$. Then:
(i) We can simulate a sample from $p_S$ or $q_S$ by taking a single sample from $p$ or $q$, respectively.
(ii) It holds $\|p_S-q_S\|_1 = \|p-q\|_1$.
\end{fact}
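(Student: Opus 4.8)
\textbf{Proof plan for Fact~\ref{splitDistributionFactsLem}.}

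The plan is to prove the two parts essentially from the definition of the split distribution, with part (ii) being the only one requiring a short computation. For part (i), I would observe that the definition of $p_S$ is given \emph{procedurally}: a sample from $p_S$ is produced by first drawing $i\sim p$ and then drawing $j$ uniformly from $[a_i]$. Since the values $a_i$ depend only on the fixed multiset $S$ (not on $p$), this recipe is exactly a post-processing of a single sample from $p$: draw $i$ from $p$, then use fresh internal randomness to pick $j\in[a_i]$. The identical statement holds for $q_S$ with the \emph{same} auxiliary randomness structure (the $a_i$'s are the same, since $S$ is the same). So one sample from $p$ (resp.\ $q$) suffices to simulate one sample from $p_S$ (resp.\ $q_S$), which is part (i).

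For part (ii), the key point is to compute the probability mass that $p_S$ assigns to each element $(i,j)\in B$. By the sampling procedure, $\Pr_{p_S}[(i,j)] = p_i \cdot \frac{1}{a_i}$, since $i$ is drawn with probability $p_i$ and then $j$ is uniform on the $a_i$-element set $[a_i]$. Likewise $\Pr_{q_S}[(i,j)] = q_i/a_i$. Therefore
\[
\|p_S - q_S\|_1 \;=\; \sum_{i=1}^n \sum_{j=1}^{a_i} \left| \frac{p_i}{a_i} - \frac{q_i}{a_i} \right| \;=\; \sum_{i=1}^n a_i \cdot \frac{|p_i - q_i|}{a_i} \;=\; \sum_{i=1}^n |p_i - q_i| \;=\; \|p-q\|_1,
\]
which is exactly (ii). (As a sanity check, taking $p=q$ gives $p_S = q_S$, consistent with the completeness direction needed in the applications.)

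I do not anticipate a genuine obstacle here; this is a routine verification. The only thing to be slightly careful about is bookkeeping: making sure that $a_i \ge 1$ always (it is, since $a_i$ is ``$1$ plus'' a count, so $[a_i]$ is nonempty and the uniform draw is well-defined), that $\sum_i a_i = n + |S|$ so that $B$ indeed has the right cardinality and $p_S$ is a genuine distribution on $[n+|S|]$ (the masses $p_i/a_i$ sum to $\sum_i a_i (p_i/a_i) = \sum_i p_i = 1$), and that the correspondence between $[n+|S|]$ and $B$ is fixed once and for all so that $p_S$ and $q_S$ live on the \emph{same} domain, which is what makes the $\ell_1$ distance between them meaningful. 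With those points noted, both parts follow immediately.
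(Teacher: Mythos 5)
Your proof is correct and is essentially the argument the paper has in mind: the paper treats this fact as immediate from the definition, using exactly the observation that $p_S$ assigns mass $p_i/a_i$ to $(i,j)$ (which it states explicitly in the proof of Lemma~\ref{splitL2Lem}), and your $\ell_1$ computation and simulation argument are the standard verification. Nothing further is needed.
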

Fact~\ref{splitDistributionFactsLem} implies that it suffices to be able to test
the closeness of $p_S$ and $q_S$, for some $S$.
In particular, we want to find an $S$ so that $\|p_S\|_2$ and $\|q_S\|_2$ are small.
The following lemma
shows how to achieve this:
\begin{lemma}\label{splitL2Lem}
Let $p$ be a distribution on $[n]$. Then: 
(i) For any multisets $S\subseteq S'$ of $[n]$, $\|p_{S'}\|_2 \leq \|p_S\|_2$, and
(ii) If $S$ is obtained by taking $\mathrm{Poi}(m)$ samples from $p$, then $\E[\|p_S\|_2^2] \leq 1/m$.
\end{lemma}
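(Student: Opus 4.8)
\textbf{Proof plan for Lemma~\ref{splitL2Lem}.}

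For part (i), the plan is to reduce to the case where $S'$ is obtained from $S$ by adding a single element, say $k \in [n]$, since the general case follows by induction on $|S' \setminus S|$. After adding one copy of $k$, the bin $(k,j)$-structure for $i \neq k$ is unchanged, and the former $a_k$ bins associated to $k$ (each carrying mass $p_k/a_k$) become $a_k+1$ bins, each carrying mass $p_k/(a_k+1)$. Thus $\|p_{S'}\|_2^2 - \|p_S\|_2^2 = (a_k+1)\cdot(p_k/(a_k+1))^2 - a_k\cdot(p_k/a_k)^2 = p_k^2\left(\tfrac{1}{a_k+1} - \tfrac{1}{a_k}\right) \leq 0$. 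This is the easy direction; no real obstacle here, just bookkeeping.

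For part (ii), I would first condition on the set $S$ and write $\|p_S\|_2^2 = \sum_{i=1}^n a_i \cdot (p_i/a_i)^2 = \sum_{i=1}^n p_i^2/a_i$, where $a_i = 1 + (\text{number of samples equal to }i)$. Since $S$ consists of $\mathrm{Poi}(m)$ samples from $p$, the counts are independent with the count of $i$ distributed as $\mathrm{Poi}(m p_i)$; hence $a_i - 1 \sim \mathrm{Poi}(m p_i)$ and the $a_i$ are mutually independent. So $\E[\|p_S\|_2^2] = \sum_{i=1}^n p_i^2 \cdot \E[1/a_i]$, and the crux is to bound $\E\!\left[\tfrac{1}{1+X}\right]$ for $X \sim \mathrm{Poi}(\lambda)$ with $\lambda = m p_i$. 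The key identity is $\E\!\left[\tfrac{1}{1+X}\right] = \sum_{k\ge 0} \tfrac{1}{k+1} e^{-\lambda}\tfrac{\lambda^k}{k!} = \tfrac{1}{\lambda}\sum_{k\ge 0} e^{-\lambda}\tfrac{\lambda^{k+1}}{(k+1)!} = \tfrac{1}{\lambda}\big(1 - e^{-\lambda}\big) \leq \tfrac{1}{\lambda}$. Substituting back, $\E[\|p_S\|_2^2] \leq \sum_{i=1}^n p_i^2 \cdot \tfrac{1}{m p_i} = \tfrac{1}{m}\sum_{i=1}^n p_i = \tfrac{1}{m}$, as claimed.

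The main point requiring care — though it is not deep — is the use of the Poissonization: it is precisely what makes the counts of distinct elements independent (so the expectation factors over $i$) and what gives the clean closed form $\E[1/(1+X)] = (1-e^{-\lambda})/\lambda$. With a fixed sample size $m$ this independence fails and the bound would be messier; the statement is deliberately phrased with $\mathrm{Poi}(m)$ to sidestep that. One minor edge case: if some $p_i = 0$ then that term contributes $0$ to both sides and can be dropped, so the division by $\lambda = mp_i$ is harmless. I would present part (i) first as a one-line computation, then part (ii) via the conditional expansion followed by the Poisson moment computation above.
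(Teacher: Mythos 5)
Your proposal is correct and follows essentially the same route as the paper: expand $\|p_S\|_2^2=\sum_i p_i^2/a_i$, use the marginal law $a_i-1\sim\mathrm{Poi}(mp_i)$, and apply $\E[1/(1+X)]=(1-e^{-\lambda})/\lambda\leq 1/\lambda$ together with linearity of expectation. The only cosmetic differences are that you derive this identity by summing the Poisson series directly while the paper integrates the generating function $\E[z^X]=e^{\lambda(z-1)}$ over $[0,1]$, and in part (i) you induct on single insertions where the paper simply compares $\sum_i p_i^2/a_i\geq\sum_i p_i^2/a_i'$ termwise (also, note that the factoring $\E[\|p_S\|_2^2]=\sum_i p_i^2\,\E[1/a_i]$ needs only linearity, not the independence of the counts).
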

\begin{proof}
Let $a_i$ equal one plus the number of copies of $i$ in $S$,
and $a_i'$ equal one plus the number of copies of $i$ in $S'$.
We note that $p_S=(i,j)$ with probability $p_i/a_i$. Therefore, for (i) we have that
\vspace{-0.1cm}
$$
\|p_S\|_2^2 = \sum_{i=1}^n \sum_{j=1}^{a_i} (p_i/a_i)^2 = \sum_{i=1}^n p_i^2/a_i \geq \sum_{i=1}^n p_i^2/a_i' = \|p_{S'}\|_2^2.
$$
For claim (ii), we note that the expected squared $\ell_2$-norm of $p_S$ is
$
\sum_{i=1}^n p_i^2 \E[a_i^{-1}].
$
We note that $a_i$ is distributed as $1+X$ where $X$ is a $\mathrm{Poi}(mp_i)$ random variable. Recall that
if $Y$ is a random variable distributed as $\mathrm{Poi}(\lambda)$, then
$
\E[z^Y] = e^{\lambda(z-1)}.
$
Taking an integral we find that
\vspace{-0.1cm}
$$\E\left[ 1/(1+X)\right]  = \E\left[\int_0^1 z^X dz\right]
 = \int_0^1 \E[z^X] dz
 = \int_0^1 e^{\lambda(z-1)} dz
 = (1-e^{-\lambda})/\lambda
 \leq 1/\lambda.
$$
Therefore, we have that
$
\E[\|p_S\|_2^2] \leq \sum_{i=1}^n p_i^2 / (mp_i) = (1/m) \sum_{i=1}^n p_i = 1/m.
$
This completes the proof.
\end{proof}

\subsection{Algorithmic Applications} \label{ssec:apps}

\subsubsection{Testing Identity to a Known Distribution}
We start by applying our framework to give a simple alternate optimal identity
tester to a fixed distribution in the minimax sense.
In this case, our algorithm is extremely easy, and provides a much simpler proof
of the known optimal bound~\cite{VV14, DKN:15}:

\begin{proposition} \label{prop:identity-fixed}
There exists an algorithm that given an explicit distribution $q$ supported on $[n]$
and $O(\sqrt{n}/\eps^2)$ independent samples from a distribution $p$ over $[n]$
distinguishes with probability at least $2/3$ between the cases where $p=q$ and $\|p-q\|_1 \geq \eps.$
\end{proposition}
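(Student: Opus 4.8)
The plan is to specialize the split-distribution reduction of Section~\ref{ssec:red} to this setting, taking advantage of the fact that $q$ is known explicitly: this lets us choose the multiset $S$ \emph{deterministically}, so that the split distribution $q_S$ is essentially flat, and then feed the pair $(p_S,q_S)$ to the $\ell_2$-tester of Lemma~\ref{L2TesterImprovedLem}. No sampling from $q$ is needed to \emph{build} $S$, which is the main simplification over the other applications.

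Concretely, I would set $a_i = 1 + \lfloor n q_i \rfloor$ for each $i \in [n]$ and let $S$ be the multiset containing $a_i - 1$ copies of $i$. Then $|S| = \sum_{i=1}^n \lfloor n q_i \rfloor \le \sum_{i=1}^n n q_i = n$, so $p_S$ and $q_S$ live on a domain of size $n' = n + |S| \le 2n$. Since $q_S$ assigns mass $q_i/a_i$ to each element $(i,j)$ with $j \in [a_i]$ (as in the proof of Lemma~\ref{splitL2Lem}(i)), and since $a_i \ge n q_i$, we get
\[
\|q_S\|_2^2 \;=\; \sum_{i=1}^n \sum_{j=1}^{a_i} (q_i/a_i)^2 \;=\; \sum_{i=1}^n q_i^2/a_i \;\le\; \sum_{i \,:\, q_i > 0} q_i^2/(n q_i) \;\le\; 1/n,
\]
so $\|q_S\|_2 \le 1/\sqrt n$.

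To finish, I would combine this with the basic properties of splitting. By Fact~\ref{splitDistributionFactsLem}, $\|p_S - q_S\|_1 = \|p-q\|_1$, and a single sample $i \sim p$ yields a sample $(i,j) \sim p_S$ by drawing $j$ uniformly from $[a_i]$; samples from $q_S$ we generate ourselves, as $q_S$ is explicit. Moreover $p_S = q_S$ iff $p = q$ (summing $p_i/a_i = q_i/a_i$ over $j\in[a_i]$ gives $p_i = q_i$ for all $i$, and the converse is immediate), so the input promise ``$p=q$ or $\|p-q\|_1 \ge \eps$'' becomes verbatim ``$p_S=q_S$ or $\|p_S - q_S\|_1 \ge \eps$''. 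Running the algorithm of Lemma~\ref{L2TesterImprovedLem} on $p_S,q_S$ over the size-$n'$ domain with parameter $b = 1/\sqrt n \ge \min\{\|p_S\|_2,\|q_S\|_2\}$ draws $O(b n'/\eps^2) = O(\sqrt n/\eps^2)$ simulated samples from each and distinguishes the two cases with probability $\ge 2/3$, which is the desired tester.

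I do not expect a substantive obstacle here; the only point that needs a little care is picking the $a_i$ so that the two competing quantities are simultaneously controlled — the domain blow-up $n+|S|$ must stay $O(n)$ while $\|q_S\|_2$ drops to $O(1/\sqrt n)$ — and $a_i = 1 + \lfloor n q_i\rfloor$ (equivalently, any choice that is $\Theta(n q_i)$ when $q_i \gtrsim 1/n$ and $1$ otherwise) achieves both. Everything else is a direct invocation of Fact~\ref{splitDistributionFactsLem} and Lemma~\ref{L2TesterImprovedLem}; the real content of the proposition is that the generic $\ell_1$-to-$\ell_2$ reduction collapses to something essentially trivial when one of the two distributions is given explicitly.
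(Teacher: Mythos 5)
Your proposal is correct and is essentially the paper's own proof: the same multiset $S$ with $\lfloor n q_i\rfloor$ copies of $i$ (so $a_i = 1+\lfloor n q_i\rfloor$), the same bound $\|q_S\|_2 = O(1/\sqrt{n})$ with $|S|\le n$, and the same invocation of Fact~\ref{splitDistributionFactsLem} and Lemma~\ref{L2TesterImprovedLem} with $b=O(1/\sqrt{n})$, yielding $O(\sqrt{n}/\eps^2)$ samples. Your added remarks (that $q_S$-samples are self-generated since $q$ is explicit, and that $p_S=q_S$ iff $p=q$) are implicit in the paper's argument.
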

\begin{proof}
Let $S$ be the multiset where $S$ contains $\lfloor n q_i \rfloor$ copies of $i.$
Note that $|S| \leq \sum_{i=1}^n nq_i  = n.$ Note also that $q_S$ assigns
probability mass at most $1/n$ to each bin.
Therefore, we have that $\|q_S\|_2=O(1/\sqrt{n}).$
It now suffices to distinguish between the cases that $p_S=q_S$
and the case that $\|p_S-q_S\|_1 \geq \eps.$
Using the basic tester from Lemma~\ref{L2TesterImprovedLem} for $b  =  O(1/\sqrt{n}),$ we can do this
using $O(2n b /\eps^2) = O(\sqrt{n}/\eps^2)$ samples from $p_S.$
This can be simulated using $O(\sqrt{n}/\eps^2)$ samples from $p,$ which completes the proof.
\end{proof}

\begin{remark} {\em
We observe that the identity tester of Proposition~\ref{prop:identity-fixed} satisfies
a stronger guarantee: More specifically, it distinguishes between the cases that $\chi^2(p, q) : = \sum_{i=1}^n (p_i - q_i)^2/q_i \leq \eps^2/10$
versus $\|p-q\|_1 \geq \eps.$ Hence, it implies Theorem~1 of~\cite{ADK15}.
This can be seen as follows: As explained in Remark~\ref{rem:l2}, the basic tester of Lemma~\ref{L2TesterLem} from~\cite{CDVV14}
is a robust tester with respect to the $\ell_2$-norm. Thus, the tester of Proposition~\ref{prop:identity-fixed} distinguishes
between the cases that $\|p_S-q_S\|_2 \leq \eps/(2\sqrt{n})$ and $\|p_S-q_S\|_2 \geq \eps/\sqrt{n}.$
The desired soundness follows from the fact $\|p-q\|_1 = \|p_S-q_S\|_1$ and the Cauchy-Schwarz inequality.
The desired ``chi-squared'' completeness property follows from the easily verifiable (in)equalities $\chi^2(p, q) = \chi^2(p_S, q_S)$
and $\chi^2(p_S, q_S)  \geq n \cdot \|p_S-q_S\|^2_2.$} 
\end{remark}

\begin{remark} {\em
After the dissemination of an earlier version of this paper, 
inspired by our work, Goldreich~\cite{Goldreich16} reduced testing identity to a fixed distribution
to its special case of uniformity testing, via a refinement of the above idea.
Unfortunately, this elegant idea does not seem to generalize to other problems considered here.}
\end{remark}

\subsubsection{Testing Closeness between two Unknown Distributions}
We now turn to the problem of testing closeness between two unknown distributions $p, q$.
The difficulty of this case lies in the fact that,
not knowing $q,$ we cannot subdivide into bins in such
a way as to guarantee that $\|q_S\|_2=O(1/\sqrt{n}).$
However, we can do nearly as well by first drawing an appropriate number of samples from $q,$
and then using them to provide our subdivisions.

\begin{proposition}
There exists an algorithm that given sample access to two distributions
$p$ and $q$ over $[n]$ distinguishes with probability $2/3$ between the cases
$p=q$ and $\|p-q\|_1>\eps$ using $O(\max(n^{2/3}/\eps^{4/3},\sqrt{n}/\eps^2))$ samples from each of $p$ and $q$.
\end{proposition}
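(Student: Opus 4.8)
The plan is to follow the general recipe of Section~\ref{ssec:red}: draw a preliminary batch of samples from $q$ to discover its heavy bins, split along those samples to make $q_S$ nearly flat (hence of small $\ell_2$-norm), and then feed the split instance to the black-box tester of Lemma~\ref{L2TesterImprovedLem}. Concretely, set $m := \Theta(\min(n,\,n^{2/3}/\eps^{4/3}))$ (rounded to an integer), let $S$ be the multiset obtained by taking $\mathrm{Poi}(m)$ samples from $q$, and consider the split distributions $p_S$ and $q_S$ on $[n+|S|]$. By Fact~\ref{splitDistributionFactsLem}, a single sample from $p$ (resp.\ $q$) simulates a sample from $p_S$ (resp.\ $q_S$), and $\|p_S-q_S\|_1=\|p-q\|_1$; in particular $p=q$ implies $p_S=q_S$, and $\|p-q\|_1>\eps$ implies $\|p_S-q_S\|_1>\eps$. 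So it suffices to distinguish these two cases for the reduced instance.

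First I would control the two parameters of the reduced instance. Since $|S|\sim\mathrm{Poi}(m)$, we have $|S|=O(m)$ except with small probability, so the reduced domain size is $N:=n+|S|=O(n+m)$. By Lemma~\ref{splitL2Lem}(ii), $\E[\|q_S\|_2^2]\le 1/m$, so by Markov's inequality $\|q_S\|_2=O(1/\sqrt{m})$ with probability at least, say, $9/10$ over the draw of $S$. Conditioning on both events, I would run the tester of Lemma~\ref{L2TesterImprovedLem} on $p_S,q_S$ with the parameter $b:=O(1/\sqrt m)$, which is a valid upper bound on $\min\{\|p_S\|_2,\|q_S\|_2\}\le\|q_S\|_2$. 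This draws $O(bN/\eps^2)=O\big((n+m)/(\sqrt m\,\eps^2)\big)$ samples from each of $p_S$ and $q_S$, equivalently from each of $p$ and $q$, and succeeds with probability $2/3$; combining with the conditioning events (and a constant number of repetitions to restore the constant $2/3$, if desired) gives the claimed success probability.

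It then remains to bound the total number of samples, which is $O(m)$ (for $S$) plus $O\big((n+m)/(\sqrt m\,\eps^2)\big)$ (for the tester), i.e.\ $O\big(m + n/(\sqrt m\,\eps^2) + \sqrt m/\eps^2\big)$. I would finish by verifying, with the stated choice of $m$, that this is $O(\max(n^{2/3}/\eps^{4/3},\,\sqrt n/\eps^2))$, splitting into two regimes: when $\eps\ge n^{-1/4}$ we have $n^{2/3}/\eps^{4/3}\le n$, so $m=\Theta(n^{2/3}/\eps^{4/3})$ and each of the three terms is $O(n^{2/3}/\eps^{4/3})$; when $\eps<n^{-1/4}$ we have $m=\Theta(n)$, and then $m\le\sqrt n/\eps^2$ and each of the three terms is $O(\sqrt n/\eps^2)$.

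I expect the only real content here to be this parameter balancing that yields the $\max$ in the statement — in particular, the observation that one must not take $m$ much larger than $n$, since the enlarged domain would then make the $\sqrt m/\eps^2$ term dominate and break the bound — together with the routine bookkeeping of conditioning on the two high-probability events about $|S|$ and $\|q_S\|_2$ before invoking the $\ell_2$ tester in a black-box way. Everything else is an immediate application of Fact~\ref{splitDistributionFactsLem}, Lemma~\ref{splitL2Lem}, and Lemma~\ref{L2TesterImprovedLem}.
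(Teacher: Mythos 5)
Your proposal is correct and is essentially the paper's own proof: the paper's Test-Closeness algorithm likewise takes $\mathrm{Poi}(k)$ samples from $q$ with $k=\min(n,n^{2/3}\eps^{-4/3})$, splits along them, and runs the tester of Lemma~\ref{L2TesterImprovedLem} with $b=O(1/\sqrt{k})$, using Fact~\ref{splitDistributionFactsLem} and Lemma~\ref{splitL2Lem} exactly as you do, with the same final balancing $O(k+nk^{-1/2}/\eps^2)$.
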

\begin{proof}
The algorithm is as follows:

\vspace{0.2cm}

\fbox{\parbox{6in}{
{\bf Algorithm} Test-Closeness\\

\vspace{-0.2cm}

\textbf{Input:} Sample access to distributions $p$ and $q$ supported on $[n]$ and $\eps>0.$

\textbf{Output:} ``YES'' with probability at least $2/3$ if $p=q$, ``NO'' with probability at least $2/3$ if $\|p-q\|_1\geq \eps.$

\begin{enumerate}

\item Let $k=\min(n,n^{2/3}\eps^{-4/3})$.

\item Define a multiset $S$ by taking $\mathrm{Poi}(k)$ samples from $q$.

\item Run the tester from Lemma \ref{L2TesterImprovedLem} to distinguish between $p_S=q_S$ and $\|p_S-q_S\|_1 \geq \eps.$

\end{enumerate}
}}

\vspace{.3cm}

To show correctness, we first note that with high probability we have $|S|=O(n)$. Furthermore, by Lemma \ref{splitL2Lem}
it follows that the expected squared $\ell_2$ norm of $q_S$ is at most $1/k.$ Therefore,
with probability at least $9/10$, we have that $|S|=O(n)$ and $\|q_S\|_2 = O(1/\sqrt{k}).$

The tester from Lemma \ref{L2TesterImprovedLem} distinguishes between $p_S=q_S$ and $\|p_S-q_S\|_1 \geq \eps$
 with $O(nk^{-1/2}/\eps^2)$ samples.
By Fact \ref{splitDistributionFactsLem}, this is equivalent to distinguishing between $p=q$ and $\|p-q\|_1 \geq \eps$.
Thus, the total number of samples taken by the algorithm is $O(k+nk^{-1/2}/\eps^2) = O(\max(n^{2/3}\eps^{-4/3},\sqrt{n}/\eps^2)).$
\end{proof}

We consider a generalization of testing closeness where we have access to different size samples from the two distributions,
and use our technique to provide the first sample-optimal algorithm for the entire range of parameters:


\begin{proposition}
There exists an algorithm that given sample access to two distributions,
$p$ and $q$ over $[n]$ distinguishes with probability $2/3$
between the cases $p=q$ and $\|p-q\|_1>\eps$
given $m_1$ samples from $q$ and an additional $m_2=O(\max(nm_1^{-1/2}/\eps^2,\sqrt{n}/\eps^2))$
samples from each of $p$ and $q.$
\end{proposition}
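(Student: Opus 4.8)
The plan is to mimic the proof of the equal-sample closeness tester, but now exploit the asymmetry: we are given $m_1$ samples from $q$, and we will use \emph{all} of them to build the split multiset $S$ before drawing the remaining $m_2$ ``test'' samples from each of $p$ and $q$. First I would set the number of subdivision samples to be $k = \min(n, m_1)$ (or, more precisely, $k = \Theta(\min(n, m_1))$, taken via $\mathrm{Poi}(k)$ so that Lemma~\ref{splitL2Lem}(ii) applies cleanly), and define $S$ by taking $\mathrm{Poi}(k)$ samples from $q$. By Fact~\ref{splitDistributionFactsLem}(i) each such sample is just an ordinary sample from $q$, so this costs $O(k) = O(\min(n,m_1)) = O(m_1)$ samples from $q$, which is within our budget. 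By Lemma~\ref{splitL2Lem}(ii), $\E[\|q_S\|_2^2] \le 1/k$, and with high probability $|S| = O(k) = O(n)$; so by Markov's inequality, with probability at least $9/10$ we simultaneously have $|S| = O(n)$ and $\|q_S\|_2 = O(1/\sqrt{k})$. Condition on this event.

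Next I would invoke Lemma~\ref{L2TesterImprovedLem} on the pair $(p_S, q_S)$, which live on a domain of size $n + |S| = O(n)$, with the parameter $b = O(1/\sqrt{k}) \ge \min\{\|p_S\|_2, \|q_S\|_2\}$ (the bound holds for the $q_S$ side). That tester distinguishes $p_S = q_S$ from $\|p_S - q_S\|_1 > \eps$ using $O(b \cdot n / \eps^2) = O(n k^{-1/2}/\eps^2)$ samples from each of $p_S$ and $q_S$. By Fact~\ref{splitDistributionFactsLem}(i), each such sample is simulated by one sample from $p$ (resp.\ $q$), so this second stage costs $O(n k^{-1/2}/\eps^2)$ fresh samples from each distribution. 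By Fact~\ref{splitDistributionFactsLem}(ii), $\|p_S - q_S\|_1 = \|p - q\|_1$, so the $\ell_1$-testing problem for $(p_S,q_S)$ is identical to the one for $(p,q)$; combined with the conditioning event above, the overall success probability is at least, say, $2/3$ after a standard constant-factor adjustment (e.g.\ boosting the Lemma~\ref{L2TesterImprovedLem} step to success probability $3/4$).

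It remains to check the sample counts match the claimed bounds. From $q$ we use $O(k) = O(\min(n,m_1)) \le m_1$ samples in stage one plus $m_2 = O(n k^{-1/2}/\eps^2)$ in stage two; from $p$ we use only the $m_2$ samples of stage two. With $k = \min(n, m_1)$, the stage-two cost is
\[
O\!\left(\frac{n}{\sqrt{\min(n,m_1)}}\,\frac{1}{\eps^2}\right)
= O\!\left(\max\!\left(\frac{n}{\sqrt{m_1}}\,\frac{1}{\eps^2},\ \frac{\sqrt{n}}{\eps^2}\right)\right),
\]
since when $m_1 \le n$ the denominator is $\sqrt{m_1}$ and when $m_1 > n$ it is $\sqrt{n}$, giving the two branches of the max. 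This is exactly the stated $m_2$, so the algorithm fits the claimed tradeoff.

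I expect the only genuinely delicate point — and the step worth spelling out carefully — is the bookkeeping around \emph{reusing versus refreshing} samples from $q$: the $\mathrm{Poi}(k)$ samples defining $S$ must be kept disjoint from the $m_2$ test samples fed into the $\ell_2$-tester, so that conditioned on $S$ the test samples are genuinely i.i.d.\ from $q_S$; equivalently one can think of drawing $\mathrm{Poi}(k) + m_2$ independent samples from $q$ up front and partitioning them. A minor subtlety is that we should take $k = \min(n, m_1)$ rather than simply $m_1$, so that $|S| = O(n)$ and the split domain stays of size $O(n)$ even when $m_1 \gg n$; in that regime the extra $q$-samples beyond $n$ are simply discarded, since further subdivision past mass $\approx 1/n$ per bin cannot reduce $\|q_S\|_2$ below $\Theta(1/\sqrt{n})$ anyway (cf.\ Lemma~\ref{splitL2Lem}(i)). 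Everything else is a direct instantiation of Lemmas~\ref{L2TesterImprovedLem} and~\ref{splitL2Lem} and Fact~\ref{splitDistributionFactsLem}, exactly as in the equal-sample case.
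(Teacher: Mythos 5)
Your proposal is correct and follows essentially the same route as the paper: take $k=\min(n,m_1)$, build the split multiset $S$ from $\mathrm{Poi}(k)$ samples of $q$ so that $\|q_S\|_2=O(1/\sqrt{k})$ by Lemma~\ref{splitL2Lem}, and then run the tester of Lemma~\ref{L2TesterImprovedLem} on $(p_S,q_S)$, using Fact~\ref{splitDistributionFactsLem} to transfer samples and the $\ell_1$ distance. Your extra remarks on keeping the subdivision samples separate from the test samples and on capping $k$ at $n$ are sensible refinements of the same argument.
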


\begin{proof}
The algorithm is as follows:

\vspace{0.2cm}

\fbox{\parbox{6in}{
{\bf Algorithm} Test-Closeness-Unequal\\

\vspace{-0.2cm}

\textbf{Input:} Sample access to distributions $p$ and $q$ supported on $[n]$ and $\eps>0.$

\textbf{Output:} ``YES'' with probability at least $2/3$ if $p=q$, ``NO'' with probability at least $2/3$ if $\|p-q\|_1\geq \eps.$

\begin{enumerate}
\item Let $k=\min(n,m_1)$.

\item Define a multiset $S$ by taking $\mathrm{Poi}(k)$ samples from $q$.

\item Run the tester from Lemma \ref{L2TesterImprovedLem} to distinguish between $p_S=q_S$ and $\|p_S-q_S\|_1 \geq \eps.$
\end{enumerate}
}}

\vspace{.3cm}

To show correctness, we first note that with high probability we have $|S|=O(n).$ Furthermore, by Lemma \ref{splitL2Lem}
it follows that the expected squared $\ell_2$-norm of $q_S$ is at most $1/k.$ Therefore,
with probability at least $9/10$, we have that $|S|=O(n)$ and $\|q_S\|_2 = O(1/\sqrt{k}).$

The tester from Lemma \ref{L2TesterImprovedLem} distinguishes between $p_S=q_S$ and $\|p_S-q_S\|_1 \geq \eps$
 with $O(nk^{-1/2}/\eps^2)$ samples.
By Fact \ref{splitDistributionFactsLem}, this is equivalent to distinguishing between $p=q$ and $\|p-q\|_1 \geq \eps$.
In addition to the $m_1$ samples from $q$, we had to take $O(nk^{-1/2}/\epsilon^2)=O(m_2)$ samples from each of $p$ and $q$.
\end{proof}

\newpage

\subsubsection{Nearly Instance--Optimal Testing} \label{sec:instance-opt}
In this subsection, we provide near-optimal testers for identity and closeness in the instance-optimal setting.
We start with the simpler case of testing identity to a fixed distribution. 
This serves as a warm-up for the more challenging case of two unknown distributions.

Note that the identity tester of Proposition~\ref{prop:identity-fixed} is sample-optimal 
only for a worst-case choice of the explicit distribution $q.$ (It turns out that the worst case corresponds to 
$q$ being the uniform distribution over $[n]$.) Intuitively, for most choices of $q,$ one can actually do substantially better.
This fact was first formalized and shown in~\cite{VV14}.

In the following proposition, we give a very simple tester with a compact analysis 
whose sample complexity is essentially optimal as a function of $q$. The basic idea of our tester is the following:
First, we partition the domain into categories based on the approximate mass of the elements of $q$,
and then we run an $\ell_2$-tester independently on each category.

\begin{proposition} \label{prop:inst-opt-fixed}
There exists an algorithm that on input an explicit distribution $q$ over $[n]$, a parameter $\eps>0,$
and $\tilde O (\|q\|_{2/3}/\eps^2)$
samples from a distribution $p$ over $[n]$ distinguishes with probability at least $2/3$
between the cases where $p=q$ and $\|p-q\|_1 \geq \eps.$
\end{proposition}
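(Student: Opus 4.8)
The plan is to partition $[n]$ into $O(\log n)$ "buckets" according to the dyadic scale of $q_i$, run a separate instance of the split-and-test machinery on each bucket, and argue that the per-bucket sample costs sum to $\tilde O(\|q\|_{2/3}/\eps^2)$. Concretely, for $j = 0, 1, \dots, J$ with $J = O(\log n)$, let $B_j = \{ i \in [n] : q_i \in (2^{-(j+1)}, 2^{-j}] \}$, together with a final bucket for elements of mass $\le 1/n$ (or $q_i = 0$) that can be folded into the coarsest scale. Split the $\eps$ budget across the buckets: since we only need to detect that $p$ and $q$ differ by $\eps$ in total $\ell_1$ mass, we can afford to look for a discrepancy of $\eps_j := \eps / \mathrm{poly}(\log n)$ on at least one bucket — more carefully, if $\|p - q\|_1 \ge \eps$ then either some bucket carries $\ell_1$-discrepancy $\ge \eps/(2J)$ of the mass contained within that bucket's domain elements, or (if $p$ puts a lot of mass outside the support of the relevant buckets) a preliminary check catches it; I'd handle this by also reserving a bucket-like test for "$p$-mass landing where $q$ is tiny," which is exactly an identity-type test and is cheap.

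Within a single bucket $B_j$, all elements of $q$ have mass $\Theta(2^{-j})$, so $q$ restricted to $B_j$ is approximately flat. I would apply the split-distribution construction of the excerpt: take $S_j$ to contain $\lfloor M \cdot q_i \rfloor$ copies of each $i \in B_j$ for a suitable $M$ (mirroring the proof of Proposition~\ref{prop:identity-fixed}), so that $q_{S_j}$ restricted to the split image of $B_j$ assigns mass $O(1/M)$ to every bin and hence has $\ell_2$-norm $O(\sqrt{|B_j|}/M)$ — wait, more precisely each of the $\Theta(M \cdot |B_j| \cdot 2^{-j})$ bins has mass $O(1/M)$, so $\|q_{S_j}\|_2^2 = O(|B_j| 2^{-j}/M)$. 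Then invoke Lemma~\ref{L2TesterImprovedLem} with $b = O(\sqrt{|B_j| 2^{-j}/M})$ on the pair $(p_{S_j}, q_{S_j})$ restricted to bucket $j$; the sample cost is $O(b \cdot (\text{domain size}) / \eps_j^2)$. Choosing $M$ to balance the two competing terms (the $\mathrm{Poi}$-style sampling cost to build $S_j$ versus the $\ell_2$-testing cost) gives a per-bucket cost of $\tilde O(\,(|B_j| 2^{-2j/3})\,/\eps^2\,) = \tilde O\!\big(\sum_{i \in B_j} q_i^{2/3} / \eps^2\big)$, using Fact~\ref{splitDistributionFactsLem} to simulate all $p_{S_j}$-samples from $p$-samples. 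Summing over $j$, $\sum_j \sum_{i \in B_j} q_i^{2/3} = \|q\|_{2/3}^{2/3}$... and I should be careful here: $\|q\|_{2/3}$ as used in the theorem is the $(2/3)$-"norm" $(\sum_i q_i^{2/3})^{3/2}$, so I'd track constants so the final bound reads $\tilde O(\|q\|_{2/3}/\eps^2)$, absorbing the $\mathrm{polylog}$ from the number of buckets and the $\eps_j$-splitting into the $\tilde O(\cdot)$.

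For correctness: completeness is immediate since $p = q$ implies $p_{S_j} = q_{S_j}$ on every bucket, so every sub-test says YES with probability $\ge 2/3$, and a standard amplification (repeat each sub-test $O(\log J)$ times, take majority) plus a union bound over the $O(\log n)$ buckets gives overall success. Soundness requires the decomposition argument: if $\|p-q\|_1 \ge \eps$, I need that the discrepancy is "visible" on some bucket at scale $\eps/\mathrm{polylog}$. The subtlety is that $p$ may redistribute mass between buckets — an element $i \in B_j$ could have $p_i$ large even though $q_i$ is small — but this only helps, since it creates an $\ell_1$ gap within bucket $j$'s domain; and $p$-mass concentrating on elements where $q$ is negligible is exactly caught by the "tiny-$q$" sub-test. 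I expect the main obstacle to be precisely this soundness bookkeeping: cleanly arguing that the total $\ell_1$ discrepancy $\ge \eps$ forces a within-bucket (or within-the-tiny-part) discrepancy of $\ge \eps/\mathrm{polylog}(n)$ that the corresponding sub-test is configured to detect, while simultaneously ensuring the $b$-parameter and domain-size estimates used to invoke Lemma~\ref{L2TesterImprovedLem} hold with high probability (the $\mathrm{Poi}$ concentration for $|S_j|$ and the $\ell_2$-norm bound from Lemma~\ref{splitL2Lem}). Everything else is the routine balancing of $M$ per bucket and a Hölder-type summation identifying $\sum_j (\text{per-bucket cost})$ with $\|q\|_{2/3}/\eps^2$.
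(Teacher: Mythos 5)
Your overall skeleton matches the paper's proof: bucket $[n]$ by the dyadic scale of $q_i$, handle the negligible-mass elements separately, run an $\ell_2$-based test on each bucket at discrepancy threshold $\eps/\mathrm{polylog}(n/\eps)$, and amplify plus union-bound over the $O(\log(n/\eps))$ buckets. However, the quantitative core of your argument has a genuine error. You claim a per-bucket cost of $\tilde O\bigl(\sum_{i\in B_j} q_i^{2/3}/\eps^2\bigr)$, which summed over buckets gives $\tilde O\bigl(\|q\|_{2/3}^{2/3}/\eps^2\bigr)$; for $q$ uniform this would be $\tilde O(n^{1/3}/\eps^2)$, contradicting the $\Omega(\sqrt n/\eps^2)$ lower bound for uniformity testing, so no choice of the splitting parameter $M$ can deliver it. Indeed, splitting inside a bucket is useless here: since all $q_i$ in $B_j$ are within a factor $2$ of each other, the conditional distribution $(q|B_j)$ is already nearly flat with $\|(q|B_j)\|_2=\Theta(m_j^{-1/2})$ ($m_j:=|B_j|$), and further subdivision multiplies the domain by $t$ while only dividing the $\ell_2$-norm by $\sqrt t$, so the product $b\cdot(\text{domain size})$ in Lemma~\ref{L2TesterImprovedLem} only gets worse. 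The correct per-bucket accounting (which is what the paper does) is: test the \emph{conditional} distributions at the rescaled threshold $\eps':=\eps/(\log(n/\eps)\,\|q[B_j]\|_1)$, costing $O(m_j^{1/2}\cdot m_j/\eps'^2)=\tilde O(m_j^{5/2}4^{-j}/\eps^2)$ conditional samples, and then convert to samples from $p$ via rejection sampling at rate $\approx m_j2^{-j}$, giving $\tilde O(m_j^{3/2}2^{-j}/\eps^2)$ per bucket. The link to $\|q\|_{2/3}$ is then a \emph{maximum}, not your sum: $m_j^{3/2}2^{-j}=(m_j2^{-2j/3})^{3/2}\le\bigl(\sum_i q_i^{2/3}\bigr)^{3/2}=\|q\|_{2/3}$, and the number of buckets is absorbed into the polylog. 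Your closing remark that you would "track constants so the final bound reads $\tilde O(\|q\|_{2/3}/\eps^2)$" is papering over exactly this discrepancy between $\|q\|_{2/3}^{2/3}$ and $\|q\|_{2/3}$.

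Two further points you omit that are needed for the above to go through. First, before testing conditionals you must verify that $\|p[B_j]\|_1$ is close to $\|q[B_j]\|_1$ (the paper estimates it to additive accuracy $\eps/(10\log(n/\eps))$ with $O(\log^2(n/\eps)/\eps^2)$ samples and rejects otherwise); without this check the conditional test is not sound (mass can shift between buckets while each conditional stays equal) and the rejection-sampling cost is not controlled. Second, there is no ``$\mathrm{Poi}$-style sampling cost to build $S_j$'' in this problem: $q$ is explicit, so any splitting multiset is deterministic; you appear to be importing the unknown-$q$ machinery of the closeness tester where it is neither needed nor helpful.
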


\begin{proof}
For $j=0,\ldots,k$ with $k=\lceil 2\log_2(10n/\eps) \rceil,$
let $S_j \subseteq [n]$ be the set of coordinates $i \in [n]$ so that $q_i \in (2^{-j-1},2^{-j}].$
Let $S_\infty$ be the set of coordinates $i \in [n]$ so that $q_i < 2^{-k-1} \leq  \eps/(10 n).$
We note that if $\|p-q\|_1 > \eps,$ then $\|p[S_j] - q[S_j]\|_1 \gg \eps/\log(n/\eps)$ for some $j.$
We claim that it suffices to design a tester that distinguishes between the cases
that $p[S_j]=q[S_j]$ and $\|p[S_j] - q[S_j]\|_1 \gg \eps/\log(n/\eps)$ with probability at least $2/3.$
Repeating such a tester $O(\log\log(n/\eps))$ times amplifies its probability of success
to $1-1/\log^2(n/\eps)$. By a union bound over $j$, all such testers are correct with probability at least $2/3.$
To show completeness, note that if $p=q$, 
then $p[S_j]=q[S_j]$ for all $j$, and therefore all testers output ``YES''. 
For soundness, if $\|p-q\_1 \geq \eps$, there exists a $j$ such that 
$\|p[S_j] - q[S_j]\|_1 \gg \eps/\log(n/\eps)$, and therefore the corresponding tester returns ``NO''.

To test whether $\|p[S_j] - q[S_j]\|_1 \gg \eps/\log(n/\eps)$
versus $p[S_j]=q[S_j],$ we proceed as follows:
We first use $O(\log^2(n/\eps)/\eps^2)$ samples to approximate $\|p[S_j]\|_1$
to within additive error $\eps/(10\log(n/\eps)).$
If $\|q[S_j]\|_1$ is not within the range of possible values,
we determine that $p[S_j]\neq q[S_j]$. 
Otherwise, we consider the conditional distributions $(p|S_j)$, $(q|S_j)$
and describe a tester to distinguish between the cases that $(p|S_j) = (q|S_j)$ and
$\|(p|S_j)- (q|S_j)\|_1 \gg \eps/(\log(n/\eps)\|q[S_j]\|_1)$.
Note that we can assume that $\|q[S_j]\|_1 \gg \eps/\log(n/\eps)$,
otherwise there is nothing to prove. We note that this necessarily fails to happen if $j=\infty.$

Let $m_j=|S_j|$. We have that $\|q[S_j]\|_1 =\Theta(m_j 2^{-j})$ and $\|(q|S_j) \|_2 = \Theta(m_j^{-1/2})$.
Therefore, using the tester from Lemma~\ref{L2TesterImprovedLem},
we can distinguish between $(p|S_j) = (q|S_j)$ and $\|(p|S_j)- (q|S_j)\|_1 \gg \eps' := \eps/(\log(n/\eps)\|q[S_j]\|_1)$ using
$O(\|(q|S_j)\|_2 \cdot m_j /\eps'^2) = O(m_j^{5/2}4^{-j}\log^2(n/\eps)/\eps^2)$.
samples from $(p|S_j)$. The probability that a sample from $p$ lies in $S_j$ is 
$\|p[S_j]\|_1 \gg \|q[S_j]\|_1 \gg m_j 2^{-j}$.
Using rejection sampling, we can get a sample from $(p|S_j)$ using $O(2^j/m_j)$ samples from $p$.
Therefore, the number of samples from $p$ needed to make the above determination is
$O(m_j^{3/2}2^{-j} \log^2(n/\eps)/\eps^2)$.

In summary, we have described a tester that distinguishes between $p=q$ and $\|p-q\|_1>\eps$ with sample complexity
$\textrm{polylog}(n/\eps) \cdot O((1+ \max_j(m_j^{3/2} 2^{-j}))/\eps^2).
$
We note that
$$
\|q\|_{2/3} \geq \max_j \left(\littlesum_{i\in S_j} q_i^{2/3} \right)^{3/2} \geq \max_j(m_j 2^{-2j/3})^{3/2} = \max_j (m_j^{3/2} 2^{-j}).
$$
Therefore, the overall sample complexity is $O(\|q\|_{2/3}\textrm{polylog}(n/\epsilon)/\epsilon^2)$ as desired.
\end{proof}

\newpage

We now show how to use our reduction-based approach to obtain 
the first nearly instance-optimal algorithm for testing closeness between two unknown distributions. 
Note that the algorithm of Proposition~\ref{prop:inst-opt-fixed} crucially exploits
the a priori knowledge of the explicit distribution. In the setting where both distributions
are unknown, this is no longer possible. At a high-level, our adaptive closeness testing algorithm
is similar to that of Proposition~\ref{prop:inst-opt-fixed}: We start by partitioning $[n]$ 
into categories based on the approximate mass of one of the two unknown distributions, 
say $q$, and then we run an $\ell_2$-tester independently on each category.
A fundamental difficulty in our setting is that $q$ is unknown. 
Hence, to achieve this, we will need to take samples from $q$
and create categories based on the number of samples coming from each bin.

To state our result, we need the following notation:

\begin{definition} \label{def:small-elements}
Let $q$ be a discrete distribution and $x>0$. We denote by $q^{<x}$ the pseudo-distribution 
obtained from $q$ by setting the probabilities of all domain elements 
with probability at least $x$ to $0$.
\end{definition}

The main result of this subsection is the following:

\begin{proposition} \label{prop:adaptive}
Given sample access to two unknown distributions, $p, q$ over $[n]$
and $\eps>0$, there exists a computationally efficient algorithm that draws an expected
$$
\tilde O(\min_{m>0} ( m +\|q^{<1/m}\|_0\|q^{<1/m}\|_2/\eps^2 +\|q\|_{2/3}/\eps^2))
$$
samples from each of $p$ and $q$,
and distinguishes with probability $2/3$ between $p=q$ and $\|p-q\|_1 \geq \eps$.
\end{proposition}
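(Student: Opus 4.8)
The plan is to mimic the structure of Proposition~\ref{prop:inst-opt-fixed}, but to build the partition of $[n]$ adaptively from samples of $q$, since $q$ is no longer explicitly known. First, I would reduce to the case where the target sample count is a fixed parameter $m$: run the algorithm with a geometrically increasing sequence of guesses $m = 1, 2, 4, \ldots$ and, for each $m$, either output an answer or ``give up and double $m$''; the key structural claim is that for $m$ larger than the (unknown) optimal value $m^* = \argmin{m>0}(\cdots)$, the algorithm succeeds with good probability, and that the expected total sample count telescopes to $\tilde O$ of the optimum. So fix $m$. Draw $\mathrm{Poi}(m)$ samples from $q$ and let $c_i$ be the number of times element $i$ appears. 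Partition $[n]$ into buckets $S_j = \{ i : c_i \in (2^{j-1}, 2^j]\}$ for $j = 0, 1, \ldots, O(\log m)$, plus a ``light'' bucket $S_\infty = \{i : c_i = 0\}$. Standard Poisson concentration gives that, with high probability, for $i \in S_j$ (with $j$ finite and $c_i$ not too small) the true mass satisfies $q_i = \Theta(2^j/m)$, so $\|(q|S_j)\|_2 = \Theta(|S_j|^{-1/2})$ just as in the fixed case; and every element with $q_i \gtrsim (\log m)/m$ lands in some finite bucket whp, so the elements surviving in $S_\infty$ all have $q_i \lesssim (\log m)/m$.

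The second step is to run, for each bucket $S_j$ (including $S_\infty$), essentially the same per-category subroutine as in Proposition~\ref{prop:inst-opt-fixed}: first estimate $\|p[S_j]\|_1$ and $\|q[S_j]\|_1$ to additive error $\eps/\mathrm{polylog}$; if these disagree, output ``NO''; otherwise use rejection sampling to draw from the conditional distributions $(p|S_j)$ and $(q|S_j)$, both of which now have $\ell_2$-norm $\Theta(|S_j|^{-1/2})$ (for finite $j$, using the bucket estimate of $q_i$; for $S_\infty$, using the split-distribution trick of Lemma~\ref{splitL2Lem}(ii) applied within $S_\infty$), and invoke the $\ell_2 \to \ell_1$ tester of Lemma~\ref{L2TesterImprovedLem}. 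As in the fixed-distribution case, completeness is immediate ($p = q \Rightarrow p[S_j] = q[S_j]$ for all $j$), and soundness follows because $\|p-q\|_1 > \eps$ forces $\|p[S_j] - q[S_j]\|_1 \gg \eps/\log(n/\eps)$ for some $j$ (here the partition is into only $O(\log m) = O(\log n)$ buckets). Amplify each per-bucket tester to failure probability $1/\mathrm{polylog}$ and union-bound over buckets.

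The third step is the sample-complexity accounting, which I expect to be the main obstacle. For a finite bucket $S_j$ with $|S_j| = m_j$ and $q_i = \Theta(2^j/m)$ on it, the analysis of Proposition~\ref{prop:inst-opt-fixed} gives that the number of samples from $p$ needed for that bucket is $\tilde O(m_j^{3/2} (2^j/m)^{-1} \cdot (2^j/m) \cdot 1/\eps^2 \cdot \ldots)$ — more carefully, $\tilde O( m_j^{3/2} \cdot (q\text{-mass of } S_j) / (\text{bucket contribution to } \eps) \cdots )$, and summing over finite buckets reproduces the $\|q\|_{2/3}/\eps^2$ term via the same inequality $\|q\|_{2/3} \geq \max_j (m_j^{3/2} 2^{-j})$ used there (now with $2^{-j}$ replaced by $\Theta(2^j/m)$ appropriately). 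The new ingredient is the light bucket $S_\infty$: there all surviving elements have mass $< 1/m$, so (restricting attention to $S_\infty$) we are exactly in the worst-case $\ell_2$ regime with $\|q[S_\infty]\|_2 \le \|q^{<1/m}\|_2$ and support size $\|q^{<1/m}\|_0$, and Lemma~\ref{L2TesterImprovedLem} costs $\tilde O(\|q^{<1/m}\|_0 \cdot \|q^{<1/m}\|_2 / \eps^2)$ samples; the initial $\mathrm{Poi}(m)$ draws from $q$ account for the leading ``$m$'' term. Adding the $m$, the $S_\infty$ term, and the finite-bucket total gives the claimed bound for this fixed $m$; optimizing over $m$ (via the doubling wrapper and a telescoping argument showing the wrapper's expected cost is dominated by the last, successful, scale up to a constant) yields $\tilde O(\min_{m>0}(\cdots))$. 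The delicate points to get right will be: handling the buckets with very small $c_i$ (where the $\Theta$-estimate of $q_i$ degrades) by merging them into $S_\infty$; making the ``expected'' sample bound rigorous despite the Poissonization and rejection sampling (rejection sampling from $(p|S_j)$ costs $O(2^j/m_j)$ in expectation, and one must argue the total expectation rather than a high-probability bound); and verifying that the $\mathrm{polylog}$ factors absorb all the $O(\log n)$ union bounds and amplifications.
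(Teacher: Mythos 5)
Your proposal follows essentially the same route as the paper's proof: bucket $[n]$ by empirical counts from $\tilde O(m)$ samples of $q$, compare bucket masses, run the black-box $\ell_2$-tester on the conditional distributions of each bucket via rejection sampling (bounding the heavy buckets by $\|q\|_{2/3}/\eps^2$ through the same $m_j^{3/2}$-type inequality and the light bucket by $\|q^{<1/m}\|_0\|q^{<1/m}\|_2/\eps^2$), and wrap everything in a doubling search over $m$. The minor differences (zero-count light bucket from $\mathrm{Poi}(m)$ samples versus the paper's $\log n$-count threshold from $\Theta(m\log^2 n)$ samples, and per-bucket mass checks versus a coarse test on the category-level distributions) only shift polylogarithmic factors, which the $\tilde O$ absorbs.
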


Before we proceed with the proof of Proposition~\ref{prop:adaptive}
some comments are in order. First,
note that since $\|q^{<1/m}\|_2 \leq 1/\sqrt{m}$,
taking $m=\min(n,n^{2/3}/\eps^{4/3})$ attains the complexity of the standard $\ell_1$-closeness testing
algorithm to within logarithmic factors. (It should be noted that the logarithmic factors in the above proposition 
can be removed  by combining the required $\ell_2$-testers into a single tester, 
using as a test statistic a linear combination of the individual test statistics.)

We now illustrate with a number of examples that the algorithm of Proposition~\ref{prop:adaptive}
performs substantially better than the worst-case optimal $\ell_1$-closeness tester in a number of interesting cases. 
First, consider the case that the distribution $q$ is essentially supported on relatively heavy bins. It is easy to see that 
the sample complexity of our algorithm will then be roughly proportional to $\|q\|_{2/3}/\eps^2$.
We remark that this bound is essentially optimal, even for the easier setting that $q$ had been given to us explicitly. 
As a second example, consider the case that $q$ is roughly uniform. In this case, we have that $\|q\|_2$ will be small, 
and our algorithm will have sample complexity $\tilde O(\sqrt{n}/\eps^2)$. 

Finally, consider the case that the bins of the distribution $q$ can be partitioned into two classes: 
they have mass either approximately $1/n$  or approximately $x>1/n$. For this case, 
our above algorithm will need 
$$\tilde O(\min(x^{-1}+\sqrt{n}/\eps^2,nx^{-1/2}/\eps^2)) $$ samples.
(This follows by taking $m=2/x$ in the first case, and $m=1$ in the second case.)
We remark that this sample bound can be shown to be optimal for such distributions 
(up to the logarithmic factor in the $\tilde O$).
Also note that the aforementioned sample upper bound is strictly better 
than the worst-case bound of $n^{2/3}/\eps^{4/3}$, unless $x$ equals $n^{-2/3}\eps^{4/3}$.

We are now ready to give the proof of Proposition~\ref{prop:adaptive}.

\begin{proof}
We begin by describing and analyzing a testing algorithm that 
attains the stated sample complexity for a \emph{given} value of $m$.
We then show how to adapt this algorithm to complete the proof.

For the case of fixed $m$, our algorithm is given in the following pseudo-code:

\vspace{0.2cm}

\fbox{\parbox{6in}{
{\bf Algorithm} Test-Closeness-Adaptive\\

\vspace{-0.2cm}

\textbf{Input:} Sample access to distributions $p$ and $q$ supported on $[n]$ and $m,\eps>0.$

\textbf{Output:} ``YES'' with probability at least $2/3$ if $p=q$, ``NO'' with probability at least $2/3$ if $\|p-q\|_1\geq \eps.$

\begin{enumerate}

\item Let $C$ be a sufficiently large constant. Draw $C m\log^2(n)$ independent samples from $q$.

\item Divide $[n]$ into $B \eqdef O(\log(m\log(n)))$ categories in the following way:
A bin is in category $S_{-\infty}$ if at most $\log(n)$ of the samples from the previous step landed in the bin.
Otherwise, if $a$ samples landed in the bin, place it in category $S_{\lfloor \log_2(a) \rfloor}$.

\item Let $p_S$ and $q_S$ be the distributions over categories for $p$ and $q$.
Use the standard $\ell_1$-tester to test whether $p_S=q_S$ versus $\|p_S-q_S\|_1 \geq \eps/C$
with error probability at most $1/10$. If they are unequal, return ``NO''.

\item Approximate the probability mass $q(S_{-\infty})$ up to additive accuracy $\eps/C$.
If it is less than $2\eps/C$, ignore this category in the following.

\item For each category $S_a$ that is non-empty (and not $S_{-\infty}$ thrown out by the last step):

\begin{enumerate}
\item Approximate $q(S_a)$ to within a factor of $2$ with error probability at most $1/(100B)$.
\item Verify that $p(S_a)$ is within a factor of $2$ of this approximation. Otherwise, return ``NO''.
\item Approximate $\|(q|{S_a})\|_2$ to within a factor of $2$ with error probability $1/(100B)$.
\item Draw samples from $p$ and $q$ until they each have at least $C |S_a| \|(q|{S_a})\|_2 B^3/\eps^2$ many samples from $S_a$.
\item Use these samples along with the $\ell_2$-tester of Lemma~\ref{L2TesterImprovedLem} to distinguish between
the cases $(p|{S_a})=(q|{S_a})$ and $\|(p|{S_a})-(q|{S_a})\|_1 \geq \eps/(C B q(S_a))$
with error probability at most $1/(100B)$.
\end{enumerate}
\item In the latter case, return ``NO''. Otherwise, return ``YES''.

\end{enumerate}
}}

\vspace{.4cm}

To analyze the above algorithm, we note that it suffices to assume
that all of the intermediate tests for which the hypotheses are satisfied,
return the correct results. We also note that it suffices to consider the case that $m\leq n/\eps^2$
(as otherwise the empirical distribution is already an $O(m)$-sample algorithm).

We start by noting that with high probability over the samples taken in Step~1,
for any bin $i$ with $q(i)=x$ the number of samples drawn from this bin is at most $xCm\log^3(n)$,
and if $x>1/m$ the number of samples is at least $xCm\log(n)$.
Furthermore, if $x<1/(Cm)$, $i$ lies in category $S_{-\infty}$ with high probability.
We assume throughout the rest of this analysis that this event holds for all bins.

We now prove correctness.
Assuming that all intermediate tests are correct,
it is easy to see that if the algorithm returns ``NO'',
then $p$ and $q$ must be unequal.
We need to argue the converse. If our algorithm returns ``YES'', it must be the case that
\begin{itemize}
\item $\|p_S-q_S\|_1 < \eps/C$.
\item For each $a$ with $S_a$ non-empty, except for possibly $S_{-\infty}$,
we have that $\|(p|{S_a})-(q|{S_a})\|_1 <\eps/(C B q(S_a))$.
\end{itemize}
We claim that if this holds, then $\|p-q\|_1 < \eps$.
This is because, after modifying $p$ by at most $\eps/C$,
we may keep the restrictions to each category the same
and make it so that $p_S=q_S$. Once this is true, it will be the case that
$$
\|p-q\|_1 = \sum_a q(S_a)\|(p|{S_a})-(q|S_a)\|_1 < \eps/C.
$$
This completes the proof of correctness.

It remains to analyze the sample complexity.
Step~1 clearly takes at most an appropriate number of samples.
Step~3 requires at most $O(B/\eps^2)$ samples, which is sufficient for our purposes.
For Step~5, we may analyze the number of samples required for each category, $S_a$ separately.

Note that approximating $\|(q|S_a)\|_2$ to within a factor of $2$ requires
at most $O(\|q[S_a]\|_2^{-1}\log(B))$ samples.
If $a\geq 0$ and $S_a$ in non-empty, it will consist of at least one bin of mass at least $(1/(Cm))$,
so the sample complexity will be sufficiently small. 
For $a=-\infty$, $\|q[S_a]\|_2^{-1} \leq \|q[S_a]\|_0\|q[S_a]\|_2$, and this is within our desired sample complexity bound.

We note that $C |S_a| \|(q|{S_a})\|_2 B^3/\eps^2$ samples are indeed enough to run our $\ell_2$-tester.
We can obtain this many samples by taking at most
$$
\polylog(n/\eps) |S_a| \|(q|{S_a})\|_2/q(S_a)/\eps^2 = \polylog(n/\eps) |S_a| \|q[{S_a}]\|_2/\eps^2
$$
samples from each of $p$ and $q$.
Now, if $a>0$, all bins in $S_a$ have mass $x\polylog(n)$ for some appropriate value of $x$.
We will then have that
$$
|S_a| \|q[{S_a}]\|_2 = x |S_a|^{3/2}\polylog(n) \leq \|q[S_a]\|_{2/3}\polylog(n) \leq \|q\|_{2/3}\polylog(n),
$$
which is sufficient.

Finally, for $a=-\infty$, the number of samples required is
$$
\polylog(n/\eps) |S_a| \|q[S_a]\|_2 /\eps^2 \leq \polylog(n/\eps) \cdot \|q^{<1/m}\|_0 \cdot \|q^{<1/m}\|_2 /\eps^2.
$$
This completes the proof fo the case of a fixed value of $m$.

In order to obtain the minimum over all $m$ we proceed as follows:
First, since the product $\|q^{<1/m}\|_0 \cdot \|q^{<1/m}\|_2$
is decreasing in $m$, the minimum is attained (up to a constant factor)
by taking $m$ to be the smallest power of $2$
so that $m>\|q\|_{2/3}/\eps^2+\|q^{<1/m}\|_0 \|q^{<1/m}\|_2/\eps^2$.
Therefore, it suffices to iterate the above procedure taking $m$
to be increasingly large powers of $2$
until the algorithm terminates with $O(m\polylog(n/\eps))$ samples.
This completes the proof.
\end{proof}

\subsubsection{Testing Closeness in Hellinger Distance} \label{ssec:hellinger-upper}

In this subsection, we use our reduction-based approach to obtain a
nearly sample-optimal algorithm for testing closeness of two unknown distributions 
with respect to the Hellinger distance. We prove:
\begin{proposition} \label{prop:hellinger-upper}
There exists an algorithm that given sample access to two distributions $p$ and $q$ supported on $[n]$ 
draws $\tilde O(\min(n^{2/3}/\eps^{4/3},n^{3/4}/\eps))$ samples from each and distinguishes 
between the cases $p=q$ and $H^2(p,q) \geq \eps$ with probability at least $2/3$.
\end{proposition}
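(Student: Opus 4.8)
The plan is to run the reduction recipe of Section~\ref{ssec:red} with the $\ell_1$-norm replaced by squared Hellinger distance. Two structural facts make this go through. First, the split operation preserves squared Hellinger distance: for any multiset $S$ of $[n]$, $H^2(p_S,q_S)=H^2(p,q)$, since for each original coordinate $i$ we have $\sum_{j=1}^{a_i}\sqrt{(p_i/a_i)(q_i/a_i)}=\sqrt{p_iq_i}$, so the Bhattacharyya coefficient $\sum_i\sqrt{p_iq_i}$ (hence $H^2=1-\sum_i\sqrt{p_iq_i}$) is unchanged, exactly like $\|p_S-q_S\|_1=\|p-q\|_1$ in Fact~\ref{splitDistributionFactsLem}. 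Second, for \emph{flat} distributions a Hellinger gap becomes an $\ell_2$ gap: using the standard inequality $d_{\mathrm{TV}}\ge H^2$, the hypothesis $H^2(p,q)\ge\eps$ gives $\|p_S-q_S\|_1\ge 2\eps$, and then Cauchy--Schwarz gives $\|p_S-q_S\|_2^2\ge \|p_S-q_S\|_1^2/(n+|S|)\ge 4\eps^2/(n+|S|)$.

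With these facts, the moderate-$\eps$ regime is identical to closeness testing: take $\mathrm{Poi}(k)$ samples from $q$ (and from $p$) to build $S$, so that w.h.p. $|S|=O(n)$ and $\max\{\|p_S\|_2,\|q_S\|_2\}=O(1/\sqrt k)$ by Lemma~\ref{splitL2Lem}; then invoke Lemma~\ref{L2TesterImprovedLem} on $p_S,q_S$ with $b=O(1/\sqrt k)$ and target distance $2\eps$. By Fact~\ref{splitDistributionFactsLem} and the two facts above this distinguishes $p=q$ from $H^2(p,q)\ge\eps$ using $O\!\big((n+|S|)/(\sqrt k\,\eps^2)\big)$ samples; adding the $O(k)$ samples spent on $S$ and taking $k=\min(n,n^{2/3}/\eps^{4/3})$ yields the $\tilde O(n^{2/3}/\eps^{4/3})$ term.

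The step I expect to be the main obstacle is the second term $n^{3/4}/\eps$, which is strictly smaller than the $\sqrt n/\eps^2$ that the plain reduction above produces once $\eps<n^{-1/4}$. The point is that $H^2(p,q)\ge\eps$ is a much stronger promise than $\|p-q\|_1\ge 2\eps$: it excludes exactly the ``shallow, spread-out'' discrepancies that are the hard $\ell_1$ instances, leaving only discrepancies that are either concentrated on heavy bins or carry a non-negligible multiplicative gap. To exploit this I would, as in Proposition~\ref{prop:inst-opt-fixed}, bucket the domain by the approximate $q$-mass of the bins (using $\tilde O(1/\theta)$ preliminary samples from $q$), separating the $\le 1/\theta$ heavy bins (mass $\ge\theta$) from the light ones, and run a separate sub-tester on each scale, amplifying and union-bounding over the $O(\log(n/\eps))$ scales. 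On the light side the bins are already $\theta$-flat, so the $\ell_2$ reduction applies verbatim with $b=O(\sqrt\theta)$; on the heavy side one is testing closeness of two distributions over $O(1/\theta)$ \emph{known} buckets in Hellinger distance, for which a $\chi^2$-flavoured statistic should suffice with only $\tilde O(\sqrt{1/\theta}/\eps+(1/\theta)^{2/3}/\eps^{4/3})$ samples — the crucial feature being the linear dependence on $1/\eps$ rather than $1/\eps^2$. Balancing the heavy- and light-side costs over $\theta$ should give $\tilde O(n^{3/4}/\eps)$, and combining with the first bound gives $\tilde O(\min(n^{2/3}/\eps^{4/3},n^{3/4}/\eps))$. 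The delicate points are pinning down the exact sample complexity of the heavy-bucket Hellinger closeness sub-tester and verifying that decomposing $H^2$ additively across the heavy/light partition and across mass-scales costs only logarithmic factors.
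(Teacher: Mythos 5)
Your first half is fine and matches the paper: for the $\tilde O(n^{2/3}/\eps^{4/3})$ term the paper simply invokes the $\ell_1$-closeness tester together with $\|p-q\|_1 \geq H^2(p,q)$, which is exactly your reduction (your observation that splitting preserves $H^2$ is correct but not even needed for this part). The gap is in the $n^{3/4}/\eps$ term, and it is real. With your accounting, the light part (bins of $q$-mass $<\theta$) is tested in $\ell_1$ at accuracy $\eps$ via the $\ell_2$-tester with $b=O(\sqrt\theta)$, costing $\Theta(n\sqrt\theta/\eps^2)$, while the bucketing itself costs $\tilde\Omega(1/\theta)$ samples. In the regime $\eps \ll n^{-1/4}$, where $n^{3/4}/\eps$ is the smaller term, these cannot be balanced: $n\sqrt\theta/\eps^2 \le n^{3/4}/\eps$ forces $\theta \le n^{-1/2}\eps^2$, i.e. $1/\theta \ge n^{1/2}/\eps^2 > n^{3/4}/\eps$ (and the heavy-side term $(1/\theta)^{2/3}/\eps^{4/3}$ also blows past $n^{3/4}/\eps$). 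So "balancing the heavy- and light-side costs over $\theta$" does not give $\tilde O(n^{3/4}/\eps)$ as claimed. The missing idea is how the paper handles the light category: with $m=n^{3/4}/\eps$ samples defining the threshold $1/m$, the light part has total mass $q(S) \le n/m = n^{1/4}\eps$, and using $H^2(p[S],q[S]) \le \|p[S]-q[S]\|_1$ one only needs an $\ell_1$-test of the \emph{conditional} distributions at the much larger accuracy $\eps/q(S)$; translated back to samples from $p,q$ this costs $O\bigl(\max(n^{2/3}q(S)^{1/3}/\eps^{4/3},\, n^{1/2}q(S)/\eps^2)\bigr) = O(n^{3/4}/\eps)$. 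In other words, the saving on the light side comes from its small total probability mass, not from its small $\ell_2$-norm, and your plan never uses that.

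The heavy side of your plan is closer to workable but is asserted rather than proved: you posit a $\chi^2$-flavoured Hellinger closeness tester on the $O(1/\theta)$ heavy bins with linear $1/\eps$ dependence. The paper makes this rigorous per mass-scale: on a category $S$ whose bins all have $q$-mass within $\polylog$ factors of $x$, one has $H^2(p[S],q[S]) \le \polylog\cdot\|p[S]-q[S]\|_2^2/x$, so a Hellinger gap of $\eps/\log$ becomes an $\ell_2^2$ gap of $\Omega(x\eps/\polylog)$, and the standard $\ell_2$-tester (after padding with many empty bins to renormalize) needs only $\tilde O(\|q[S]\|_2/(x\eps)) = \tilde O(\sqrt{|S|}/\eps) \le \tilde O(\sqrt n/\eps)$ samples per scale -- this is precisely the source of the $1/\eps$ (rather than $1/\eps^2$) dependence you were hoping for, and it would also substantiate the $\tilde O(\sqrt{1/\theta}/\eps)$-type bound you wanted for the heavy buckets. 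With that conversion supplied and the light part handled through its small total mass as above, the argument closes; as written, the proposal does not.
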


\begin{proof}
First, note that an $O(n^{2/3}/\eps^{4/3})$ upper bound 
follows immediately from the upper bound on $\ell_1$-testing and the fact that $\|p-q\|_1 \geq H^2(p,q)$.
To prove the $\tilde O(n^{3/4}/\eps)$ lower bound, 
we use ideas similar to those in our adaptive closeness tester from the previous subsection. 
In particular, let $m=n^{3/4}/\eps$. 
We take $\tilde O(m)$ samples from $q$ to divide $[n]$ 
into $O(\log(m))$ categories so with high probability we have: 
(i) in all but one category, each bin in the category has the same mass under $q$ 
up to $\polylog(m)$ factors, and (ii) all bins in the remaining category 
have mass at most $1/m$. 
We can then verify using $\tilde O(n^{3/4}/\eps)$ samples that either 
$p_S\neq q_S$ or that $\|p_S-q_S\|_1 < \eps/100$. 
In the former case, we output ``NO'', while in the latter case
it suffices to distinguish for each category 
between the cases that $p=q$ on that category, 
and that the contribution of that category to $H^2(p,q)$
 is at least $\eps/(C\log(m))$, for $C$ some sufficiently large constant.

Suppose that we have a category $S$ so that for each bin in $S$ 
the mass of this bin under $q$ is within a polylog factor of $x$. 
Then, $H^2(p[S],q[S]) < \polylog(m)\|p[S]-q[S]\|_2^2/x$. 
Therefore, it suffices to distinguish between the cases $p[S]=q[S]$ 
and $\|p[S]-q[S]\|_2^2 \geq x \eps/\polylog(m)$. 
This can be done as follows: We define the distribution $p'$ by taking a sample from $p$, 
leaving it where it is if the sample lies in $S$,
and randomly and uniformly placing it in one of $N$ new bins (for some very large $N$) otherwise. 
Defining $q'$ similarly, we note that $\|p'-q'\|_2 =\|p[S]-q[S]\|_2 + O(1/\sqrt{N})$. 
Therefore, we can distinguish between $p=q$ and $\|p[S]-q[S]\|_2 >\delta$ using 
$O(\|q[S]\|_2/\delta^2)$ samples with our standard $\ell_2$-tester. 
In summary, the number of samples required to perform this test is
$$
\polylog(m)\|q[S]\|_2 / (x \eps) \leq \tilde O(n^{1/2} x /(x\eps)) = \tilde O(n^{1/2}/\eps).
$$
Finally, we need to consider the case of the last category. 
Here, we use that $\|p[S] - q[S] \|_1 \geq H^2(p[S],q[S])$, 
and therefore it suffices to distinguish between $p[S]=q[S]$ 
and $\|p[S]-q[S]\|_1 > \eps$. Equivalently, it suffices to distinguish 
between $(p|S)=(q|S)$ and $\|(p|S)-(q|S)\|_1 > \eps / q(S).$ This can be achieved using
$$
O(\max(n^{2/3}q(S)^{4/3}/\eps^{4/3},n^{1/2}q(S)^2/\eps^2))
$$
samples from the conditional distributions. This is at most
$$
O(\max(n^{2/3}q(S)^{1/3}/\eps^{4/3},n^{1/2}q(S)/\eps^2))
$$
samples from the original distribution. Since $q(S) < n/m = n^{1/4}\eps$, this quantity is at most
$$
O(n^{3/4}/\eps) \;,
$$
which completes the proof.
\end{proof}

\subsubsection{Independence Testing} \label{sec:ind}

In this subsection we study the problem of testing independence of a $d$-dimensional discrete distribution $p$.
More specifically, we want to design a tester that distinguishes between the case that $p$ is a product distribution
versus $\eps$-far from any product distribution, in $\ell_1$-norm. We start by giving an optimal independence
tester for the two-dimensional case, and then handle the case of arbitrary dimension.

Our algorithm for testing independence in two dimensions is as follows:

\vspace{0.2cm}

\fbox{\parbox{6in}{
{\bf Algorithm} Test-Independence-2D\\

\vspace{-0.2cm}

\textbf{Input:} Sample access to a distribution $p$ on $[n]\times [m]$ with $n \geq m$ and $\eps>0.$

\textbf{Output:}``YES'' with probability at least $2/3$ if the coordinates of $p$ are independent,
``NO'' with probability at least $2/3$ if $p$ is $\eps$-far from any product distribution on $[n]\times[m]$.

\begin{enumerate}

\item Let $k=\min(n,n^{2/3}m^{1/3}\eps^{-4/3}).$

\vspace{-0.2cm}

\item Let $S_1$ be a multiset in $[n]$ obtained by taking $\mathrm{Poi}(k)$ samples from $p_1=\pi_1(p)$.
Let $S_2$ be a multiset in $[m]$ obtained by taking $\mathrm{Poi}(m)$ samples from $p_2=\pi_2(p)$.
Let $S$ be the multiset of elements of $[n]\times[m]$ so that
\vspace{-0.2cm}
    \begin{align*}
    & 1+\{\textrm{Number of copies of }(a,b)\textrm{ in }S \} = \\ & (1+\{\textrm{Number of copies of }a\textrm{ in }S_1 \})(1+\{\textrm{Number of copies of }b\textrm{ in }S_2 \}).
    \end{align*}

\vspace{-0.3cm}

\item Let $q$ be the distribution on $[n]\times [m]$ obtained by taking $(x_1,y_1),(x_2,y_2)$ independent samples from $p$ and returning $(x_1,y_2)$.
Run the tester from Lemma \ref{L2TesterImprovedLem} to distinguish between the cases
$p_{S}=q_{S}$ and $\|p_{S}-q_{S}\|_1 \geq \eps.$

\end{enumerate}
}}

\vspace{0.2cm}

For correctness, we note that by Lemma \ref{splitL2Lem},
with probability at least $9/10$ over our samples from $S_1$ and $S_2$, all of the above hold:
(i) $|S_1|=O(n)$ and $|S_2|=O(m),$ and (ii) $\|(p_1)_{S_1}\|_2^2 = O(1/k)$, $\|(p_2)_{S_2}\|_2^2 = O(1/m).$
We henceforth condition on this event. We note that the distribution $q$ is exactly $p_1\times p_2$. Therefore, if the coordinates of $p$ are independent, then
$p=q$. On the other hand, since $q$ has independent coordinates, if $p$ is $\eps$-far from any product distribution, $\|p-q\|_1\geq \eps$.
Therefore, it suffices to distinguish between $p=q$ and $\|p-q\|_1\geq \eps$. By Fact \ref{splitDistributionFactsLem}, this is equivalent to distinguishing between
$p_{S}=q_{S}$ and $\|p_{S}-q_{S}\|_1 \geq \epsilon.$ This completes correctness.

We now analyze the sample complexity.
We first draw samples when picking $S_1$ and $S_2$. With high probability, the corresponding number of samples is
$O(m+k)=O(\max(n^{2/3}m^{1/3}\eps^{-4/3},\sqrt{nm}/\eps^2)).$
Next, we note that $q_S = (p_1)_{S_1}\times (p_2)_{S_2}.$
Therefore, by Lemma~\ref{L2TesterImprovedLem}, the number of samples drawn in the last step of the algorithm is at most
\begin{align*}
O(nm\|q_{S}\|_2 /\eps^2) & = O(nm\|(p_1)_{S_1}\times(p_2)_{S_2}\|_2/\eps^2) = O(nm\|(p_1)_{S_1}\|_2\|(p_2)_{S_2}\|_2/\eps^2)\\
&= O(nm k^{-1/2}m^{-1/2}/\eps^2)  = O(\max(n^{2/3}m^{1/3}\eps^{-4/3},\sqrt{nm}/\eps^2)).
\end{align*}
Drawing a sample from $q$ requires taking only two samples from $p,$ which completes the analysis.

\medskip

In the following proposition, we generalize the two-dimensional algorithm 
to optimally test independence in any number of dimensions.

\begin{proposition}
Let $p$ be a distribution on $\times_{i=1}^d [n_i].$
There is an algorithm that draws
$$O\left(\max_j \left(\left(\prod_{i=1}^d n_i\right)^{1/2}/\eps^2, n_j^{1/3}\left(\prod_{i=1}^d n_i\right)^{1/3}/\eps^{4/3}\right)\right)$$
samples form $p$ and with probability at least $2/3$
distinguishes between the coordinates of $p$ being independent and $p$ being $\eps$-far from any such distribution.
\end{proposition}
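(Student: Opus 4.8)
The plan is to lift the two-dimensional algorithm essentially verbatim. I would relabel the coordinates so that $n_1 = \max_i n_i$, write $N = \prod_{i=1}^d n_i$, and set $k = \min(n_1,\, n_1^{1/3} N^{1/3} \eps^{-4/3})$. The algorithm draws $\mathrm{Poi}(k)$ samples from the first marginal $\pi_1(p)$ to form a multiset $S_1 \subseteq [n_1]$ and, for each $i \in \{2,\dots,d\}$, draws $\mathrm{Poi}(n_i)$ samples from $\pi_i(p)$ to form $S_i \subseteq [n_i]$. It then takes $S$ to be the multiset of $\times_{i=1}^d [n_i]$ whose split multiplicities factor through the coordinates, i.e. $1 + \#\{(j_1,\dots,j_d)\in S\} = \prod_{i=1}^d \big(1 + \#\{j_i \in S_i\}\big)$; forms the product distribution $q = \pi_1(p)\times\cdots\times\pi_d(p)$ (a sample of $q$ being obtained from $d$ independent samples $x^{(1)},\dots,x^{(d)}$ of $p$ by returning $(x^{(1)}_1,\dots,x^{(d)}_d)$); and runs the $\ell_2$-tester of Lemma~\ref{L2TesterImprovedLem} to distinguish $p_S = q_S$ from $\|p_S - q_S\|_1 \geq \eps$.

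Correctness would follow exactly as in two dimensions: $q$ is a product distribution, so $p=q$ when $p$ is independent and $\|p-q\|_1 \geq \eps$ when $p$ is $\eps$-far from every product distribution, and by Fact~\ref{splitDistributionFactsLem} the same dichotomy transfers to $p_S, q_S$. The one structural fact I would need to check (a one-line computation, using that the uniform distribution on $[\prod_i a^{(i)}_{j_i}]$ is the product of the uniform distributions on the $[a^{(i)}_{j_i}]$) is that, because $S$ factors, $q_S = \prod_{i=1}^d (\pi_i(p))_{S_i}$ is itself a product distribution, so $\|q_S\|_2 = \prod_{i=1}^d \|(\pi_i(p))_{S_i}\|_2$.

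For the sample complexity I would argue as follows. By Lemma~\ref{splitL2Lem}(ii), Markov's inequality, and a union bound over the $d$ coordinates, with constant probability $|S_1| = O(k)$, $|S_i| = O(n_i)$ for $i \geq 2$, $\|(\pi_1(p))_{S_1}\|_2^2 = O(1/k)$, and $\|(\pi_i(p))_{S_i}\|_2^2 = O(1/n_i)$ for $i \geq 2$; condition on this. Then the split domain has size $\prod_{i=1}^d(n_i + |S_i|) = O(N)$ (treating $d$ as constant) and, using $\prod_{i\geq 2} n_i = N/n_1$, one gets $\|q_S\|_2 = O(\sqrt{n_1/(kN)})$. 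So Lemma~\ref{L2TesterImprovedLem} runs on $O\big(N \sqrt{n_1/(kN)}/\eps^2\big) = O(\sqrt{N n_1/k}/\eps^2)$ samples from each of $p_S, q_S$, each simulated from $O(d)$ samples of $p$; adding the $k + \sum_{i\geq 2} n_i$ samples used to form the splits, the total is $O\big(k + \sum_{i\geq 2} n_i + \sqrt{Nn_1/k}/\eps^2\big)$. Plugging in the two cases of $k = \min(n_1, n_1^{1/3}N^{1/3}\eps^{-4/3})$ gives $O(\max(\sqrt{N}/\eps^2,\, n_1^{1/3}N^{1/3}/\eps^{4/3}))$; in each case the overheads are dominated — when $k = n_1^{1/3}N^{1/3}\eps^{-4/3}$ every $n_j \leq n_1^{1/3}N^{1/3} \leq n_1^{1/3}N^{1/3}/\eps^{4/3}$ (from $N \geq n_1 n_j$ and $n_1 \geq n_j$), and when $k = n_1$ one has $n_1 \leq \sqrt{N}/\eps^2$. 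Since $\max_j n_j^{1/3} = n_1^{1/3}$ and both $\sqrt{N}$ and the factor $N^{1/3}$ are symmetric in the coordinates, this is exactly $O\big(\max_j\big((\prod_i n_i)^{1/2}/\eps^2, n_j^{1/3}(\prod_i n_i)^{1/3}/\eps^{4/3}\big)\big)$.

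The hard part, such as it is, is purely bookkeeping: verifying that splitting commutes with products so that $\|q_S\|_2 = \prod_i \|(\pi_i(p))_{S_i}\|_2$, and checking that in both regimes of $k$ the terms $k$ and $\sum_{i\geq 2} n_i$, as well as the $d$-dependent constants absorbed into $\prod_i(n_i+|S_i|) = O_d(N)$ and into the union bound, stay below $\max(\sqrt{N}/\eps^2, n_1^{1/3}N^{1/3}/\eps^{4/3})$. The only thing worth flagging is that the extremal coordinate in the second term is the one of largest $n_j$, which matches the $\max_j$ in the statement precisely because the first term and the $N^{1/3}$ factor do not privilege any coordinate.
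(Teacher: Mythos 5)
Your construction is exactly what the paper calls the ``obvious generalization'' of Test-Independence-2D, and the paper itself points out why it does not prove the proposition as stated: all of your $d$-dependent factors are swept into the $O(\cdot)$, but the claimed bound has no dependence on $d$, and the paper intends (and advertises) the result for arbitrary dimension, not just constant $d$. Concretely, the split domain has size $\prod_{i=1}^d (n_i+|S_i|)$, whose expectation is already $\prod_i (n_i + \E|S_i|) \approx 2^d \prod_i n_i$, so the $\ell_2$-tester of Lemma~\ref{L2TesterImprovedLem} is run on a domain of size $2^{O(d)} N$ and its sample complexity picks up a $2^{O(d)}$ factor (the paper states this explicitly: the obvious generalization ``incurs an additional $2^{O(d)}$ factor\dots hence is not optimal for super-constant $d$''). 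Your per-coordinate union bound makes things even worse on paper --- applying Markov to each $\|(\pi_i(p))_{S_i}\|_2^2$ with failure probability $1/(10d)$ gives $\|q_S\|_2^2 \leq (O(d))^d\, n_1/(kN)$ --- though that particular loss is repairable by using independence of the $S_i$ to bound $\E\bigl[\prod_i \|(\pi_i(p))_{S_i}\|_2^2\bigr] = \prod_i \E\|(\pi_i(p))_{S_i}\|_2^2$ and applying Markov once. The $2^{O(d)}$ blow-up of the split domain, however, is intrinsic to splitting every coordinate, so your argument only establishes the proposition for $d=O(1)$.

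The paper's proof avoids this by never splitting more than two or three (grouped) coordinates at a time and recursing to reduce the dimension. If the maximum in the bound is attained by the term with $j=1$, it runs the 2D tester viewing the domain as $[n_1]\times\bigl([n_2]\times\cdots\times[n_d]\bigr)$ to test whether coordinate $1$ is independent of the rest (up to $\eps/2$), and then recursively tests independence of $(p_2,\ldots,p_d)$ with parameter $\eps/2$; otherwise it greedily partitions the coordinates into three groups $S_1,S_2,S_3$ with $\prod_{j\in S_i} n_j \leq \sqrt{N}$, runs the 3-dimensional tester on the grouped distribution with parameter $\eps/4$, and recurses on each group. The recursive calls act on supports of size at most $\sqrt{N}$, so their cost $O(N^{1/3}/\eps^2)$ is dominated by the top-level call, and no factor growing with $d$ ever appears. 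If you want to salvage your writeup, either restrict the statement to constant $d$ (and say so), or add this recursion; the triangle-inequality bookkeeping (closeness to the product of the marginals composes across the recursion with the $\eps/2$, $\eps/4$ budgets) is the only new ingredient you would need to verify.
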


Roughly speaking, our independence tester in general dimension 
uses recursion to reduce to the $2$-dimensional case, in which case we may apply Test-Independence-2D. 
For the details, see the full version.

\begin{proof}
We can assume that all $n_i\geq 2$, for otherwise removing that term does not affect the problem.
We first note that the obvious generalization of {Test-Independence-2D
(that is, draw $\min(n_i, \max_j n_j^{1/3}(\prod_{i=1}^d n_i)^{1/3}/\eps^{4/3})$
samples from the $i$-th marginal and use them to subdivide the domain in that dimension;
then run the basic $\ell_2$-closeness tester between $p_S$ and the product of the marginals)}
allows us for any constant $d$ to distinguish between $p$ having independent coordinates
and $\|p-p^{\ast}\|_1>\eps,$ where $p^{\ast}$ is the product of the marginals of $p,$ with
arbitrarily small constant probability of failure. This generalization incurs an additional $2^{O(d)}$
factor in the sample complexity, hence is not optimal for super-constant $d.$ To obtain the optimal sample complexity,
we will use the aforementioned algorithm for $d=2, 3$ along with a careful recursion to reduce the dimension.

Our sample-optimal independence tester in $d$ dimensions is as follows:
First, let us assume for simplicity that the maximum in the sample complexity is attained by the second term with $j=1.$
Then, we use the algorithm {Test-Independence-2D} to distinguish between the cases that the first coordinate
is independent of the others from the case that $p$ is at least $\eps/2$-far from the product of the distributions
on the first coordinate and the distribution on the remaining coordinates. If it is not, we return ``NO''.
Otherwise, we recursively test whether or not the coordinates $(p_2,\ldots,p_d)$ are independent
versus at least $\eps/2$-far from the product of their marginals, and return the result.
We note that if $(p_2,\ldots,p_d)$ is $\eps/2$-close to the product distribution on $p_2,\ldots,p_d,$
and if $p$ is $\eps/2$-close to the product distribution on its first coordinate with the remaining coordinates,
then $p$ is $\eps$-close to the product of its marginals.

We next deal with the remaining case. We let $N= \prod_{i=1}^d n_i.$
We first partition $[N]$ into sets $S_i$ for $1\leq i\leq 3$ so that $\prod_{j\in S_i} n_j \leq \sqrt{N}.$
We do this by greedily adding elements to a single set $S_1$ until the product is more than $\sqrt{N}$.
We then remove the most recently added element, place it in $S_2$, and place all remaining elements in $S_3.$
This clearly satisfies the desired property. We let $p_{S_i}$ be the distribution of $p$ ignoring all but the coordinates in $S_i.$
We use the obvious independence tester {in three dimensions} to distinguish whether the $p_{S_i}$ 
are independent versus $p$ differing
from the product by at least $\eps/4.$ In the latter case, we return ``NO''.
In the former, we recursively distinguish between $p_{S_i}$ having independent coordinates
versus being $\eps/4$-far from the product of its marginals for each $i$ and return ``NO'' unless all three pass.

In order to analyze the sample complexity,
we note that our $d$-dimensional independence tester uses
$O(\max((\prod_{i=1}^d n_i)^{1/2}/\eps^2,n_j^{1/3}(\prod_{i=1}^d n_i)^{1/3}/\eps^{4/3}))$
samples on the highest level call to the $2$ or $3$-dimensional version of the tester.
It then needs to make $O(1)$ recursive calls to the high-dimensional version of the algorithm on distributions
with support of size at most $(\prod_{i=1}^d n_i)^{1/2}$ and error at most $\eps/4.$
These recursive calls take a total of at most $O((\prod_{i=1}^d n_i)^{1/3}/\eps^2)$ samples,
which is well within our desired bounds.
\end{proof}

\subsubsection{Testing Properties of Collections of Distributions}
In this subsection, we consider the model of testing properties of
collections of distributions~\cite{LRR11} in both the sampling and query models.

We begin by considering the sampling model, as this is closely related to independence testing.
In fact, in the unknown-weights case, the problem is identical.
In the known-weights case, the problem is equivalent to independence testing,
where the algorithm is given explicit access to one of the marginals (say, the distribution on $[m]$).
For this setting, we give a tester with sample complexity $O(\max(\sqrt{nm}/\eps^2,n^{2/3}m^{1/3}/\eps^{4/3})).$
We also note that this bound can be shown the be optimal. Formally, we prove the following:

\begin{proposition}
There is an algorithm that given sample access to a distribution $p$ on $[n] \times [m]$
and an explicit description of the marginal of $p$ on $[m]$
distinguishes between the cases that the  coordinates of $p$ are independent
and the case where $p$ is $\eps$-far from any product distribution on $[n]\times[m]$
with probability at least $2/3$ using $O(\max(\sqrt{nm}/\eps^2,n^{2/3}m^{1/3}/\eps^{4/3}))$ samples.
\end{proposition}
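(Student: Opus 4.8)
The plan is to reuse the structure of algorithm Test-Independence-2D, changing only one thing: the explicitly given marginal on $[m]$ is split \emph{deterministically} rather than via sampling. This single change is exactly what makes the argument go through regardless of whether $m$ is the larger or the smaller coordinate size, since splitting by sampling would be expensive when $m$ is huge, whereas a known marginal can be flattened for free.

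Write $p_1 = \pi_1(p)$ and $p_2 = \pi_2(p)$, so $p_2$ is known. First I would build a multiset $S_2$ of $[m]$ deterministically, taking $\lfloor m (p_2)_b \rfloor$ copies of each $b \in [m]$, exactly as in the proof of Proposition~\ref{prop:identity-fixed}; this uses no samples, has $|S_2| \le m$, and gives $\|(p_2)_{S_2}\|_2 = O(1/\sqrt m)$. Next, set $k = \min(n, n^{2/3} m^{1/3} \eps^{-4/3})$ and build a multiset $S_1$ of $[n]$ by drawing $\mathrm{Poi}(k)$ samples from $p_1$ (simulated by projecting samples of $p$ onto the first coordinate); by Lemma~\ref{splitL2Lem}, with probability at least $9/10$ we have $|S_1| = O(n)$ and $\|(p_1)_{S_1}\|_2^2 = O(1/k)$, and this step costs $O(k)$ samples from $p$ with high probability. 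Then form the ``product'' split multiset $S$ on $[n]\times[m]$ as in Test-Independence-2D, so that for $q \eqdef p_1 \times p_2$ we get the factorization $q_S = (p_1)_{S_1} \times (p_2)_{S_2}$; in particular $\|q_S\|_2 = \|(p_1)_{S_1}\|_2\,\|(p_2)_{S_2}\|_2 = O(k^{-1/2} m^{-1/2})$, and the new domain has size $(n+|S_1|)(m+|S_2|) = O(nm)$.

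To finish, I would argue, exactly as in the two-dimensional case, that $q$ is a product distribution, that $p = q$ when the coordinates of $p$ are independent, and that $\|p - q\|_1 \ge \eps$ when $p$ is $\eps$-far from every product distribution; by Fact~\ref{splitDistributionFactsLem} we have $\|p_S - q_S\|_1 = \|p - q\|_1$, a sample from $p_S$ costs one sample from $p$, and a sample from $q_S$ costs one sample from $p$ (project to the first coordinate) together with a free draw from the explicitly given marginal $p_2$. Hence it suffices to run the tester of Lemma~\ref{L2TesterImprovedLem} on $p_S$ and $q_S$, amplified to error at most, say, $1/20$ by $O(1)$ repetitions, with the bound $b = O(\|q_S\|_2) = O(k^{-1/2}m^{-1/2})$; this uses $O(b \cdot nm/\eps^2) = O(n m^{1/2} k^{-1/2}/\eps^2)$ samples, which by the choice of $k$ equals $O(\max(\sqrt{nm}/\eps^2,\ n^{2/3} m^{1/3}/\eps^{4/3}))$. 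Adding the $O(k) \le O(n^{2/3} m^{1/3} \eps^{-4/3})$ samples used to build $S_1$ keeps the total within the claimed bound, and a union bound over the $9/10$ event and the amplified $\ell_2$-tester gives overall success probability at least $2/3$. There is no serious technical obstacle here; the whole content is the observation that a known marginal can be flattened at zero sample cost, so the only sampling-based split is on the $[n]$-coordinate and its cost $O(k)$ stays inside the budget no matter how large $m$ is. The minor points to check carefully are the bookkeeping of $|S|$ and the new domain size (that it is $O(nm)$), the factorization $q_S = (p_1)_{S_1} \times (p_2)_{S_2}$, and the combination of the two failure probabilities.
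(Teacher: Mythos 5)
Your proposal is correct and matches the paper's proof essentially step for step: deterministic splitting of the known marginal on $[m]$ (as in the identity tester), a $\mathrm{Poi}(k)$-sample split of the first marginal with $k=\min(n,n^{2/3}m^{1/3}\eps^{-4/3})$, and then the $\ell_2$-tester of Lemma~\ref{L2TesterImprovedLem} applied to $p_S$ versus $(p_1\times p_2)_S$, giving the same $O(nm^{1/2}k^{-1/2}/\eps^2)$ bound. The only (immaterial) difference is that you simulate a sample from $p_1\times p_2$ using one sample of $p$ plus a free draw from the explicit marginal, whereas the paper uses two samples from $p$.
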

\begin{proof}
The algorithm is as follows:

\vspace{0.2cm}

\fbox{\parbox{6in}{
{\bf Algorithm} Test-Collection-Sample-Model\\

\vspace{-0.2cm}

\textbf{Input:} Sample access to a distribution $p$ on $[n]\times [m]$ with $\eps>0,$ and an explicit description of the marginal of $p$ on $[m]$.

\textbf{Output:}``YES'' with probability at least $2/3$ if the coordinates of $p$ are independent,
``NO'' with probability at least $2/3$ if $p$ is $\eps$-far from any product distribution on $[n]\times[m]$.

\begin{enumerate}

\item Let $k=\min(n,n^{2/3}m^{1/3}\epsilon^{-4/3})$.

\item Let $S_1$ be a multiset in $[n]$ obtained by taking $\mathrm{Poi}(k)$ samples from $p_1=\pi_1(p).$
Let $S_2$ be a multiset in $[m]$ obtained by taking $\lfloor m(p_2)_i \rfloor$ copies of $i$.
Let $S$ be the multiset of elements of $[n]\times[m]$ so that
    \begin{align*}
    & 1+\{\textrm{Number of copies of }(a,b)\textrm{ in }S \} = \\ & (1+\{\textrm{Number of copies of }a\textrm{ in }S_1 \})(1+\{\textrm{Number of copies of }b\textrm{ in }S_2 \}).
    \end{align*}

\item Let $q$ be the distribution on $[n]\times [m]$ obtained by taking $(x_1,y_1),(x_2,y_2)$ independent samples from $p$ and returning $(x_1,y_2)$.
Run the tester from Lemma \ref{L2TesterImprovedLem} to distinguish between the cases
$p_{S}=q_{S}$ and $\|p_{S}-q_{S}\|_1 \geq \eps.$

\end{enumerate}
}}

\vspace{0.2cm}

For the analysis, we note that $\|(p_2)_{S_2}\|_2 = O(1/\sqrt{m})$
and with probability at least $9/10$, it holds $\|(p_1)_{S_1}\|_2 = O(1/\sqrt{k}).$
Therefore, we have that $\|(p_1\times p_2)_S\|_2 = O(1/\sqrt{km})$.
Thus, the $\ell_2$-tester of Lemma~\ref{L2TesterImprovedLem} draws
$O(nm^{1/2}k^{-1/2}/\eps^2) =  O(\max(\sqrt{nm}/\eps^2,n^{2/3}m^{1/3}/\eps^{4/3}))$
samples and the sample complexity is bounded as desired.
\end{proof}

Next, we consider the query model. In this model, we are essentially guaranteed that the distribution on $[m]$ is uniform,
but are allowed to extract samples conditioned on a particular value of the second coordinate.
Equivalently, there are $m$ distributions $q_1,\ldots, q_m$ on $[n].$.
We wish to distinguish between the cases that the $q_i$'s are identical and the case
where there is no distribution $q$ so that $\frac{1}{m}\sum_{i=1}^m \|q-q_i\|_1 \leq \eps.$
We show that we can solve this problem with $O(\max(\sqrt{n}/\eps^2,n^{2/3}/\epsilon^{4/3}))$ samples for any $m.$
This is optimal for all $m\geq 2,$ even if we are guaranteed that
$q_1=q_2= \ldots = q_{\lfloor m/2\rfloor}$ and $q_{\lfloor m/2 +1 \rfloor}=\ldots = q_m.$

\begin{proposition}
There is an algorithm that given sample access to distributions $q_1,\ldots,q_m$ on $[n]$
distinguishes between the cases that the $q_i$'s are identical
and the case where there is no distribution $q$ so that $\frac{1}{m}\sum_{i=1}^m \|q-q_i\|_1 \leq \eps$
with probability at least $2/3$ using $O(\max(\sqrt{n}/\eps^2,n^{2/3}/\eps^{4/3}))$ samples.
\end{proposition}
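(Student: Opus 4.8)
The plan is to follow the same two-step template as our other applications: a splitting step to control $\ell_2$-norms, then a reduction to the basic $\ell_2$-closeness tester. Write $\bar q=\frac1m\sum_{i=1}^m q_i$ for the mixture of the $q_i$'s; a single sample from $\bar q$ costs one query (pick a uniform $i\in[m]$, then draw from $q_i$). Since $\bar q$ is a constant-factor-optimal center for $q\mapsto\frac1m\sum_i\|q-q_i\|_1$, the ``identical'' case is exactly ``$q_i=\bar q$ for all $i$'', and the ``far'' case implies $\frac1m\sum_i\|q_i-\bar q\|_1>\eps$. A one-line averaging argument then shows that in the far case an $\Omega(\eps)$-fraction of the components $q_i$ are $\Omega(\eps)$-far from $\bar q$ in $\ell_1$. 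So it suffices to distinguish ``every component equals the mixture'' from ``the average $\ell_1$-deviation from the mixture exceeds $\eps$''.

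First I would draw $\mathrm{Poi}(k)$ samples from $\bar q$ with $k=\min(n,n^{2/3}\eps^{-4/3})$, form the split multiset $S\subseteq[n]$, and pass to the split distributions $(q_i)_S$. By Lemma~\ref{splitL2Lem}, with high probability $|S|=O(n)$ and $\|\bar q_S\|_2=O(1/\sqrt k)$; by Fact~\ref{splitDistributionFactsLem} the testing problem is unchanged, and a sample from $(q_i)_S$ still costs one query to $q_i$. From here on the domain has size $N=O(n)$ and $\|\bar q\|_2=O(1/\sqrt k)$, so the budget $O(\|\bar q\|_2\cdot N/\eps^2)$ available to the $\ell_2$-closeness tester of Lemma~\ref{L2TesterImprovedLem} equals $O(\max(\sqrt n/\eps^2,n^{2/3}/\eps^{4/3}))$ --- and, importantly, carries no dependence on $m$.

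The crux is the reduction to an $\ell_2$-closeness test of exactly this cost, and two natural attempts fail. Treating the collection as a single distribution on $[N]\times[m]$ with uniform known $[m]$-marginal and calling the independence tester behind Test-Collection-Sample-Model overshoots by up to a $\sqrt m$ factor, because the $\ell_2$-distance of the joint distribution to the product is diluted across the $m$ component-bins. Alternatively, comparing (on $[N]\times[N]$) the distribution of two samples drawn from the \emph{same} random component against that of two samples drawn from two \emph{independent} random components gives a pair of distributions that agree iff the collection is identical and has no $m$-dependence, but the relevant separation of these pair-distributions scales only like $\eps^2/N$, so a black-box invocation of our tester costs polynomially too much. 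Making the bound both $m$-free \emph{and} tight therefore requires a genuinely new ingredient, and this is the hard part.

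What I would try is a bespoke chi-squared-type statistic: draw a batch of random components, a few samples from each (the query model permits this), and sum a per-component, per-coordinate normalized collision score against the empirical mixture; its expectation is $0$ in the identical case and $\Omega(\eps^2/N)$ in the far case (Cauchy--Schwarz applied to the first-paragraph guarantee, together with the fact that an $\Omega(\eps)$-fraction of components are bad), and the normalization keeps the variance in budget even though splitting by $\bar q$ does \emph{not} make $\frac1m\sum_i\|(q_i)_S\|_2^2$ small --- it can be as large as $\tfrac1k\sum_x\max_i q_i(x)$. I expect the remaining work to be the bookkeeping familiar from Proposition~\ref{prop:adaptive}: verify the empirical component and mixture masses behave as predicted, bucket coordinates by approximate mixture-mass, run the test per bucket, and union-bound --- ending, as always, in a call to the basic $\ell_2$-closeness tester.
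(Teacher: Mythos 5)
There is a genuine gap: your argument stops exactly where the proof has to happen. After correctly reducing to ``every component equals the mixture $q_\ast$'' versus ``the average $\ell_1$-distance to $q_\ast$ exceeds $\eps$,'' and correctly observing that the two naive reductions (independence testing on $[n]\times[m]$, or comparing same-component pairs to independent-component pairs) are lossy, you propose a ``bespoke chi-squared-type statistic'' whose expectation and variance you assert but never establish. The obstacle you yourself flag --- that splitting with respect to $q_\ast$ does not control $\frac1m\sum_i\|(q_i)_S\|_2^2$, which can be as large as $\frac1k\sum_x\max_i q_i(x)$ --- is real and is left unresolved; ``the normalization keeps the variance in budget'' is precisely the claim that would need a proof, and no statistic is even written down. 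Moreover, the single-scale averaging fact you do prove (an $\Omega(\eps)$-fraction of components are $\Omega(\eps)$-far from $q_\ast$) cannot by itself give the stated bound: testing $O(1/\eps)$ random components each to accuracy $\Theta(\eps)$ costs an extra factor of $1/\eps$ over the target.

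The paper's proof needs neither a new statistic nor any $\ell_2$-control of the individual $q_i$; it is a multi-scale sampling argument on top of the already-proved $\ell_1$-closeness tester. If $\frac1m\sum_i\|q_\ast-q_i\|_1>\eps$, then bucketing by distance gives some scale $k\ge 0$ with $|\{i:\|q_\ast-q_i\|_1\ge 2^{k-1}\eps\}|=\Omega(m2^{-5k/4})$. The algorithm therefore, for each $k$ up to $\lceil\log_2 m\rceil$, picks $C2^{5k/4}$ uniformly random components and runs the two-unknown-distributions $\ell_1$-closeness tester between $q_i$ and $q_\ast$ at accuracy $2^{k-1}\eps$ (a sample from $q_\ast$ is one query to a random component), boosted to failure probability $C^{-2}6^{-k}$. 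Coarser accuracy makes each test cheaper exactly fast enough to pay for testing more candidates: the cost at scale $k$ is $2^{5k/4}k\cdot O\bigl(\sqrt n\,4^{-k}/\eps^2+n^{2/3}2^{-4k/3}/\eps^{4/3}\bigr)$, which sums geometrically to $O(\max(\sqrt n/\eps^2,n^{2/3}/\eps^{4/3}))$, independent of $m$. So the ``genuinely new ingredient'' you were searching for is just this trade-off between how many components one must sample at a given distance scale and how cheaply one can test at that scale; with it, the whole proof is a black-box use of the closeness tester plus a union bound.
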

\begin{proof}
The algorithm is as follows:

\vspace{0.2cm}

\fbox{\parbox{6in}{
{\bf Algorithm} Test-Collection-Query-Model\\

\vspace{-0.2cm}

\textbf{Input:} Sample access to a distribution $q_1,\ldots, q_m$ on $[n]$ with $\eps>0.$

\textbf{Output:}``YES'' with probability at least $2/3$ if the $q_i$ are identical,
``NO'' with probability at least $2/3$ if there is no distribution $q$ so that $\frac{1}{m}\sum_{i=1}^m \|q-q_i\|_1 \leq \epsilon$.

\begin{enumerate}

\item Let $C$ be a sufficiently large constant.

\vspace{-0.2cm}

\item Let $q_{\ast}$ denote the distribution obtained by sampling from a uniformly random $q_i.$

\vspace{-0.1cm}

\item For $k$ from $0$ to $\lceil \log_2(m) \rceil$:

\vspace{-0.2cm}

\begin{enumerate}

\item Select $2^{5k/4} C$ uniformly random elements $i\in [m].$

\vspace{-0.1cm}

\item For each selected $i$, use the $\ell_1$-closeness tester to distinguish between
$q_{\ast}=q_i$ and $\|q_{\ast}-q_i\|_1 > 2^{k-1} \eps$ with failure probability at most $C^{-2}6^{-k}.$

\vspace{-0.1cm}

\item If any of these testers returned ``NO'', return ``NO''.

\end{enumerate}

\item Return ``YES''.

\end{enumerate}
}}

\vspace{0.2cm}

To analyze this algorithm, we note that with probability $9/10$ all the testers we call whose hypotheses are satisfied output correctly.
Therefore, if all $q_i$ are equal, they are equal to $q_{\ast}$, and thus our algorithm returns ``YES'' with appropriately large probability.
On the other hand, if for any $q$ we have that $\frac{1}{m}\sum_{i=1}^m \|q-q_i\|_1 > \eps,$
then in particular $\frac{1}{m}\sum_{i=1}^m \|q_{\ast}-q_i\|_1 > \eps.$
Note that
$$
\frac{1}{m}\sum_{i=1}^m \|q_{\ast}-q_i\|_1 \leq \eps/2+ O\left(\sum_k \frac{|\{i:\|q_*-q_i\|\geq 2^{k-1}\eps\}|2^k\eps}{m}  \right).
$$
Therefore, since $\frac{1}{m}\sum_{i=1}^m \|q_{\ast}-q_i\|_1 >\eps$,
we have that for some $k$ it holds
$|\{i:\|q_{\ast}-q_i\|\geq 2^{k-1}\eps\}| = \Omega(m2^{-5k/4}).$
For this value of $k$, there is at least a $9/10$ probability that some $i$ with this property
was selected as one of our $C2^{5k/4}$ that were used,
and then assuming that the appropriate tester returned correctly, our algorithm will output ``NO''.
This establishes correctness.
The total sample complexity of this algorithm is easily seen to be
$\sum_k 2^{5k/4}k \cdot O(\sqrt{n}/\eps^2 4^{-k} + n^{2/3}/\eps^{4/3} 2^{-4k/3}) = O(\max(\sqrt{n}/\eps^2,n^{2/3}/\eps^{4/3})).$
\end{proof}

\subsubsection{Testing $k$-Histograms}

Finally, in this subsection we use our framework to design a sample-optimal
algorithm for the property of being a $k$-histogram with known intervals.

Let ${\cal I}$ be a partition of $[n]$ into $k$ intervals.
We wish to be able to distinguish between the cases
where a distribution $p$ has constant density on each interval
versus the case where it is $\eps$-far from any such distribution.
We show the following:

\begin{proposition}
Let ${\cal I}$ be a partition of $[n]$ into $k$ intervals.
Let $p$ be a distribution on $[n]$.
There exists an algorithm which draws $O(\max(\sqrt{n}/\eps^2,n^{1/3}k^{1/3}/\eps^{4/3}))$
independent samples from $p$ and
distinguishes between the cases where $p$ is uniform on each of the intervals in $\cal I$
from the case where $p$ is $\eps$-far from any such distribution with probability at least $2/3.$
\end{proposition}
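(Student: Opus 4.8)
The plan is to follow the paper's reduction recipe: reduce to the basic $\ell_2$-closeness tester of Lemma~\ref{L2TesterImprovedLem} via the splitting machinery (Fact~\ref{splitDistributionFactsLem}, Lemma~\ref{splitL2Lem}), after a preliminary decomposition of $\mathcal I$ by interval length, which is what produces the dependence on $k$ rather than $n$ in the $\eps^{-4/3}$ term.

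\emph{Structural reduction.} Let $\bar p$ be the distribution obtained by drawing $i\sim p$ and outputting a uniformly random element of the interval $I\in\mathcal I$ with $i\in I$; equivalently, $\bar p$ replaces $p$ by its average within each interval. Then: $p$ is a $k$-histogram with respect to $\mathcal I$ iff $p=\bar p$; if $p$ is $\eps$-far from every $k$-histogram then $\|p-\bar p\|_1\ge\eps$, since $\bar p$ is itself a $k$-histogram; a sample from $\bar p$ costs one sample from $p$; and, writing $U_I$ for the uniform distribution on $I$ and $p(I)$ for its mass, $\|p-\bar p\|_1=\sum_{I\in\mathcal I}p(I)\,\|(p|I)-U_I\|_1$, with $p-\bar p$ orthogonal to $\bar p$ (so flattening within intervals never increases the $\ell_2$-norm). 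Hence it suffices to distinguish $p=\bar p$ from $\|p-\bar p\|_1\ge\eps$. As a sanity check, when every interval has length $2$ this is exactly closeness testing of two unknown distributions on $[n/2]$, consistent with the target bound.

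\emph{Length-class decomposition and the split.} Partition $\mathcal I$ into $O(\log n)$ classes $\mathcal I_j$ consisting of the intervals of length in $[2^j,2^{j+1})$; set $k_j=|\mathcal I_j|$, $D_j=\bigcup_{I\in\mathcal I_j}I$, so that $\sum_j k_j=k$, $\sum_j|D_j|=n$, $|D_j|=\Theta(k_j2^j)$, and $\sum_j p(D_j)=1$. By the decomposition above and the triangle inequality, if $\|p-\bar p\|_1\ge\eps$ then some class satisfies $\|p[D_j]-\bar p[D_j]\|_1\ge\eps/\Theta(\log n)$; classes with $p(D_j)<\eps/\Theta(\log n)$ are discarded (negligible total contribution). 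For each surviving class I would run a sub-tester for ``$(p|D_j)=\overline{(p|D_j)}$ versus $\|(p|D_j)-\overline{(p|D_j)}\|_1\ge\eps/(\Theta(\log n)\,p(D_j))$'' with failure probability $1/\mathrm{poly}(\log n)$, and union-bound over classes. The sub-tester is the split-plus-$\ell_2$ recipe: draw $\mathrm{Poi}(m_j)$ samples of $p$ landing in $D_j$ (that is, $\tilde O(m_j/p(D_j))$ samples of $p$), and use the within-interval counts to subdivide each interval of $\mathcal I_j$ into equal pieces; by Lemma~\ref{splitL2Lem} applied to $\overline{(p|D_j)}$ this gives a split $S_j$ with $\E\big[\|\overline{(p|D_j)}_{S_j}\|_2^2\big]\lesssim 1/m_j$ on a domain of size $O(|D_j|+m_j)$, while Fact~\ref{splitDistributionFactsLem} preserves the $\ell_1$-distance. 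Feeding this into Lemma~\ref{L2TesterImprovedLem} and optimizing $m_j$ (capping $m_j\le|D_j|$, and taking $m_j=0$ when $\overline{(p|D_j)}$ already has $\ell_2$-norm $O(|D_j|^{-1/2})$) should yield a per-class sample bound that, summed over $j$ via H\"older's inequality together with $\sum_j p(D_j)\le1$, $\sum_j|D_j|\le n$, and $\sum_j k_j=k$, telescopes to $\tilde O(\sqrt n/\eps^2+n^{1/3}k^{1/3}/\eps^{4/3})$. Finally, the $\mathrm{polylog}$ factors are stripped exactly as in the proof of Proposition~\ref{prop:adaptive}, by running a single $\ell_2$-test whose statistic is an appropriate linear combination of the per-class ones, avoiding union bounds and amplification.

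\emph{Main obstacle.} Everything except the summation is routine given Section~\ref{ssec:red}. The delicate point is the accounting in the last step: one must choose the per-class subdivision levels $m_j$ and the granularity of the interval-wise splits so that the aggregate ``subdivision budget'' is governed by $k=\sum_j k_j$ rather than by $n=\sum_j|D_j|$ --- concretely, so that class $\mathcal I_j$ behaves like closeness testing over a domain of effective size $\Theta\big((|D_j|\,k_j)^{1/2}\big)$, whence $\sum_j (|D_j|\,k_j)^{1/3}\le\big(\sum_j|D_j|^{1/2}\big)^{2/3}\big(\sum_j k_j\big)^{1/3}=\tilde O(n^{1/3}k^{1/3})$ by H\"older. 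Getting this bookkeeping right is the only place the argument genuinely departs from ordinary $\ell_1$-closeness testing, and it is exactly what converts the $n^{2/3}/\eps^{4/3}$ of closeness testing into the claimed $n^{1/3}k^{1/3}/\eps^{4/3}$.
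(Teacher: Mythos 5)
Your structural reduction (compare $p$ to its within-interval flattening $\bar p$, which can be sampled at unit cost and is itself a $k$-histogram) is exactly the comparison the paper makes, but the quantitative heart of the argument is missing, and you say so yourself ("should yield\dots", "getting this bookkeeping right is the only place\dots"). Concretely: the split you actually describe --- subdivide each interval of $\mathcal I_j$ according to its sample count so that, via the computation in Lemma~\ref{splitL2Lem}, $\E[\|\overline{(p|D_j)}_{S_j}\|_2^2]\lesssim 1/m_j$ on a domain of size $O(|D_j|+m_j)$ --- throws away the fact that the flattened distribution is \emph{already} uniform over $\approx 2^j$ bins per interval. Feeding $b_j=O(m_j^{-1/2})$ into Lemma~\ref{L2TesterImprovedLem} and optimizing $m_j\le |D_j|$ gives a per-class cost of order $\max(\sqrt{|D_j|}\,p(D_j)/\eps^2,\,|D_j|^{2/3}p(D_j)^{1/3}/\eps^{4/3})$ in samples from $p$, and summing by H\"older reproduces only the generic closeness bound $\tilde O(\max(\sqrt n/\eps^2, n^{2/3}/\eps^{4/3}))$: the parameter $k$ never enters. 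The claimed "effective domain size $\Theta((|D_j|k_j)^{1/2})$" is precisely the statement being proved (restricted to equal-length intervals), and no mechanism for it is supplied.

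The missing idea is a \emph{product} splitting, which is how the paper proceeds (and which also makes your length-class decomposition, rejection sampling, and the unproven polylog-stripping step unnecessary --- note the proposition asserts a clean $O(\cdot)$, and the "combine the statistics" remark after Proposition~\ref{prop:adaptive} is never actually carried out). First pad deterministically: split each bin of each interval $I$ into $\lceil n/(k|I|)\rceil$ pieces, so every interval has length $\Omega(n/k)$ while the domain stays $O(n)$. Then draw $\mathrm{Poi}(m)$ samples from $p$ with $m=\min(k,\,n^{1/3}k^{1/3}\eps^{-4/3})$, let $a_i$ be the count in interval $I_i$, and split each bin of $I_i$ into $\lfloor n a_i/(k|I_i|)\rfloor+1$ pieces. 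Because $m\le k$, the total domain remains $O(n)$, yet interval $I_i$ now has $\Omega\big((n/k)(a_i+1)\big)$ pieces --- the interval length and the sample count multiply rather than add. Hence the flattened distribution $q'$ satisfies
\begin{equation*}
\E\big[\|q'\|_2^2\big]\;\le\;\sum_{I} p(I)^2\, O\!\left(\frac{k}{n}\right)\E\!\left[\frac{1}{a_I+1}\right]\;\le\;\sum_I O\!\left(\frac{k\,p(I)}{n\,m}\right)\;=\;O\!\left(\frac{k}{nm}\right),
\end{equation*}
not merely $O(1/m)$, and Lemma~\ref{L2TesterImprovedLem} then costs $O\big(n\sqrt{k/(nm)}/\eps^2\big)=O\big(\sqrt{nk/m}/\eps^2\big)$ samples, which together with the $m$ splitting samples is $O(\max(\sqrt n/\eps^2,\,n^{1/3}k^{1/3}/\eps^{4/3}))$. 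The extra factor $\sqrt{k/n}$ in the $\ell_2$ bound is exactly what converts $n^{2/3}/\eps^{4/3}$ into $n^{1/3}k^{1/3}/\eps^{4/3}$; without it your argument does not reach the stated bound.
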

\begin{proof}
First, we wish to guarantee that each of the intervals has reasonably large support.
We can achieve this as follows:
For each interval $I\in\cal I$ we divide each bin within $I$ into $\lceil n/(k|I|) \rceil$ bins.
Note that this increases the number of bins in $I$ by at most $n/k,$
hence doing this to each interval in $\cal I$ at most doubles the total size of the domain.
Therefore, after applying this operation we get a distribution over {a domain of size $O(n)$,
and each of the $k$ intervals in $\cal I$} is of length $\Omega(n/k).$

Next, in order to use an $\ell_2$-closeness tester, we want to further subdivide bins
{using our randomized transformation}.
To this end, we let $m=\min({k},n^{1/3}k^{1/3}/\epsilon^{4/3})$ and take $\mathrm{Poi}(m)$ samples from $p.$
Then, for each interval $I_i\in {\cal I}$, we divide each bin in $I_i$ into $\lfloor n a_i/(k|I_i|)\rfloor+1$ new bins,
where $a_i$ is the number of samples that were drawn from $I_i.$
{Let $I'_i$ denote the new interval obtained from $I_i.$}
Note that after this procedure the total number of bins is still $O(n)$ and that the number of bins in {$I'_i$} is now $\Omega((n/k)(a_i+1)).$
{Let $p'$ be the distribution obtained from $p$ under this transformation.}

Let $q'$ be the distribution obtained by sampling from $p'$ and then returning a uniform random bin from the same interval $I'_i$ as the sample.
We claim that the $\ell_2$-norm of $q'$ is small. In particular the squared $\ell_2$-norm will be the sum over intervals $I'$ in our new partition
(that is, after the subdivisions described above) of $O(p(I')^2/((n/k)(a_i+1)))$. Recall that $1/(a_i+1)$ has expectation at most $1/(mp(I')).$
This implies that the expected squared $\ell_2$-norm of $q'$ is at most $\sum_{I'} O(p(I')/(nm/k)) = O(k/(nm)).$
Therefore, with large constant probability, we have that $\|q'\|_2^2 = O(k/(nm)).$

We can now apply the tester from Lemma \ref{L2TesterImprovedLem} to distinguish
between the cases where $p'=q'$ and $\|p'-q'\|_1 > \eps$ with
$O(n^{1/2} k^{1/2} m^{-1/2}/\eps^2) = O(\max(\sqrt{n}/\eps^2,n^{1/3}k^{1/3}/\eps^{4/3}))$ samples.
We have that $p'=q'$ if and only if $p$ is flat on each of the intervals in $\cal I$,
and $\|p'-q'\|_1>\eps$ if $p$ is $\eps$-far from any distribution which is flat on $\cal I.$
This final test is sufficient to make our determination.
\end{proof}

\section{Sample Complexity Lower Bounds} \label{sec:lb}
We illustrate our lower bound technique by proving tight
information-theoretic lower bounds for
testing independence (in any dimension), 
testing closeness in Hellinger distance, 
and testing histograms.

\subsection{Lower Bound for Two-Dimensional Independence Testing}

\begin{theorem}
Let $n\geq m \geq 2$ be integers and $\eps>0$ a sufficiently small universal constant.
Then, any algorithm that draws samples from a distribution $p$ on $[n]\times [m]$
and, with probability at least $2/3$, distinguishes between the case
that the coordinates of $p$ are independent
and the case where $p$ is $\eps$-far from any product distribution
must use $\Omega(\max(\sqrt{nm}/\eps^{2},n^{2/3}m^{1/3}/\eps^{4/3}))$ samples.
\end{theorem}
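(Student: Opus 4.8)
The plan is to exhibit two distributions $\mathcal{D}_{\mathrm{yes}}$ and $\mathcal{D}_{\mathrm{no}}$ over distributions on $[n]\times[m]$, the first supported on product distributions and the second supported (with high probability) on distributions $\Omega(\eps)$-far from every product distribution, and to show that $t = o(\max(\sqrt{nm}/\eps^2, n^{2/3}m^{1/3}/\eps^{4/3}))$ samples are insufficient to tell them apart, via the information-theoretic method advertised in the introduction. Since the lower bound is a maximum of two terms, I would handle the two regimes by two separate constructions. For the $\sqrt{nm}/\eps^2$ term, the natural construction is a ``paired bins'' instance in the spirit of the Paninski uniformity lower bound lifted to the product setting: take the marginal on $[n]\times[m]$ to be (close to) uniform, group the $nm$ cells into $nm/2$ pairs, and in the yes-case keep each pair balanced while in the no-case perturb each pair by $\pm\eps/\sqrt{nm}$ in a way that is globally a rank-one (product) perturbation in the yes-case but a generic sign pattern in the no-case; one checks the no-case is $\Omega(\eps)$-far from product, while distinguishing requires seeing a pair twice, i.e.\ a birthday-type bound of $\Omega(\sqrt{nm}/\eps^2)$. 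For the $n^{2/3}m^{1/3}/\eps^{4/3}$ term, I would build on the closeness-testing lower bound of Chan--Diakonikolas--Valiant--Vempala \cite{CDVV14}: fix the second marginal to be ``spiky'' (mass concentrated on $\approx m^{1/3}$-type scaling after the appropriate normalization), randomize the conditional distributions $p(\cdot\mid b)$ over $[n]$ so that in the yes-case they are all equal to a common randomly chosen distribution and in the no-case they are i.i.d.\ random distributions pairwise $\Omega(\eps)$-far, and reduce independence testing on this family to distinguishing ``all conditionals equal'' from ``conditionals far apart''.

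Concretely, for the lower-bound argument itself, I would set up the standard reduction: a tester making $t$ samples induces (via Poissonization, replacing $t$ by $\mathrm{Poi}(t)$ for independence of cell counts) a statistic $X$ = vector of cell-counts, and by the data-processing inequality it suffices to show $d_{\mathrm{TV}}(\mathcal{D}_{\mathrm{yes}}^{\otimes}, \mathcal{D}_{\mathrm{no}}^{\otimes})$ — the total variation distance between the distributions of $X$ under the two priors — is $o(1)$ when $t$ is below the claimed bound. Equivalently, by Pinsker it suffices to bound the mutual information $I(X; \btheta)$ between the sample-count vector and the hidden parameter $\btheta$ (the pattern of signs in the first construction, or the choice of conditionals in the second), and show it is $o(1)$; this is exactly the ``bound the mutual information'' strategy the introduction emphasizes. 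For the paired-bins construction this reduces to a product over pairs, so $I(X;\btheta) \le \sum_{\text{pairs}} I(\text{counts in pair}; \text{sign of pair})$, and each term is $O((\lambda\eps^2)^2 /(\text{normalization}))$ where $\lambda$ is the per-cell expected count; summing gives $I \lesssim nm \cdot (t/(nm))^2 \eps^4 = t^2\eps^4/(nm)$, which is $o(1)$ precisely when $t = o(\sqrt{nm}/\eps^2)$. For the second construction one does the analogous computation conditioned on the second coordinate, where the per-column sample budget and the spikiness of the $[m]$-marginal combine to produce the $n^{2/3}m^{1/3}/\eps^{4/3}$ threshold, mirroring the moment computation in \cite{CDVV14} but packaged as a mutual-information bound.

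The main obstacle I anticipate is twofold. First, verifying the ``far from product'' guarantee for the no-instances: for a random perturbation of a near-uniform distribution, being $\eps$-far from the \emph{entire} (non-convex-looking, but actually the set of product distributions is a nice manifold) set of product distributions is not automatic — I would argue it by projecting onto the product distribution with the correct marginals (which is the natural candidate for the closest product distribution, up to constants) and lower-bounding $\|p - \pi_1(p)\otimes\pi_2(p)\|_1$ by an anti-concentration / typical-sign-pattern argument, showing a random $\pm$ pattern has $\Omega(1)$ correlation deficit with every rank-one pattern with high probability. Second, the mutual-information estimate itself: although each per-pair (or per-column) term is elementary, one must be careful that the Poissonized counts really do factorize across pairs under \emph{both} priors — this is where choosing the yes/no priors to have \emph{identical marginal distributions on each cell's count}, differing only in the joint correlation structure, is essential, and getting the constants so that the perturbation keeps all probabilities nonnegative while the distance stays $\Omega(\eps)$ requires some care. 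I expect the write-up to spend most of its length on these two points, with the reduction-to-mutual-information and the final summation being short.
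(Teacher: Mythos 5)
Your overall strategy -- two separate hard instances, Poissonization so the cell counts decouple, and bounding the mutual information between the hidden bit and the counts -- is exactly the paper's, and your treatment of the $\Omega(\sqrt{nm}/\eps^2)$ term matches the paper's proof in all essentials: the paper perturbs each cell of the uniform measure by a relative $\pm\eps$ (working with positive measures of total mass $\Theta(1)$ instead of pairing cells), proves a per-cell information bound $O(k^2\eps^4/(n^2m^2))$, and handles ``far from every product distribution'' by comparing to the product of $p$'s own marginals plus an anti-concentration argument, just as you anticipate. (One slip: the per-cell perturbation must be of size $\Theta(\eps/(nm))$, not $\eps/\sqrt{nm}$, or the object is not a near-probability measure; your information estimate already uses the correct scale.)

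The genuine gap is in the $\Omega(n^{2/3}m^{1/3}/\eps^{4/3})$ term, where the concrete construction you describe would not produce the claimed threshold. First, the marginal on $[m]$ should be essentially uniform, not spiky: putting the heavy structure on the smaller coordinate only shrinks the effective $m$, and the transpose of the correct construction yields only $\Omega(m^{2/3}n^{1/3}/\eps^{4/3})$, which is weaker since $n\geq m$. In the paper's instance the spikiness lives inside the conditionals over $[n]$: a common set of heavy rows, each of total mass $\Theta(1/k)$ where $k$ is the sample budget (each row is heavy with probability $k/n$), identical in the yes and no cases and shared across all columns; the $\pm\eps$ perturbation sits only on the light cells, independently over $(i,j)$. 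It is this shared heavy structure at scale $1/k$ -- not any property of the $[m]$-marginal -- that supplies the masking noise. Second, taking the no-case conditionals to be fully i.i.d.\ across columns and then reasoning with a per-column budget of $\approx k/m$ samples, as your sketch suggests, gives the wrong answer: that accounting would indicate indistinguishability up to roughly $n^{2/3}m^{2/3}/\eps^{4/3}$ samples, which exceeds the paper's matching upper bound, a sign that the i.i.d.\ structure cannot actually be realized as a hard instance. Indeed, in the yes case all conditionals must be literally equal (to be a product distribution), so the heavy parts are forced to be common; if the no case had column-specific heavy parts this would be easy to detect, so they must be shared there too -- and once they are shared, the counts within a row are no longer conditionally independent given the hidden bit (they are coupled through the row's unknown heavy/light status), so the information cannot be decomposed cell-by-cell or column-by-column. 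One must instead bound $I(X:A_i)$ for the whole vector $A_i$ of counts in row $i$ and sum over rows; this row-level bound, $I(X:A_i)=O(k^3\eps^4/(n^3m))$, is the technical heart of the argument (the extra factor of $k/n$ over the naive estimate comes precisely from the possibility that the row is heavy), and summing over the $n$ rows gives $o(1)$ exactly when $k=o(n^{2/3}m^{1/3}/\eps^{4/3})$. Your sketch compresses all of this into ``the analogous computation, mirroring \cite{CDVV14},'' but without the uniform $[m]$-marginal, the shared heavy rows at scale $1/k$, and the row-grouped information bound, the $m^{1/3}$ factor does not materialize.
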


We split our argument into two parts proving each of the above lower bounds separately.

\subsubsection{The $\Omega(\sqrt{nm}\epsilon^{-2})$ Lower Bound}
We start by proving the easier of the two bounds.
It should be noted that this part of the lower bound can essentially be obtained using known results.
We give a proof using our technique, in part as a guide to the somewhat
more complicated proof in the next section, which will be along similar lines.

First, we note that it suffices to consider the case where $n$ and $m$
are each sufficiently large since $\Omega(\eps^{-2})$
samples are required to distinguish the uniform distribution on $[2]\times [2]$
from the distribution which takes value $(i,j)$ with probability $(1+(2\delta_{i,j}-1)\eps)/2.$

Our goal is to exhibit distributions $\mathcal{D}$ and $\mathcal{D'}$
over distributions on $[n]\times [m]$ so that all distributions in $\mathcal{D}$
have independent coordinates, and all distributions in $\mathcal{D'}$
are $\eps$-far from product distributions,
so that for any $k=o(\sqrt{nm}/\eps^2)$,
no algorithm given $k$ independent samples
from a random element of either $\mathcal{D}$ or $\mathcal{D'}$
can determine which family the distribution came from with greater than $90\%$ probability.

Although the above will be our overall approach, we will actually analyze the following generalization
in order to simplify the argument. First, we use the standard Poissonization trick.
In particular, instead of drawing $k$ samples from the appropriate distribution,
we will draw $\mathrm{Poi}(k)$ samples. This is acceptable because with $99\%$ probability,
this is at least $\Omega(k)$ samples. Next, we relax the condition that elements of $\mathcal{D'}$
be $\eps$-far from product distributions, and simply require that they are $\Omega(\eps)$-far from product distributions with $99\%$ probability.
This is clearly equivalent upon accepting an additional $1\%$ probability of failure, and altering $\eps$ by a constant factor.

Finally, we will relax the constraint that elements of $\mathcal{D}$ and $\mathcal{D'}$ are probability distributions.
Instead, we will merely require that they are positive measures on $[n]\times[m]$,
so that elements of $\mathcal{D}$ are product measures and elements of $\mathcal{D'}$
are $\Omega(\eps)$-far from being product measures with probability at least $99\%$.
We will require that the selected measures have total mass $\Theta(1)$ with probability at least $99\%$,
and instead of taking samples from these measures (as this is no longer as sensible concept),
we will use the points obtained from a Poisson process of parameter $k$
(so the number of samples in a given bin is a Poisson random variable with parameter $k$ times the mass of the bin).
This is sufficient, because the output of such a Poisson process for a measure $\mu$
is identical to the outcome of drawing $\mathrm{Poi}(\|\mu\|_1 k)$ samples from the distribution $\mu/\|\mu\|_1$.
Moreover, the distance from $\mu$ to the nearest product distribution is $\|\mu\|_1$ times the distance
from $\mu/\|\mu\|_1$ to the nearest product distribution.

\smallskip

We are now prepared to describe $\mathcal{D}$ and $\mathcal{D'}$ explicitly:
\begin{itemize}
\item We define $\mathcal{D}$ to deterministically return the uniform distribution $\mu$ with $\mu(i,j)=\frac{1}{nm}$ for all $(i,j) \in [n] \times [m].$
\item We define $\mathcal{D'}$ to return the positive measure $\nu$ so that for each $(i,j) \in [n] \times [m]$ the value $\nu(i,j)$
is either $\frac{1+\epsilon}{nm}$ or $\frac{1-\epsilon}{nm}$ each with probability $1/2$ and independently over different pairs $(i,j)$.
\end{itemize}

It is clear that $\|\mu\|_1,\|\nu\|_1 = \Theta(1)$ deterministically.
We need to show that the relevant Poisson processes return similar distributions.
To do this, we consider the following procedure:
Let $X$ be a uniformly random bit. Let $p$ be a measure on $[n]\times [m]$ drawn from either $\mathcal{D}$
if $X=0$ or from $\mathcal{D'}$ if $X=1$.
We run a Poisson process with parameter $k$ on $p,$ and let $a_{i,j}$ be the number of samples
drawn from bin $(i,j).$ We wish to show that, given access to all $a_{i,j}$'s, one is not able to determine the value of $X$
with probability more than $51\%$. To prove this, it suffices to bound from above the mutual information between $X$ and the set of samples
$(a_{i,j})_{(i,j)\in[n]\times [m]}$. In particular, this holds true because of the following simple fact:
\begin{lemma}\label{informationTheoryLem}
If $X$ is a uniform random bit and $A$ is a correlated random variable,
then if $f$ is any function so that $f(A)=X$ with at least $51\%$ probability,
then $I(X:A)\geq 2\cdot 10^{-4}$.
\end{lemma}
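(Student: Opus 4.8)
The plan is to lower-bound the mutual information $I(X:A)$ in terms of the success probability of the best predictor, which is a standard application of Fano-type reasoning (or, equivalently, of Pinsker's inequality combined with the data-processing inequality). First I would pass from $A$ to $Y = f(A)$: by the data-processing inequality, $I(X:A) \ge I(X:Y)$, so it suffices to prove the bound when $A = Y \in \{0,1\}$ is a single bit satisfying $\Pr[Y=X] \ge 0.51$. Now both $X$ and $Y$ are bits, with $X$ uniform, so the joint distribution of $(X,Y)$ is completely determined by the two numbers $\Pr[Y=1\mid X=0]$ and $\Pr[Y=0\mid X=1]$, whose average is at most $0.49$.

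The main computation is then to lower-bound $I(X:Y) = H(Y) - H(Y\mid X)$ over all such joint distributions. Writing $\alpha = \Pr[Y=1\mid X=0]$ and $\beta = \Pr[Y=0\mid X=1]$, we have $H(Y\mid X) = \tfrac12 h(\alpha) + \tfrac12 h(\beta)$ where $h(\cdot)$ is the binary entropy, and $H(Y) = h\!\left(\tfrac{\alpha + (1-\beta)}{2}\right)$. The constraint is $\tfrac{\alpha+\beta}{2} \le 0.49$. The cleanest route is to invoke Pinsker's inequality in the form: $I(X:Y) = \E_X\big[ D_{\mathrm{KL}}(P_{Y\mid X} \,\|\, P_Y)\big]$, and use that this is at least a constant times the squared total variation distance between the two conditional laws $P_{Y\mid X=0}$ and $P_{Y\mid X=1}$. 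Since a predictor achieving advantage $\ge 0.51$ forces $\|P_{Y\mid X=0} - P_{Y\mid X=1}\|_{\mathrm{TV}} \ge 0.02$ (as $X$ is uniform, the optimal Bayes predictor has error $\tfrac12(1 - \|P_{Y\mid X=0}-P_{Y\mid X=1}\|_{\mathrm{TV}})$, and this error is at most $0.49$), one gets $I(X:Y) \ge c\,(0.02)^2$ for an explicit constant $c$; tracking constants carefully (Pinsker gives $D_{\mathrm{KL}} \ge 2\,\mathrm{TV}^2$ in nats, and one should convert to the base used for $I$, and there is a symmetrization subtlety because $P_Y$ is the mixture rather than $P_{Y\mid X=1-x}$) yields the claimed $2\cdot 10^{-4}$. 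Alternatively, one can just directly minimize $I(X:Y)$ as a smooth function of $(\alpha,\beta)$ on the polytope $\{\alpha,\beta\in[0,1],\ \alpha+\beta\le 0.98\}$; by convexity/symmetry the minimum is attained at $\alpha=\beta=0.49$, giving $I = h(0.5) - h(0.49) = 1 - h(0.49)$, and a Taylor expansion of $h$ around $1/2$ shows $1 - h(0.49) = \Theta((0.01)^2) \ge 2\cdot 10^{-4}$.

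I expect the only real obstacle to be bookkeeping of constants: one must be careful about the logarithm base in the definition of mutual information, about the factor of $\tfrac12$ coming from $X$ being uniform, and about the gap between "$f(A)$ predicts $X$ with probability $\ge 0.51$" and the cleanest quantity to bound (the TV distance between the two conditional distributions of $A$, or of $f(A)$). The conceptual content—data-processing plus Pinsker plus the Bayes-error/TV identity for uniform priors—is entirely routine; the statement is deliberately stated with a slack constant ($2\cdot 10^{-4}$ rather than the sharp $1-h(0.49)\approx 2.9\cdot 10^{-4}$ in bits) precisely so that no delicate estimate is needed. I would therefore present the short argument via the Bayes-error identity and Pinsker, and note that any fixed positive constant would suffice for the applications.
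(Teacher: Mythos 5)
Your proposal is correct, and your two routes bracket the paper's proof. The paper's argument is essentially your second (``direct minimization'') route: after the same data-processing step $I(X:A)\ge I(X:f(A))$, it bounds $H(X\mid f(A))\le h(0.51)$ by concavity of the binary entropy $h$ --- i.e., the binary case of Fano's inequality, whose value is exactly your symmetric minimizer $\alpha=\beta=0.49$ --- giving $I(X:f(A))\ge \log 2-h(0.51)\ge 2\cdot 10^{-4}$ (a bound that holds in either logarithm base: the gap is $\approx 2.0\cdot 10^{-4}$ in nats, so essentially no slack there, and $\approx 2.9\cdot 10^{-4}$ in bits, as you note). Your primary route --- the Bayes-error/total-variation identity plus Pinsker applied to the two terms $D_{\mathrm{KL}}(P_{Y\mid X=x}\,\|\,P_Y)$ --- is a genuinely different but equally routine derivation; its one delicate point is the one you flag: $P_Y$ is the midpoint mixture, so each conditional is at total-variation distance only $d/2$ from it, which yields $I\ge d^2/2\ge (0.02)^2/2=2\cdot 10^{-4}$ in nats, i.e.\ the constant comes out with no room to spare (and comfortably in bits). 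What the entropy/Fano computation buys is the sharp constant in two lines; what Pinsker buys is that it applies verbatim to $A$ itself (the Bayes-error/TV identity already holds for the conditional laws of $A$), so data processing is not even needed. Your ``convexity/symmetry'' justification for the minimizer's location is correct but terse (convexity of $I$ in the channel pushes the minimum to the face $\alpha+\beta=0.98$, and symmetry plus convexity puts it at the midpoint); the concavity bound on $H(X\mid f(A))$ used in the paper sidesteps that discussion entirely.
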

\begin{proof}
This is a standard result in information theory, and the simple proof is included here for the sake of completeness.
We begin by showing that $I(X:f(A)) \geq   2\cdot 10^{-4}.$
This is because the conditional entropy, $H(X|f(A)),$ is the expectation
over $f(A)$ of $h(q)=-q\log(q)-(1-q)\log(1-q)$, where $q$ is the probability that $X=f(A)$
conditional on that value of $f(A).$ Since $\E[q]\geq 51\%$ and since $h$ is concave,
we have that $H(X|f(A)) \leq h(0.51) < \log(2)-2\cdot 10^{-4}.$ Therefore, we have that
$$
I(X:f(A)) = H(X)-H(X|f(A)) \geq \log(2)-(\log(2)-2\cdot 10^{-4}) = 2\cdot 10^{-4}.
$$
The lemma now follows from the data processing inequality, i.e., the fact that $I(X:A) \geq I(X:f(A)).$
\end{proof}
In order to bound $I(X:\{a_{i,j}\})$ from above, we note that the $a_{i,j}$'s are independent conditional on $X,$
and therefore that
\begin{equation} \label{eqn:info-ub}
I(X:(a_{i,j})_{(i,j)\in[n]\times [m]}) \leq \sum_{(i,j)\in[n]\times [m]} I(X:a_{i,j}).
\end{equation}
By symmetry, it is clear that all of the $a_{i,j}$'s are the same,
so it suffices to consider $I(X:a)$ for $a$ being one of the $a_{i,j}.$
We prove the following technical lemma:
\begin{lemma}\label{infBoundLem}
For all $(i, j) \in [n] \times [m],$ it holds
$I(X:a_{i, j}) = O(k^2 \epsilon^4 / (m^2n^2)).$
\end{lemma}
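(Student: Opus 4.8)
The random variable $a := a_{i,j}$ has a very simple description. Write $\lambda := k/(nm)$ for the expected number of points of the Poisson process of rate $k$ landing in bin $(i,j)$. If $X=0$ the mass of this bin is $1/(nm)$, so $a \sim \mathrm{Poi}(\lambda)$; call this law $P_0$. If $X=1$ the mass is $(1+\eps)/(nm)$ or $(1-\eps)/(nm)$, each with probability $1/2$, so $a$ is drawn from the mixture $P_1 := \tfrac12\mathrm{Poi}(\lambda(1+\eps)) + \tfrac12\mathrm{Poi}(\lambda(1-\eps))$. Since $X$ is a uniform bit, $I(X:a)$ equals the Jensen--Shannon divergence between $P_0$ and $P_1$, and the plan is to bound it by a $\chi^2$-type quantity that admits a closed form.

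\textbf{Key step.} Let $r(t) := P_1(t)/P_0(t) = \tfrac12\big(e^{-\lambda\eps}(1+\eps)^t + e^{\lambda\eps}(1-\eps)^t\big)$, and $M := (P_0+P_1)/2$. Using $D(P\|Q) \le \chi^2(P\|Q)$ for each of $D(P_0\|M)$ and $D(P_1\|M)$, and then $M \ge P_0/2$, one gets
\[
I(X:a) \;\le\; \tfrac14\sum_t \frac{(P_0(t)-P_1(t))^2}{M(t)} \;\le\; \tfrac12\sum_t P_0(t)\,(r(t)-1)^2 \;=\; \tfrac12\,\Var_{t\sim P_0}\!\big(r(t)\big).
\]
It remains to compute this variance. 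Since $r$ is a density ratio, $\E_{P_0}[r]=1$. Expanding $r(t)^2$ and applying the Poisson moment generating identity $\E_{t\sim\mathrm{Poi}(\lambda)}[c^t] = e^{\lambda(c-1)}$ to the three values $c=(1+\eps)^2,\ (1-\eps)^2,\ 1-\eps^2$, every factor $e^{\mp\lambda\eps}$ cancels against the linear-in-$\eps$ part of the corresponding $e^{\lambda(c-1)}$, and one is left with the clean identity $\E_{t\sim P_0}[r(t)^2] = \tfrac12(e^{\lambda\eps^2}+e^{-\lambda\eps^2}) = \cosh(\lambda\eps^2)$. Hence $\Var_{P_0}(r) = \cosh(\lambda\eps^2)-1$ and $I(X:a) \le \tfrac12\big(\cosh(\lambda\eps^2)-1\big)$.

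\textbf{Conclusion.} In the regime of interest, $k = o(\sqrt{nm}/\eps^2)$, we have $\lambda\eps^2 = k\eps^2/(nm) = o(1)$, so $\cosh(\lambda\eps^2)-1 \le (\lambda\eps^2)^2$ and therefore $I(X:a_{i,j}) = O(\lambda^2\eps^4) = O(k^2\eps^4/(m^2n^2))$, as claimed. (For larger $k$ the bound is vacuous, since then $k^2\eps^4/(m^2n^2) = (\lambda\eps^2)^2$ exceeds a universal constant while $I(X:a_{i,j})$ never does, so one may assume the small regime without loss of generality.)

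\textbf{Main obstacle.} The one genuinely delicate point — and the reason the bound scales like $\eps^4$ rather than $\eps^2$ — is the exact cancellation of the first-order-in-$\eps$ contributions, which occurs precisely because $\mathcal D'$ perturbs each bin symmetrically by $\pm\eps$. To exploit this cleanly one must (i) work with a divergence whose pointwise second-order behaviour is exposed, which is why I pass from Jensen--Shannon to the $\chi^2$ upper bound, and (ii) symmetrize via $M \ge P_0/2$ so that the controlling quantity is the single variance $\Var_{P_0}(P_1/P_0)$ rather than the asymmetric $\chi^2(P_0\|P_1)$, which only yields the weaker $O(\lambda\eps^2)$ estimate. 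Everything else reduces to the elementary Poisson generating-function computation sketched above.
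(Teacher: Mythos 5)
Your proof is correct, but it takes a genuinely different route from the paper's. The paper works directly with the quantity $\sum_\ell \Pr(a=\ell)\bigl(1-\Pr(a=\ell|X=0)/\Pr(a=\ell|X=1)\bigr)^2$ and controls the likelihood ratio $\frac{1}{2}\bigl(e^{-\lambda\eps}(1+\eps)^\ell+e^{\lambda\eps}(1-\eps)^\ell\bigr)$ by Taylor expansion (noting, as you do, that the odd-order terms in $\eps$ cancel), with a case split on $\lambda=k/(nm)\lessgtr 1$: for small $\lambda$ it sums the resulting $O(\eps^4\ell^4)$ terms against Poisson factorial moments and separately kills the tail $\Pr(a>2\eps^{-1})$, while for large $\lambda$ it truncates to $|a-\lambda|\lesssim\sqrt{\lambda}\log(mn)$ and argues via a Gaussian-type approximation of the ratio. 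You instead identify $I(X:a)$ with the Jensen--Shannon divergence of $P_0=\mathrm{Poi}(\lambda)$ against the mixture $P_1$, pass to the symmetrized $\chi^2$ bound $\tfrac12\Var_{P_0}(P_1/P_0)$, and evaluate that variance \emph{exactly} via $\E_{\mathrm{Poi}(\lambda)}[c^t]=e^{\lambda(c-1)}$, obtaining the closed form $I(X:a)\le\tfrac12\bigl(\cosh(\lambda\eps^2)-1\bigr)$; the cancellation that the paper extracts order-by-order appears in your argument as the exact cancellation of the $e^{\mp\lambda\eps}$ factors in the moment computation. What your approach buys is uniformity and rigor with no approximation bookkeeping: a single computation valid for all $\lambda$, with the large-$k$ regime dispatched by the trivial bound $I(X:a)\le\log 2$ (so the claimed $O((\lambda\eps^2)^2)$ bound holds there vacuously), and it even yields an explicit constant. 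What the paper's pointwise expansion buys is a template that transfers to the more delicate Lemma~\ref{SILem}, where the conditioning on $c_i$ makes a clean closed-form evaluation less available and the ratio-expansion estimates (\ref{smallwtEqn}) and (\ref{poissonerrEqn}) are reused directly.
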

The proof of this lemma is technical and is deferred to Appendix~\ref{lbAppend}.
The essential idea is that we condition on whether or not $\lambda: = k/(nm)\geq 1.$
If $\lambda < 1$, then the probabilities of seeing $0$ or $1$ samples
are approximately the same, and most of the information
comes from how often one sees exactly $2$ samples.
For $\lambda \geq 1$, we are comparing a Poisson distribution to a mixture
of Poisson distributions with the same average mean,
and we can deal with the information theory by making a Gaussian approximation.

By Lemma \ref{infBoundLem}, (\ref{eqn:info-ub}) yields that
$
I(X:(a_{i,j})_{(i,j)\in[n]\times [m]}) = O(k^2 \eps^4/mn) = o(1).
$
In conjunction with Lemma \ref{informationTheoryLem},
this implies that $o(\sqrt{mn}/\eps^2)$ samples are insufficient to reliably distinguish
an element of $\mathcal{D}$ from an element of $\mathcal{D'}.$
To complete the proof, it remains to show that elements of $\mathcal{D}$ are all product distributions,
and that most elements of $\mathcal{D'}$ are far from product distributions.
The former follows trivially, and the latter is not difficult. We show:
\begin{lemma}\label{notProductLem}
With $99\%$ probability a sample from $\mathcal{D'}$ is $\Omega(\eps)$-far from being a product distribution.
\end{lemma}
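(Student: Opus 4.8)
The plan is to argue that a typical $\nu\sim\mathcal{D'}$ cannot be close to any product distribution because it is \emph{itself} $\Omega(\eps)$-far from the uniform distribution $U$ on $[n]\times[m]$, while \emph{both} of its marginals are $o(\eps)$-close to uniform; these two facts are incompatible with being close to a product distribution. Throughout I write the entries of $\nu$ as $\nu(i,j)=\tfrac{1+\eps X_{i,j}}{nm}$ with $X_{i,j}\in\{\pm1\}$ i.i.d.\ uniform, and I recall from the discussion above that it suffices to work with the normalized distribution $\bar\nu:=\nu/\|\nu\|_1$, since $\|\nu\|_1=\Theta(1)$.

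First I would control the total mass $M:=\|\nu\|_1=\tfrac1{nm}\sum_{i,j}(1+\eps X_{i,j})$. We have $\E[M]=1$ and $\Var(M)=\eps^2/(nm)$, so Chebyshev's inequality gives $|M-1|=O(\eps/\sqrt{nm})$ with probability at least $0.995$. On this event, for every cell $(i,j)$,
$$\Bigl|\bar\nu(i,j)-\tfrac1{nm}\Bigr| = \frac{|\eps X_{i,j}-(M-1)|}{nmM} \geq \frac{\eps-|M-1|}{nmM},$$
and summing over all $nm$ cells yields $\|\bar\nu-U\|_1\geq(\eps-|M-1|)/M=(1-o(1))\eps$ (here, and below, I use that $n$ and $m$ are large, which we have already reduced to). This is the first property.

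Next, for the marginals: the $i$-th marginal of $\nu$ satisfies $\nu_1(i)-\tfrac1n=\tfrac{\eps}{nm}\sum_{j=1}^m X_{i,j}$, and since $\E\bigl|\sum_j X_{i,j}\bigr|\leq\sqrt m$ we get $\E\|\nu_1-U_1\|_1\leq\eps/\sqrt m$; Markov's inequality then gives $\|\nu_1-U_1\|_1=O(\eps/\sqrt m)$ with probability at least $0.995$, and combined with the bound on $|M-1|$ the same estimate holds for $\|\bar\nu_1-U_1\|_1$. Symmetrically $\|\bar\nu_2-U_2\|_1=O(\eps/\sqrt n)$ with probability at least $0.995$. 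A union bound leaves probability at least $0.99$ that all three events hold, and since $n,m$ are large both marginal errors are $o(\eps)$.

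Finally I would combine these. Suppose toward a contradiction that $\|\bar\nu-r\|_1\leq\eps/10$ for some product distribution $r=r_1\times r_2$. Marginalizing only contracts $\ell_1$-distance, so $\|\bar\nu_a-r_a\|_1\leq\eps/10$, whence $\|r_a-U_a\|_1\leq\eps/10+o(\eps)\leq\eps/5$ for $a=1,2$. Using the standard bound $\|r_1\times r_2-U_1\times U_2\|_1\leq\|r_1-U_1\|_1+\|r_2-U_2\|_1$, this gives $\|r-U\|_1\leq 2\eps/5$, and the triangle inequality then forces $\|\bar\nu-U\|_1\leq\eps/10+2\eps/5=\eps/2$, contradicting the first property. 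Hence $\bar\nu$ is $(\eps/10)$-far from every product distribution, which (since $\|\nu\|_1=\Theta(1)$) proves the lemma. The argument is routine; the only points requiring attention are bookkeeping the failure probabilities so the union bound stays below $1\%$, and checking that the $o(\eps)$ slack coming from the marginals is genuinely smaller than the constant gap between $\|\bar\nu-U\|_1$ and $\eps/2$ once $n,m$ are taken large enough — there is no real obstacle.
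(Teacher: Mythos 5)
Your proof is correct, but it pivots on a different comparison point than the paper's. You show that the normalized measure $\bar\nu=\nu/\|\nu\|_1$ is $\approx\eps$-far from the uniform distribution $U$, while both of its marginals are $O(\eps/\sqrt m)$- and $O(\eps/\sqrt n)$-close to uniform in $\ell_1$, and you then rule out any nearby product distribution via marginalization contraction together with the hybrid bound $\|r_1\times r_2-U_1\times U_2\|_1\le\|r_1-U_1\|_1+\|r_2-U_2\|_1$. The paper instead compares $\nu$ to the product of its \emph{own} marginals: Claim~\ref{notProdLem} shows that closeness to some product measure forces closeness to $\nu_1\times\nu_2/\|\nu\|_1$, and a pointwise count (each marginal within a $(1\pm\eps/10)$ factor of uniform on at least half of its coordinates, hence the product within $(1\pm\eps/2)/(nm)$ on a quarter of the cells, where $\nu$ deviates by $\eps/(nm)$) gives $\|\nu-\nu_1\times\nu_2/\|\nu\|_1\|_1\ge\eps/8$. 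Both routes are elementary and of comparable length; yours exploits the fact that this particular $\mathcal{D'}$ is a $\pm\eps$ perturbation of uniform, which is what legitimizes uniform as the pivot, whereas the paper's claim is construction-independent and is reused verbatim in Lemma~\ref{notProduct2Lem}, where the hard instance has heavy rows and is not close to uniform, so your uniform-pivot argument would not port directly. One small numerical point, which you flagged yourself: three events each failing with probability $0.005$ only give $98.5\%$, so tighten the Chebyshev/Markov thresholds (e.g.\ failure probability $1/300$ each) to reach the stated $99\%$; this costs nothing since $n$ and $m$ may be assumed larger than any fixed constant.
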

\begin{proof}
For this, we require the following simple claim:
\begin{claim}\label{notProdLem}
Let $\mu$ be a measure on $[n]\times [m]$ with marginals $\mu_1$ and $\mu_2.$
If $\|\mu-\mu_1\times \mu_2/\|\mu\|_1\|_1>\eps\|\mu\|_1$, then $\mu$ is
at least $\eps \|\mu\|_1/4$-far from any product measure.
\end{claim}
\begin{proof}
By normalizing, we may assume that $\|\mu\|_1=1$.
Suppose for the sake of contradiction that for some measures
$\nu_1,\nu_2$ it holds $\|\mu-\nu_1\times \nu_2\|_1 \leq \eps/4.$
Then, we must have that $\|\mu_i-\nu_i\|_1 \leq \eps/4.$ This means that
\begin{align*}
\|\mu-\mu_1\times\mu_2\|_1 & \leq \|\mu-\nu_1\times \nu_2\|_1 + \|\nu_1\times \nu_2 - \mu_1\times \nu_2\|_1 + \|\mu_1\times \nu_2 - \mu_1\times \mu_2\|_1\\
& \leq \eps /4 + \|\nu_2\|_1\|\mu_1-\nu_1\|_1 + \|\mu_1\|_1\|\mu_2-\nu_2\|_1 \\
& \leq {\eps/4(3+\eps/4)} \leq \eps \;,
\end{align*}
which yields the desired contradiction.
\end{proof}
In light of the above claim, it suffices to show that with $99\%$ probability
over the choice of $\nu$ from $\mathcal{D'}$ we have
$\|\nu-\nu_1\times \nu_2/\|\nu\|\|_1 = \Omega(\eps).$
For this, we note that when $n$ and $m$ are sufficiently large constants,
with $99\%$ probability we have that:  (i) $|\|\nu\|_1-1| \leq \eps/10,$
(ii) $\nu_1$ has mass in the range $[(1-\eps/10)/n,(1+\eps/10)/n]$ for at least half
of its points, and (iii) $\nu_2$ has mass in the range $[(1-\eps/10)/m,(1+\eps/10)/m]$ for at least half of its points.
If all of these conditions hold, then for at least a quarter of all points
the mass assigned by $\nu_1\times \nu_2/\|\nu\|_1$
is between $(1-\eps/2)/(nm)$ and $(1+\eps/2)/(nm).$
In such points, the difference between this quantity
and the mass assigned by $\nu$ is at least $\eps/(2mn).$
Therefore, under these conditions, we have that
$$\|\nu-\nu_1\times \nu_2/\|\nu\|_1\|_1 \geq (nm/4)(\eps/(2mn)) = \epsilon/8 = \Omega(\eps). $$
This completes the proof.
\end{proof}

\subsubsection{The $\Omega(n^{2/3}m^{1/3}\epsilon^{-4/3})$ Lower Bound}
In this subsection, we prove the other half of the lower bound.
As in the proof of the previous subsection,
it suffices to exhibit a pair of distributions $\mathcal{D},\mathcal{D'}$ over measures on $[n]\times [m]$,
so that with $99\%$ probability each of these measures has total mass $\Theta(1),$
the measures from $\mathcal{D}$ are product measures
and those from $\mathcal{D'}$ are $\Omega(\eps)$-far from being product measures,
and so that if a Poisson process with parameter $k=o(n^{2/3}m^{1/3}/\eps^{-4/3})$
is used to draw samples from $[n]\times [m]$ by way of a uniformly random measure
from either $\mathcal{D}$ or $\mathcal{D'}$, it is impossible to reliably determine
which distribution the measure came from.

We start by noting that it suffices to consider only the case where $k\leq n/2,$
since otherwise the bound follows from the previous subsection.
We define the distributions over measures as follows:
\begin{itemize}
\item When generating an element from either $\mathcal{D}$ or $\mathcal{D'},$
we generate a sequence $c_1,\ldots,c_n,$ where $c_i$ is $1/k$ with probability $k/n$ and $1/n$ otherwise.
Furthermore, we assume that the $c_i$'s are selected independently of each other.

\item Then $\mathcal{D}$ returns the measure $\mu$ where $\mu(i,j)=c_i/m.$

\item The distribution $\mathcal{D'}$ generates the measure $\nu,$
where $\nu(i,j)=1/(km)$ if $c_i=1/k$ and otherwise $\nu(i,j)$
is randomly either $(1+\eps)/(nm)$ or $(1-\eps)/(nm).$
\end{itemize}
It is easy to verify that with $99\%$ probability that $\|\mu\|_1,\|\nu\|_1=\Theta(1).$
It is also easy to see that $\mathcal{D}$ only generates product measures.
We can show that $\mathcal{D'}$ typically generates measures far from product measures:
\begin{lemma}\label{notProduct2Lem}
With $99\%$ probability a sample from $\mathcal{D'}$ is $\Omega(\eps)$-far from being a product distribution.
\end{lemma}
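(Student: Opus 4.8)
The plan is to reduce, via Claim~\ref{notProdLem}, to an estimate on marginals alone: since $\|\nu\|_1=\Theta(1)$ with probability at least $99\%$, it suffices to show that with probability at least $99\%$ over $\nu\sim\mathcal{D'}$ one has $\bigl\|\nu-\nu_1\times\nu_2/\|\nu\|_1\bigr\|_1=\Omega(\eps)$, where $\nu_1,\nu_2$ are the marginals; Claim~\ref{notProdLem}, applied with a suitable constant multiple of $\eps$, then certifies that $\nu$ is $\Omega(\eps)$-far from every product measure. We may also assume that $n$ exceeds a sufficiently large absolute constant, since for bounded $n$ (and hence bounded $m\le n$) the target bound $n^{2/3}m^{1/3}/\eps^{4/3}$ is dominated by the $\Omega(\eps^{-2})$ bound already proven on $[2]\times[2]$. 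Let $T=\{i:c_i=1/k\}$ be the heavy rows and $L=n-|T|$ the number of light rows; since $k\le n/2$, a Chernoff bound gives $L\ge n/4$ except with probability $e^{-\Omega(n)}$. As the heavy rows contribute nonnegatively to the $\ell_1$ distance, I would discard them and lower bound using only the light rows.

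For a light row $i$, write $\nu(i,j)=(1+\eps\,s_{i,j})/(nm)$ with the $s_{i,j}\in\{-1,1\}$ i.i.d.\ uniform, and let $R_i=\#\{j:s_{i,j}=1\}$. The crucial point is that, for fixed light $i$, the vector $j\mapsto \nu_1(i)\nu_2(j)/\|\nu\|_1$ is \emph{nearly constant in $j$}: $\nu_1(i)$ and $\|\nu\|_1$ do not depend on $j$ at all, while $\nu_2(j)$ equals the $j$-independent value $\overline{\nu_2}=\|\nu\|_1/m$ plus $\tfrac{\eps}{nm}(X_j-\overline X)$, where $X_j:=\sum_{i\ \mathrm{light}}s_{i,j}$ and $\overline X$ is its average over $j$; typically $|X_j-\overline X|=O(\sqrt L)$. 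Passing through the constant $c_i^\ast:=\nu_1(i)\overline{\nu_2}/\|\nu\|_1$ and using that the $\ell_1$-nearest constant to a two-valued vector is a (weighted) median, I get for each light row
\begin{align*}
\sum_{j=1}^{m}\Bigl|\nu(i,j)-\tfrac{\nu_1(i)\nu_2(j)}{\|\nu\|_1}\Bigr|
&\;\ge\; \min_{c\in\R}\sum_{j=1}^{m}\bigl|\nu(i,j)-c\bigr|\;-\;\tfrac{\nu_1(i)}{\|\nu\|_1}\sum_{j=1}^{m}\bigl|\nu_2(j)-\overline{\nu_2}\bigr| \\
&\;=\; \tfrac{2\eps}{nm}\,\min(R_i,\,m-R_i)\;-\;\tfrac{\nu_1(i)}{\|\nu\|_1}\sum_{j=1}^{m}\bigl|\nu_2(j)-\overline{\nu_2}\bigr| .
\end{align*}

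Summing over the light rows and using $\sum_{i\ \mathrm{light}}\nu_1(i)/\|\nu\|_1\le 1$, I obtain $\bigl\|\nu-\nu_1\times\nu_2/\|\nu\|_1\bigr\|_1\ \ge\ Y-E$, where $Y:=\tfrac{2\eps}{nm}\sum_{i\ \mathrm{light}}\min(R_i,m-R_i)$ and $E:=\sum_{j=1}^{m}\bigl|\nu_2(j)-\overline{\nu_2}\bigr|$. Conditioning on the light set with $L\ge n/4$, the summands of $Y$ are independent, each lies in $[0,\eps/n]$, and $\E[\min(R_i,m-R_i)]=\tfrac{m-\E|2R_i-m|}{2}\ge\tfrac{m-\sqrt m}{2}\ge\tfrac m4$ for every $m\ge 2$, so $\E[Y\mid\text{light set}]\ge\tfrac\eps8$; Hoeffding's inequality then gives $Y\ge\eps/16$ except with probability $e^{-\Omega(n)}$. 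For the error, $E=\tfrac{\eps}{nm}\sum_j|X_j-\overline X|$ with $\E|X_j-\overline X|=O(\sqrt L)$, hence $\E[E]=O(\eps\sqrt L/n)=O(\eps/\sqrt n)$, and Markov's inequality bounds $E\le\eps/32$ except with probability $O(1/\sqrt n)\le 1/100$. A union bound over the events $\{\|\nu\|_1=\Theta(1)\}$, $\{L\ge n/4\}$, $\{Y\ge\eps/16\}$, $\{E\le\eps/32\}$ yields $\bigl\|\nu-\nu_1\times\nu_2/\|\nu\|_1\bigr\|_1\ge\eps/32$ with probability at least $99\%$, and Claim~\ref{notProdLem} finishes the proof.

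The hard part is the error analysis, and specifically making it work for small $m$ (e.g.\ $m=2$), where there is no per-row concentration since $\tfrac1m\sum_j s_{i,j}$ can equal $\pm1$, so one cannot claim that a ``typical'' light row is $\ell_1$-far from every constant vector. This is what forces the variance-type quantity $\min(R_i,m-R_i)$ (equivalently $R_i(m-R_i)$) together with an aggregation of binomial-fluctuation expectations over rows and columns, rather than a row-by-row estimate. A lesser nuisance is that on light rows the normalized product $\nu_1(i)\nu_2(j)/\|\nu\|_1$ is not exactly $1/(nm)$ but only within a $1+O(\eps/\sqrt{nm})$ factor; this slack, and the $\pm\eps\sqrt{nm}/(nm)$-scale fluctuation of $\|\nu\|_1$ itself, must be folded into $E$, which is harmless since both are $o(\eps)$ once summed over all $nm$ bins.
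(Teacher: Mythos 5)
Your proof is correct, but it takes a genuinely different route from the paper's. The paper argues pointwise: with high probability at least half the columns satisfy $\nu_2(j)/\|\nu\|_1\in[(1-\eps/3)/m,(1+\eps/3)/m]$ and at least half the rows satisfy $\nu_1(i)\in[(1-\eps/3)/n,(1+\eps/3)/n]$, so on a constant fraction of the cells the entrywise gap $|\nu(i,j)-\nu_1(i)\nu_2(j)/\|\nu\|_1|$ is at least $\eps/(4nm)$, which sums to $\Omega(\eps)$ and is then fed into Claim~\ref{notProdLem}. You instead work row by row: you lower bound each light row's contribution by its $\ell_1$ distance to the best constant vector, namely $\frac{2\eps}{nm}\min(R_i,m-R_i)$, subtract a single global error term $E=\sum_j\bigl|\nu_2(j)-\|\nu\|_1/m\bigr|$ measuring how far the normalized product is from being constant along a row, and then apply Hoeffding across rows for the main term and Markov for $E$. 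The trade-off: the paper's argument is shorter, while yours is more robust for small $m$ --- when $m=2$ or $3$ a light row has $\nu_1(i)=(1\pm\eps)/n$ with constant probability, so the paper's ``at least half of the $i\in[n]$'' really only delivers a constant fraction of good rows there (harmless, but glossed over), whereas your aggregation of $\min(R_i,m-R_i)$ handles all $m\ge 2$ uniformly, exactly the difficulty you flag. One small slip: the chain $\E[\min(R_i,m-R_i)]\ge\frac{m-\sqrt m}{2}\ge\frac m4$ fails at $m=2,3$, where $\frac{m-\sqrt m}{2}<\frac m4$; however a direct computation gives $\E[\min(R_i,m-R_i)]=m/4$ exactly in those two cases, so your bound $\E[Y\mid \text{light set}]\ge\eps/8$, and with it the rest of the argument, stands as written.
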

\begin{proof}
If $\nu$ is a random draw from $\mathcal{D'}$ and $\nu_2$ is second marginal distribution,
it is easy to see that with high probability it holds
$\nu_2(j)/\|\nu\|_1 \in [(1-\eps/3)/m,(1+\eps/3)/m]$ for at least half of the $j\in [m].$
Also, with high probability, for at least half of the $i\in [n]$ we have that
$\nu_1(i)\in [(1-\eps/3)/n,(1+\eps/3)/n].$
For such pairs $(i,j)$, we have that $|\nu(i,j)-\nu_1(i)\nu_2(j)/\|\nu\|_1| \geq \eps/(4nm),$
and thus
$$
\|\nu-\nu_1\times \nu_2/\|\nu\|_1\|_1 \geq (nm/4)(\eps/4nm) \geq \eps/16 = \Omega(\eps).
$$
The lemma follows by Claim~\ref{notProdLem}.
\end{proof}

It remains to show that the Poisson process in question is insufficient to distinguish
which distribution the measure came from with non-trivial probability.
As before, we let $X$ be a uniformly random bit, and let
$\mu$ be a measure drawn from either $\mathcal{D}$ if $X=0$ or $\mathcal{D'}$ if $X=1.$
We run the Poisson process and let $a_{i,j}$ be the number of elements drawn from bin $(i,j)$.
We let $A_i$ be the vector $(a_{i,1},a_{i,2},\ldots,a_{i,m}).$
It suffices to show that the mutual information $I(X:A_1,A_2,\ldots,A_n)$ is small.

Note that the $A_i$'s are conditionally independent on $X$
(though that $a_{i,j}$'s are not, because $a_{i,1}$ and $a_{i,2}$ are correlated due to their relation to $c_i$).
Therefore, we have that
\vspace{-0.2cm}
\begin{equation} \label{eqn:info-ub2}
I(X:A_1,A_2,\ldots,A_n) \leq \sum_{i=1}^n I(X:A_i) = n I(X:A) \;,
\end{equation}
\vspace{-0.1cm}
by symmetry where $A=A_i.$
To complete the proof, we need the following technical lemma:
\begin{lemma}\label{SILem} We have that
$ I(X:A) = O(k^3\eps^4/(n^3m)).$
\end{lemma}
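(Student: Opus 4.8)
The goal is to bound $I(X:A)$ where $A = (a_1,\dots,a_m)$ is the vector of Poisson counts in a single row $i$, under the two competing priors. The plan is to first decompose according to the value of $c_i$, which is the dominant source of randomness within a row. Under $\mathcal{D}$, conditioned on $c_i = 1/n$ the counts $a_1,\dots,a_m$ are i.i.d.\ $\mathrm{Poi}(k/(nm))$, and conditioned on $c_i = 1/k$ they are i.i.d.\ $\mathrm{Poi}(1/m)$; under $\mathcal{D'}$, the $c_i = 1/k$ case is identical, while in the $c_i = 1/n$ case each $a_j$ is $\mathrm{Poi}((1\pm\eps)/(nm))$ with the sign chosen independently per $j$. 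So conditioned on the event $\{c_i = 1/k\}$ (probability $k/n$), the distribution of $A$ is the \emph{same} under $X=0$ and $X=1$, contributing nothing to the information; all the information is concentrated in the complementary event $\{c_i = 1/n\}$, which has probability $1 - k/n = \Theta(1)$.

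The heart of the argument is therefore: conditioned on $c_i = 1/n$, bound the mutual information between $X$ and the vector $(a_1,\dots,a_m)$ of $m$ independent coordinates, where each coordinate is $\mathrm{Poi}(\lambda)$ under $X=0$ and a $\tfrac12/\tfrac12$ mixture of $\mathrm{Poi}(\lambda(1+\eps))$ and $\mathrm{Poi}(\lambda(1-\eps))$ under $X=1$, with $\lambda = k/(nm)$. By the standard tensorization/subadditivity argument already used in Eq.~(\ref{eqn:info-ub}), and since the per-coordinate information is the same by symmetry, this is at most $m$ times the single-coordinate information $I(X:a)$ for one such $a$. Now I would invoke exactly the estimate of Lemma~\ref{infBoundLem} (whose proof is deferred to Appendix~\ref{lbAppend}): comparing a Poisson to a same-mean mixture of Poissons with relative perturbation $\eps$ gives $I(X:a) = O(\lambda^2\eps^4) = O(k^2\eps^4/(n^2m^2))$ in the relevant regime $\lambda \le 1$ (recall we have reduced to $k \le n/2$, and since $m \ge 2$ we get $\lambda = k/(nm) \le 1/4 < 1$, so we are squarely in the small-$\lambda$ case where the information comes from the probability of seeing exactly two samples). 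Multiplying by $m$ coordinates gives $O(k^2\eps^4/(n^2 m))$ conditioned on $c_i = 1/n$, and then weighting by the probability $\Theta(1)$ of that event, plus accounting for the mutual information between $X$ and the indicator $\mathbf 1[c_i = 1/n]$ itself (which is zero, since the law of $c_i$ does not depend on $X$), yields $I(X:A) = O(k^2\eps^4/(n^2 m))$.

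The last step is to reconcile this with the claimed bound $O(k^3\eps^4/(n^3 m))$. Since we have reduced to the regime $k \le n/2$, we have $k^2/n^2 \le (k/n)\cdot(1/2) = O(k/n)$ — wait, that goes the wrong way; rather, the point is that the $c_i=1/n$ event only contributes when the row is ``light'', but the subtlety is that within a light row the counts themselves are sparse: with $\lambda = k/(nm)$, the chance that a given row produces \emph{any} samples at all is $O(m\lambda) = O(k/n)$, so rows that produce no samples contribute zero information, and one should split $I(X:A)$ further according to whether $A$ is the all-zero vector. Conditioned on $A \ne \mathbf 0$ (an event of probability $O(k/n)$), the information per coordinate picks up an extra factor, and carrying this through replaces one factor of $m\lambda^2 = k^2/(n^2 m)$ of the naive bound by $(k/n)\cdot(k^2/(n^2 m)) = k^3/(n^3 m)$. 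I expect this refinement — correctly tracking the sparsity of the light rows, i.e.\ that conditioning on a nonempty row costs a factor $k/n$ — to be the main obstacle; the rest is the bookkeeping of subadditivity of mutual information over the $m$ coordinates together with the already-established single-bin estimate from Lemma~\ref{infBoundLem}. Once $I(X:A) = O(k^3\eps^4/(n^3 m))$ is in hand, Eq.~(\ref{eqn:info-ub2}) gives $I(X:A_1,\dots,A_n) = O(k^3\eps^4/(n^2 m))$, which is $o(1)$ precisely when $k = o(n^{2/3}m^{1/3}\eps^{-4/3})$, completing the lower bound via Lemma~\ref{informationTheoryLem}.
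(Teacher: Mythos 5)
Your opening reduction---bounding $I(X:A)\le I(X:A\mid c_i)=\Theta(1)\cdot I(X:A\mid c_i=1/n)$ and then applying subadditivity over the $m$ bins together with the per-bin estimate of Lemma~\ref{infBoundLem}---is valid but can only ever give $O(k^2\eps^4/(n^2m))$, and your proposed repair does not close the remaining factor of $k/n$. The claim that ``conditioning on a nonempty row costs a factor $k/n$'' is not correct: the sparsity of a light row is already fully priced into the per-bin bound $O(\lambda^2\eps^4)$ with $\lambda=k/(nm)$, since the $\lambda^2$ there is precisely the probability of the informative two-sample events. In fact $I(X:A\mid c_i=1/n)=\Theta(k^2\eps^4/(n^2m))$, not merely $O(\cdot)$, and the chain-rule identity $I(X:A\mid c_i=1/n)=I(X:\mathbf{1}[A\neq\mathbf{0}]\mid c_i=1/n)+\Pr(A\neq\mathbf{0}\mid c_i=1/n)\, I(X:A\mid A\neq\mathbf{0},c_i=1/n)$ shows that conditioning on nonemptiness simply boosts the conditional information by a compensating factor of order $n/k$; no decomposition carried out after you have conditioned on $c_i=1/n$ can beat $\Theta(k^2\eps^4/(n^2m))$. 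A decisive sanity check: if the heavy rows were absent (all $c_i=1/n$), light rows would be exactly as sparse, yet the per-row information genuinely is $\Theta(k^2\eps^4/(n^2m))$---this is nothing but the $\Omega(\sqrt{nm}/\eps^2)$ construction---so sparsity alone cannot produce the extra $k/n$. The heavy-row alternative is essential, and your very first step (conditioning on $c_i$) throws its effect away.

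The missing idea is that the factor $k/n$ comes from the heavy rows acting as ``noise'' that masks the informative events, so one must work with the unconditional likelihood ratios. The paper bounds $I(X:A)$ by $\sum_v O\bigl(\Pr(A=v)\bigl(1-\Pr(A=v\mid X=0)/\Pr(A=v\mid X=1)\bigr)^2\bigr)$ and splits according to $|v|_1$. For $|v|_1\ge 2$---the only count vectors carrying an order-$\eps^2$ relative discrepancy---one checks that $\Pr(A=v,c_i=1/n\mid X=x)\le O(k/n)\cdot\Pr(A=v,c_i=1/k)$, while the heavy-row part of the numerator cancels because it is identical under $X=0$ and $X=1$; hence each such term is $O(k/n)$ times the corresponding conditional $\chi^2$-type term, and summing gives $O(k/n)\cdot I(X:A\mid c_i=1/n)=O(k/n)\cdot O(k^2\eps^4/(n^2m))=O(k^3\eps^4/(n^3m))$. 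The terms with $|v|_1\le 1$ are handled by a direct expansion (their ratios are $\exp\bigl(O(\eps^2k^2n^{-2}m^{-1}+\eps^2kn^{-1}m^{-1})\bigr)$ and $\Pr(|A|_1=1)=O(k/n)$), contributing $o(1/n)$. So your conditional bound is indeed the right ingredient, but it enters only after being multiplied by the $O(k/n)$ damping factor, which your sparsity-based argument does not and cannot supply.
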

Morally speaking, this lemma holds because if all the $c$'s were $1/n,$
we would get a mutual information of roughly $O(k^2\eps^4/(n^2m)),$
by techniques from the last section.
However, the possibility that $c=1/k$ adds sufficient amount of ``noise''
to somewhat decrease the amount of available information.
The formal proof is deferred to Appendix \ref{lbAppend}.
Combining (\ref{eqn:info-ub2}) and the above lemma, we obtain that
$$I(X:A_1,\ldots,A_n)=    O(k^3\eps^4/(n^2m) = o(1).$$
This completes the proof.

\subsection{Lower Bound for Hellinger Closeness Testing} \label{ssec:hellinger-lower}

We prove that our upper bound for Hellinger distance closeness is tight up to polylogarithmic factors.
\begin{proposition} \label{prop:Hellinger-lower}
Any algorithm that given sample access to distributions $p$ and $q$ on $[n]$ 
that distinguishes between $p=q$ and $H^2(p,q)>\eps$ with probability at least $2/3$ 
must take $\Omega(\min(n^{2/3}/\eps^{4/3},n^{3/4}/\eps))$ samples.
\end{proposition}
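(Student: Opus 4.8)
The plan is to prove the two lower bounds $\Omega(n^{2/3}/\eps^{4/3})$ and $\Omega(n^{3/4}/\eps)$ separately. Since $n^{2/3}/\eps^{4/3}\le n^{3/4}/\eps$ precisely when $\eps\ge n^{-1/4}$, it suffices to establish the first bound in the range $\eps\ge n^{-1/4}$ and the second in the range $\eps\le n^{-1/4}$; together they yield $\Omega(\min(n^{2/3}/\eps^{4/3},n^{3/4}/\eps))$ for all $\eps$, matching the upper bound of \propref{prop:hellinger-upper}. Both halves use the information-theoretic framework of \secref{sec:lb}. After the usual Poissonization (draw $\mathrm{Poi}(k)$ samples from each of $p$ and $q$, and relax the ``distributions'' to positive measures of total mass $\Theta(1)$), I would exhibit priors $\mathcal D,\mathcal D'$ over pairs $(p,q)$ of measures on $[n]$ such that every pair drawn from $\mathcal D$ has $p=q$, a pair drawn from $\mathcal D'$ satisfies $H^2(p,q)=\Omega(\eps)$ with probability at least $99\%$, and no $k$-sample algorithm can reliably tell which prior was used unless $k$ is at least the target bound. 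Writing $X\in\{0,1\}$ for the prior and $a_i,b_i$ for the numbers of $p$- and $q$-samples falling in bin $i$, the last point is certified, exactly as in the proofs of Lemmas~\ref{infBoundLem} and~\ref{SILem}, by the bound $I(X;(a_i,b_i)_{i\in[n]})\le\sum_{i}I(X;a_i,b_i)$ together with per-bin estimates showing each term is $o(1/n)$.

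For $\Omega(n^{2/3}/\eps^{4/3})$ (range $\eps\ge n^{-1/4}$, so the target is at most $n$) I would reuse the closeness-testing lower-bound construction underlying the $\Omega(n^{2/3}m^{1/3}/\eps^{4/3})$ bound of \secref{sec:lb}, specialized to one dimension: each bin is independently declared ``heavy'' (mass $\Theta(1/k)$) with probability $\Theta(k/n)$ and ``light'' (mass $\Theta(1/n)$) otherwise; $\mathcal D$ returns $p=q$ equal to this random measure, while $\mathcal D'$ leaves the heavy bins alone and perturbs the light bins. The randomized heavy bins act precisely as in Lemma~\ref{SILem}, depressing the per-bin information by a further $\Theta(k/n)$ factor and so delivering the $n^{2/3}$ rather than $n^{1/2}$ exponent in $n$. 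The one change needed to make the construction speak about $H^2$ instead of $\ell_1$ is that the perturbation of the light bins should be chosen so that the Hellinger distance it produces is $\Theta(\eps)$, comparable to (rather than the square of) the $\ell_1$ distance it produces; this is achieved by zeroing-out-or-doubling the mass of a carefully chosen sub-collection of the light bins, instead of applying a uniform $(1\pm\eps)$ perturbation. That $H^2(p,q)=\Omega(\eps)$ holds for $\mathcal D'$ with high probability is then a routine Chernoff bound on the number of perturbed bins, and the per-bin information estimate is of the same type as Lemma~\ref{SILem}.

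For $\Omega(n^{3/4}/\eps)$ (range $\eps\le n^{-1/4}$, so the target is at least $n$) I would use a two-scale construction that forces the tester to pay an extra factor for ``entering'' the low-probability region on which the instance is actually hard. Concretely: set aside a region $S\subseteq[n]$ of $\Theta(n)$ ``light'' bins carrying total mass $q(S)=\Theta(n^{1/4}\eps)$ (so each light bin has mass $\Theta(\eps/n^{3/4})<1/n$), and put the remaining $1-q(S)\approx 1$ mass on $\Theta(n)$ ``heavy'' bins that are identical in $p$ and $q$; these heavy bins merely soak up a $(1-q(S))$-fraction of every sample. On $S$ I would plant a closeness sub-instance (of the same ``random heavy/light'' flavour as above, on a domain of size $\Theta(n)$) whose conditional distributions $(p|S),(q|S)$ are $\Omega(\eps/q(S))=\Omega(n^{-1/4})$-far in $H^2$ --- so that $H^2(p,q)\ge q(S)\cdot H^2((p|S),(q|S))=\Omega(\eps)$ --- and which cannot be distinguished from equality with $o(n)$ samples drawn from the conditionals; since each sample hits $S$ only with probability $q(S)$, the tester needs $\Omega(n/q(S))=\Omega(n^{3/4}/\eps)$ samples overall. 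As before this is made rigorous by a per-bin mutual-information bound, now concentrated on the $\Theta(n)$ light bins, where the relevant Poisson rate $\lambda=\Theta(k\eps/n^{3/4})$ satisfies $\lambda\le 1$ throughout the relevant range.

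The main obstacle is the per-bin mutual-information estimate in each construction: one must show that the per-bin count law $(a_i,b_i)$ under $\mathcal D'$ --- a mixture (over the heavy/light coin and the random perturbation) of products of Poissons --- is close, in the appropriate divergence and uniformly over the randomized bin masses, to its law under $\mathcal D$, and the bound must be sharp enough to capture the $\Theta(k/n)$ ``noise'' saving so as to produce the correct $n^{2/3}$ and $n^{3/4}$ exponents. As in Appendix~\ref{lbAppend}, this forces a case split according to whether the Poisson rate in the bin is below or above $1$: in the low-rate case the information comes from the ``two samples in a bin'' events and one compares mixtures of small Poissons directly, while in the high-rate case a Gaussian approximation to the Poissons is used. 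A secondary but routine point is verifying the Hellinger-distance lower bound for $\mathcal D'$ by concentration and, as in \secref{sec:lb}, transferring farness of measures to farness of the normalized distributions.
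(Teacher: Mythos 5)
There is a genuine gap, and it lies in the choice of the hard instances rather than in the information-theoretic scaffolding (Poissonization, measures of mass $\Theta(1)$, and the per-bin bound $I(X:(a_i,b_i)_i)\le\sum_i I(X:a_i,b_i)$ are all fine and match the paper). In your first regime you keep the light bins at mass $\Theta(1/n)$ and create Hellinger distance $\Theta(\eps)$ by zeroing-or-doubling a sub-collection of them; to get $H^2=\Theta(\eps)$ this sub-collection must carry total mass $\Theta(\eps)$, i.e.\ about $\eps n$ bins of mass $1/n$ each. But then $\|p-q\|_2=\Theta(\sqrt{\eps/n})$, which is too large: the per-bin count law first deviates at the two-sample events with discrepancy $\Theta(\eps(k/n)^2)$, and after dividing by the heavy-branch probability $\Theta(k/n)$ the per-bin information is $\Theta(\eps^2 k^3/n^3)$, so the total is $\Theta(\eps^2k^3/n^2)$ and the argument only forces $k=\Omega(n^{2/3}/\eps^{2/3})$ --- and this is not just an analysis artifact, since a split-and-$\ell_2$ test in the spirit of Lemmas~\ref{splitL2Lem} and~\ref{L2TesterImprovedLem} actually distinguishes your instance with $\tilde O(n^{2/3}/\eps^{2/3})$ samples. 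The paper avoids this by making \emph{every} perturbed bin tiny: in its construction a bin is $p_i=q_i=1/(2k)$ with probability $\min(k/n,1/2)$, and otherwise has mass $\eps/n$, equal under $X=0$ and zeroed-or-doubled (to $2\eps/n$ vs.\ $0$) under $X=1$. This keeps $H^2=\Theta(\eps)$ while making the per-bin discrepancy $O((\eps k/n)^2)$, whence the per-bin information is $O(\max(1,n/k)(\eps k/n)^4)$ and a \emph{single} construction yields both $\Omega(n^{2/3}/\eps^{4/3})$ (for $k=o(n)$) and $\Omega(n^{3/4}/\eps)$ (for $k\ge n$).

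Your second regime inherits the problem and adds one of its own. Placing the heavy, identical bins in a region disjoint from $S$ means they contribute no masking noise at all: under Poissonization the tester can discard them, and the whole question reduces to the planted sub-instance on $S$ seen through $\mathrm{Poi}(kq(S))$ conditional samples (this dilution step itself can be made rigorous). But then everything hinges on exhibiting a pair of conditional distributions on a domain of size $\Theta(n)$ with $H^2=\Omega(n^{-1/4})$ that requires $\Omega(n)$ samples --- which is exactly the proposition at the boundary $\eps'=n^{-1/4}$, so it cannot be cited, and the sub-instance you propose (``the same random heavy/light flavour as above'') is the first-regime construction, which by the computation above is only hard for $o\bigl(n^{2/3}/\eps'^{2/3}\bigr)=o(n^{5/6})$ samples, giving overall only $\Omega(n^{7/12}/\eps)$. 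The essential missing idea in both halves is that, for Hellinger lower bounds, the planted discrepancy must live on bins of individual mass $\eps/n$, all of which are zeroed-or-doubled, co-located in the same domain as the random heavier bins that supply the denominator in the per-bin information estimate; with that change your outline collapses to the paper's single construction and its Lemma~\ref{infBoundLem}/\ref{SILem}-style computation.
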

\begin{proof}
The proof of this lower bound follows our direct information-theoretic approach. 
We let $X$ be randomly either $0$ or $1$. 
We then describe a distribution of pairs of pseudo-distributions $p,q$ on $[n]$ 
so that if $X=0$ then $p=q$ and if $X=1$, $H^2(p,q)\gg \eps$ with $99\%$ probability, 
and so that $\|p\|_1,\|q\|_1=\Theta(1)$ with $99\%$ probability. 
We then show that the mutual information between $X$ 
and the output of $\textrm{Poi}(k)$ samples from each of $p$ and $q$ 
is $o(1)$ for $k = o(\min(n^{2/3}/\eps^{4/3},n^{3/4}/\eps))$.

We begin by describing this distribution. 
For $i=1, \ldots ,n-1$ with probability $\min(k/n,1/2)$ 
we set $p_i=q_i=1/(2k)$, otherwise if $X=0$, 
we set $p_i=q_i=\eps/n$, and if $X=1$ randomly set 
either $p_i=2\eps/n,q_i=0$ or $p_i=0,q_i=2\eps/n$. $p_n=q_n=1/3$.

First, we note that $X=0$, we have $p=q$ 
and if $X=1$, $H^2(p/\|p\|_1, q/\|q\|_1)\gg \eps$ with high probability. 
Furthermore, $\|p\|_1,\|q\|_1=\Theta(1)$ with high probability.

Let $a_i,b_i$ be the number of samples drawn from bin $i$ 
under $p$ and $q$ respectively. 
We wish to bound $I(X:a_1,b_1,\ldots,a_n,b_n)$ from below. By conditional independence, this is
$$
\sum_i I(X:a_i,b_i) \;.
$$
Note that $I(X:a_n,b_n)=0$, 
otherwise the distribution on $(X,a_i,b_i)$ is independent of $i$, 
so we will analyze it ignoring the subscript $i$.

It is not hard to see that if $k=o(n/\eps)$,
\begin{align*}
I(X:a,b) & = \sum_{i,j} O\left(\frac{(\Pr((a,b)=(i,j)|X=0)-\Pr((a,b)=(i,j)|X=1))^2}{\Pr((a,b)=(i,j)|X=0)+\Pr((a,b)=(i,j)|X=1)} \right)\\
& = \sum_{i,j}\frac{(O(k\eps/n)^{\max(2,i+j)})^2}{i!j! \Omega(\min(k/n,1)(1/(2))^{i+j})}\\
& = O(\max(1,n/k)) (\eps k/n)^4.
\end{align*}
If $k=o(n)$, this is
$
O(\eps^4 k^3/n^3), 
$
so the total mutual information with the samples is $O(\eps^4k^3/n^2)$, 
which is $o(1)$ if $k=o(n^{2/3}/\eps^{4/3})$. 
Note that $n^{2/3}/\eps^{4/3}$ is smaller than $n^{3/4}/\eps$ 
if and only if it is less than $n$, so the lower bound in proved in this case. 
Otherwise, if $k>n$ and $k=o(n^{3/4}/\eps)$, 
then the mutual information is $O(n(\eps k/n)^4) = O(\eps^4 k^4/n^3) = o(1),$ 
proving the other case of our bound.
\end{proof}

\subsection{Lower Bound for High Dimensional Independence Testing}

We need to show two lower bounds,
namely $\sqrt{\prod_{i=1}^d n_i}/\epsilon^2$ and $n_i^{1/3}\left(\prod_{j=1}^d n_j\right)^{1/3}/\eps^{4/3}$.
We can obtain both of these from the lower bound constructions from the $2$-variable case.
In particular, for the first bound, we have shown that it takes this many samples to distinguish
between the uniform distribution on $N=\prod_{i=1}^d n_i$ inputs (which is a product distribution),
from a distribution that assigns probability $(1\pm \eps)/N$ randomly to each input
(which once renormalized is probably $\Omega(\eps)$-far from being a product distribution).
For the latter bound, we think of $[n_1]\times\cdots\times[n_d]$ as
$[n_i]\times([n_1]\times\cdots[n_{i-1}]\times[n_{i+1}]\times\cdots\times[n_d])$,
and consider the lower bound construction for $2$-variable independence testing.
It then takes at least $n_i^{1/3}N^{1/3}/\eps^{4/3}$ samples to reliably distinguish
a ``YES'' instance from a ``NO'' instance. Note that in a ``YES'' instance the first
and second coordinates are independent and the distribution on the second coordinate is uniform.
Therefore, in a ``YES'' instance we have a $d$-dimensional product distribution.
On the other hand, a ``NO'' instance is likely $\Omega(\eps)$-far from any distribution
that is a product distribution over just this partition of the coordinates,
and therefore $\Omega(\eps)$ far from any $d$-dimensional product distribution.
This completes the proof.

\subsection{Lower Bound for $k$-Histograms}

We can use the above construction to show that our upper bound for $k$-histograms is in fact tight.
In particular, if we rewrite $[n]$ as $[k]\times [n/k]$
and let the intervals be given by the subsets $[n/k]\times \{i\}$ for $1\leq i\leq k$,
we need to show that $\Omega(\max(\sqrt{n}/\eps^2 ,n^{1/3}k^{1/3}/\eps^{4/3}))$ samples
are required from a distribution $p$ on $[k]\times [n/k]$
to distinguish between the coordinates of $p$ being independent
with the second coordinate having the uniform distribution,
and $p$ being $\eps$-far from any such distribution.
We note that in the lower bound distributions given for each part of our lower bound constructions
for the independence tester, the ``YES'' distributions  all had uniform marginal over the second coordinate.
Therefore, the same hard distributions give a lower bound for testing $k$-histograms
of $\Omega(\max(\sqrt{n}/\eps^2,k^{2/3}(n/k)^{1/3}/\eps^{4/3}))= \Omega(\max(\sqrt{n}/\eps^2 ,n^{1/3}k^{1/3}/\eps^{4/3})).$
This completes the proof.

\paragraph{Acknowledgment.} We would like to thank Oded Goldreich for numerous useful comments and 
insightful conversations that helped us improve the presentation of this work. We are grateful to Oded for his excellent
exposition of our reduction-based technique in his recent lecture notes~\cite{Goldreich16-notes}.

\bibliographystyle{alpha}

\nocite{}

\bibliography{allrefs}

\appendix

\section{Omitted Proofs from Section~\ref{sec:lb}}\label{lbAppend}

\subsection{Proof of Lemma~\ref{infBoundLem}}

We note that
$$
I(X:a) = \sum_\ell O\left(\Pr(a=\ell)\left(1-\frac{\Pr(a=\ell|X=0)}{\Pr(a=\ell|X=1)}\right)^2 \right).
$$
A simple computation yields that
$$
\Pr(a=\ell|X=0) = e^{-k/mn}\frac{(k/mn)^\ell}{\ell!},$$
$$\Pr(a=\ell|X=1) = \left(e^{-k/mn}\frac{(k/mn)^\ell}{\ell!}\right) \left(\frac{e^{-k\epsilon/mn}(1+\epsilon)^\ell+e^{k\epsilon/mn}(1-\epsilon)^\ell}{2} \right).
$$
We condition based on the size of $k/mn$. First, we analyze the case that $k/mn \leq 1$.

Expanding the above out as a Taylor series in $\epsilon$ we note that the odd degree terms cancel. Therefore, we can see that if $\ell\leq 2$
\begin{equation}\label{smallwtEqn}
\left(\frac{e^{-k\epsilon/mn}(1+\epsilon)^\ell+e^{k\epsilon/mn}(1-\epsilon)^\ell}{2} \right) = 1+ O\left(\epsilon^2 (k/mn)^{2-\ell} \right),
\end{equation}
and for $2\epsilon^{-1} \geq \ell \geq 2$,
$$
\left(\frac{e^{-k\epsilon/mn}(1+\epsilon)^\ell+e^{k\epsilon/mn}(1-\epsilon)^\ell}{2} \right) = 1+ O\left(\epsilon^2 \ell^2 \right).
$$
Hence, we have that
\begin{align*}
I(X:a) & \leq O(\epsilon^4 (k/mn)^2) + \sum_{\ell=2}^{2\epsilon^{-1}} \Pr(a=\ell)O(\epsilon^4 \ell^4) + \Pr(a>2\epsilon^{-1})\\
& = O(\epsilon^4(k/mn)^2) + O(\epsilon^4\E[a(a-1)+a(a-1)(a-2)(a-3)]) + (\epsilon k/(mn))^{1/\epsilon}\\
& = O(\epsilon^4(k/mn)^2),
\end{align*}
where in the last step we use that $\E[a(a-1)+a(a-1)(a-2)(a-3)]=(k/mn)^2+(k/mn)^4$,
and the last term is analyzed by case analysis based on whether or not $\epsilon > (mn)^{-1/8}$.

For $\lambda = k/mn \geq 1$, we note that the probability that $|a-\lambda| > \sqrt{\lambda}\log(mn)$ is $o(1/(mn))$.
So, it suffices to consider only $\ell$ at least this close to $\lambda$. We note that for $\ell$ in this range,
$$
e^{\pm \lambda \epsilon}(1 \mp \epsilon)^\ell = \exp(\pm\epsilon(\lambda - \ell) + O(\lambda \epsilon^2)) = 1 \pm \epsilon(\lambda -\ell) + O(\lambda \epsilon^2).
$$
This implies that
\begin{equation}\label{poissonerrEqn}
\left(1-\frac{\Pr(a=\ell|X=0)}{\Pr(a=\ell|X=1)}\right)^2 = O(\lambda^2 \epsilon^4),
\end{equation}
which completes the proof.

\subsection{Proof of Lemma \ref{SILem}}
As before,
$$
I(X:A) = \sum_v O\left(\Pr(A=v)\left(1-\frac{\Pr(A=v|X=0)}{\Pr(A=v|X=1)}\right)^2 \right).
$$
We break this sum up into pieces based on whether or not $|v|_1 \geq 2$.

If $|v|_1 < 2$, note that $\Pr(A=v|c_i=1/k,X=0) = \Pr(A=v|c_i=1/k,X=1).$ Therefore,
\begin{align*}
\left(1-\frac{\Pr(A=v|X=0)}{\Pr(A=v|X=1)}\right)^2 \leq \left(1-\frac{\Pr(A=v|X=0,c_i=1/n)}{\Pr(A=v|X=1,c_i=1/n)}\right)^2.
\end{align*}
Hence, the contribution coming from all such $v$ is at most
$$
\Pr(|A|_1=1|c_i=1/n))\max_{|v|_1=1} O\left(1-\frac{\Pr(A=v|X=0,c_i=1/n)}{\Pr(A=v|X=1,c_i=1/n)}\right)^2+O\left(1-\frac{\Pr(A=0|X=0,c_i=1/n)}{\Pr(A=0|X=1,c_i=1/n)}\right)^2.
$$
Note that the $a_{i,j}$ are actually independent of each other conditionally on both $X$ and $c_i$. We have by Equation (\ref{smallwtEqn}) that
$$
\frac{\Pr(A=v|X=0,c_i=1/n)}{\Pr(A=v|X=1,c_i=1/n)} = \exp\left(O(\epsilon^2k^2 n^{-2} m^{-1}+\epsilon^2k n^{-1}m^{-1}) \right),
$$
if $|v|_1=1,$ and
$$
\frac{\Pr(A=0|X=0,c_i=1/n)}{\Pr(A=0|X=1,c_i=1/n)} = \exp\left(O(\epsilon^2k^2 n^{-2} m^{-1}) \right).
$$
We have that $\Pr(|A|_1=1|c_i=1/n)$ is the probability that a Poisson statistic with parameter $k/n(1+O(\epsilon))$ gives $1$, which is $O(k/n)$.
Therefore, the contribution to $I(X:A)$ coming from these terms is
$$
O(\epsilon^4 k^4 n^{-4} m^{-2}+\epsilon^4 k^3 n^{-3} m^{-2}) = o(k n^{-2}) + o(n^{-1}) = o(n^{-1}).
$$
Next, we consider the contribution coming from terms with $|v|_1\geq 2.$
We note that $$\Pr(A=v,c_i=1/k|X=x)$$ is (for either $x=0$ or $x=1$)
$$
\frac{ke^{-1}}{n}(m)^{-|v|_1}\prod_{i=1}^m \frac{1}{v_i!}.
$$
However, $\Pr(A=v,c_i=1/n|X=x)$ is at most
$$
((1+\epsilon)k/nm)^{|v|_1}\prod_{i=1}^m \frac{1}{v_i!}.
$$
This is at most $2k/n$ times $\Pr(A=v,c_i=1/k|X=x)$. Therefore, we have that
$$
\Pr(A=v)\left(1-\frac{\Pr(A=v|X=0)}{\Pr(A=v|X=1)}\right)^2  = O(k/n)\Pr(A=v|c_i=1/n)\left(1-\frac{\Pr(A=v|X=0,c_i=1/n)}{\Pr(A=v|X=1,c_i=1/n)}\right)^2.
$$
Note that
$$
\Pr(A=v|X=0,c_i=1/n) = e^{-k/n}(k/nm)^{|v|_1}\prod_{i=1}^m \frac{1}{v_i!}
$$
and
$$
\Pr(A=v|X=1,c_i=1/n) \geq e^{-k(1+\epsilon)/n}(k/nm)^{|v|_1}\prod_{i=1}^m \frac{1}{v_i!} \gg \Pr(A=v|X=0,c_i=1/n).
$$
Therefore,
$$
\sum_v \Pr(A=v|c_i=1/n)\left(1-\frac{\Pr(A=v|X=0,c_i=1/n)}{\Pr(A=v|X=1,c_i=1/n)}\right)^2 = \Theta(I(A:X|c_i=1/n)).
$$
Finally, we have that
$$
I(X:A|c_i=1/n)\leq \sum_{j=1}^m I(X:a_{i,j}|c_i=1/n) = O(k^2\eps^4/(n^2 m)) \;,
$$
by Equation (\ref{poissonerrEqn}). This means that the contribution from these terms to $I(X:A)$ is at most
$$
O(k^3\eps^4/(n^3 m)) \;,
$$
and the proof is complete.

\end{document}